\newcommand{\myTitle}{A Type System for Julia\xspace}
\newcommand{\myName}{Benjamin Chung\xspace}
\newcommand{\myFaculty}{}
\newcommand{\myUni}{Northeastern University\xspace}
\newcommand{\myTime}{May 2023\xspace}
\newcommand{\myVersion}{version 1.2\xspace}
\newcounter{dummy} 
\providecommand{\mLyX}{L\kern-.1667em\lower.25em\hbox{Y}\kern-.125emX\@}
\newcommand{\Ie}{I.\,e.}
\newcommand{\Eg}{E.\,g.}
\newtheorem{theorem}{Theorem}
\newaliascnt{definition}{theorem}
\newtheorem{definition}[definition]{Definition}
\theoremstyle{definition}
	\newaliascnt{lemma}{theorem}
	\newtheorem{lemma}[lemma]{Lemma}
\renewcommand{\c}[1]{\jlinl{#1}\xspace}
\newcommand\blfootnote[1]{%
  \begingroup
  \renewcommand\thefootnote{}\footnote{#1}%
  \addtocounter{footnote}{-1}%
  \endgroup
}
\newcommand{\eval}{\c{eval}\xspace}
\newcommand{\Eval}{\c{Eval}\xspace}
\begin{document}
\frenchspacing
\raggedbottom
\pagenumbering{roman}
\pagestyle{plain}
\begin{titlepage}
    \begin{addmargin}[-1cm]{-3cm}
    \begin{center}
        \large  

        \hfill

        \vfill

        \begingroup
            \color{Maroon}\spacedallcaps{\myTitle} \\ \bigskip
        \endgroup

        \spacedlowsmallcaps{\myName}

        \vfill



        \myTime\ -- \myVersion

        \vfill                      

    \end{center}  
  \end{addmargin}       
\end{titlepage}   

\cleardoublepage
\pdfbookmark[1]{Abstract}{Abstract}
\begingroup
\let\clearpage\relax
\let\cleardoublepage\relax
\let\cleardoublepage\relax

\chapter*{Abstract}
The Julia programming language was designed to fill the needs of scientific computing by combining the benefits of productivity and performance languages. Julia allows users to write untyped scripts easily without needing to worry about many implementation details, as do other productivity languages. If one just wants to get the work done--regardless of how efficient or general the program might be---such a paradigm is ideal. Simultaneously, Julia also allows library developers to write efficient generic code that can run as fast as implementations in performance languages such as C or Fortran. This combination of user-facing ease and library developer-facing performance has proven quite attractive, and the language has increasing adoption.

With adoption comes combinatorial challenges to correctness. One of Julia's calling cards is multiple dispatch, a mechanism for solving the expression problem. On one hand, multiple dispatch allows many libraries to compose "out of the box:" for example, you can automatically differentiate an integrator merely by importing and using the appropriate packages. On the other hand, it creates bugs where one library expects features that another does not provide. Typing is one solution to this problem---mechanically ensuring that methods are used correctly---but would make Julia harder to use as a scripting language.

I developed a ``best of both worlds'' solution: gradual typing for Julia. My system forms the core of a gradual type system for Julia, laying the foundation for improving the correctness of Julia programs while not getting in the way of script writers. My framework allows methods to be individually typed or untyped, allowing users to write untyped code that interacts with typed library code and vice versa. Typed methods then get a soundness guarantee that is robust in the presence of both dynamically typed code and dynamically generated definitions. I additionally describe protocols, a mechanism for typing abstraction over concrete implementation that accommodates one common pattern in Julia libraries, and describe its implementation into my typed Julia framework.

\endgroup			

\vfill
\cleardoublepage
\pdfbookmark[1]{Acknowledgments}{acknowledgments}

\bigskip

\begingroup
\let\clearpage\relax
\let\cleardoublepage\relax
\let\cleardoublepage\relax
\chapter*{Acknowledgments} 

First, I want to thank my parents. I cannot sufficiently express my thanks for their love and support.

My colleagues Artem Pelenitsyn, Julia Belyakova, Francesco Zappa Nardelli, Celeste Hollenbeck, and Lionel Zoubritzky all played a vital part in the work that went into this thesis; this would not have been possible without you. 

Leif Andersen, Ming-Ho Yee, Aviral Goel, and Andrew Cobb were an essential part of my degree. They helped me through both the most challenging and best parts of my tenure as a student and were always there for me when I needed support.

Jonathan Aldrich and Joshua Sunshine were key to me pursuing programming languages and research generally; they helped the undergraduate me discover a love of research and taught me how to communicate in an academic setting. 

Several people both in and around the Julia team played a major part in this project. The whole collaboration is rooted in a chance encounter with Jean Yang, who I met flying back from SPLASH 2015 in Pittsburgh (after I already moved to Boston!) who was the introduction of my group to the Julia team. The Julia development team were very welcoming; in particular, I would like to thank Jeff Bezanson and Jameson Nash for advice, support, and making measured performance values obsolete extraordinarily quickly.

Last but not least, I want to thank my advisor Jan Vitek for his guidance, advice, support, and willingness to bear with my many, many side projects.

\endgroup

\pagestyle{scrheadings}
\cleardoublepage
\refstepcounter{dummy}
\pdfbookmark[1]{\contentsname}{tableofcontents}
\setcounter{tocdepth}{2} 
\setcounter{secnumdepth}{3} 
\manualmark
\markboth{\spacedlowsmallcaps{\contentsname}}{\spacedlowsmallcaps{\contentsname}}
\tableofcontents 
\automark[section]{chapter}
\renewcommand{\chaptermark}[1]{\markboth{\spacedlowsmallcaps{#1}}{\spacedlowsmallcaps{#1}}}
\renewcommand{\sectionmark}[1]{\markright{\thesection\enspace\spacedlowsmallcaps{#1}}}

\clearpage

\cleardoublepage\pagenumbering{arabic}

\chapter{Introduction}\label{ch:introduction}

Computer programming languages provide abstractions for expressing our
computational intent.  Historically, languages have come in one of two
flavors. First are productivity languages, designed to enable rapid and
interactive changes to the system at the price of performance,  facilitating
the development of software systems.  Second, performance languages are
designed to allow users to write efficient code at the cost of higher
development effort. Increasing performance often requires taking more control
over  how exactly a computation is performed which might otherwise be
abstracted away. Of course, there are many dimensions to modern computer
languages and the distinction between the two flavors can be blurred.

This thesis focuses on a feature that often distinguishes the two language
flavors: static type checking. A statically type-checkable language can be
augmented with a tool, a type checker, that catches potential errors before
the program is run. Furthermore, a program that has been checked can often be
more efficient as the compiler is able to rely on type information when making
optimization decisions. Productivity languages frequently lack a type checker,
while performance languages are much more wont to possess one. Why?

\cbstart
Most producitivity languages are untyped for two reasons. First, these
languages often include features that are inherently difficult to check before
execution of the program. Second, enforcing a static type system renders
programs more rigid, inhibiting the very exploratory programming that
productivity language users want.
\cbend

I extend a productivity language with a static type system, but do so in a
non-invasive way. In my system, statically typed code should coexist with
untyped code allowing programmers to pick whether to write code that will be 
checked or to ignore the type system when it gets in the way. The idea is that
developers of large systems, like libraries, can use static checking to
provide some measure of confidence while users who just want to get their
script working might not. Choice then allows the benefits of both productivity
and performance languages within the same framework.

A number of criteria should hold in order to reap the benefits of static
checking. First, there should be a clear and well-understood guarantee about
which errors may not occur in the checked parts of the code. Additionally, no
(or very few) performance regressions should occur due to typing. Finally, it
should be possible to opt-in or opt-out of typing so as to provide the
requisite agency to the developer.

\paragraph{Julia.} I have chosen the Julia language~\cite{bezansonthesis} as my concrete target
language. Julia is a relatively young system with a combination of
productivity and performance features. It was developed at MIT in the last
decade by a small academic team,  and has since become increasingly popular
within scientific computing.  Today, Julia is used in applications ranging
from climactic modeling to numerical optimization, with a steadily growing
user base.  There are flies in the adoption and scale ointment, however. 

Julia libraries have grown to be of considerable size, exceeding hundreds of
thousands of lines in some cases. The developers and users of these libraries
are frustrated by simple type bugs, unreliability, and confusing error
messages. One selling point of Julia is that it allows easy composition of large
libraries; however, when one composes libraries, one also composes their bugs. 
Composing libraries causes the potential for error to expand combinatorially;
each inter-library interaction begets the opportunity for some new and exciting
edge case to spring up. Julia, presently, has no formalized mechanism for checking
or even describing these interactions. Consequently, it has gained some notoriety
for being buggy due to one of its signature features\footnote{\url{https://yuri.is/not-julia/}, \url{https://github.com/JuliaStats/Distances.jl/issues/201}}.

Types are the standard answer for building abstraction over implementation.
Abstract types, in particular, allow specification of some generic behavior
that should hold for any potential instantiation of the type. Typing most
untyped languages is difficult, though, for a notion of type must be added
on top of the existing language concept. Julia is different.

In spite of being an untyped productivity language, Julia programs are full of
types. Almost all methods in the most popular Julia libraries have at least on
type annotated argument and the overwhelming majority are fully
typed~\cite{oopsla18a}. Moreover, most struct fields carry a type annotation.
If Julia is untyped, why should programmers write all of these types?

Julia programmers write types because Julia's runtime uses them. Julia's
implementation relies on types for two key applications: First, type annotated
field writes are checked. These checks are performed each time the field is
updated during program execution. Stored values can therefore be relied upon
to be type-correct. In turn, this correctness guarantee lets compiled code
safely read from typed fields. Second, types are used for dispatch, wherein
Julia decides what implementation should be invoked from a given call site.
Julia uses a mechanism called multiple dispatch to decide what method to
invoke. Each Julia function can have multiple implementations, or methods,
with each method having differently type annotated arguments. Julia will
dynamically dispatch function invocations to the method with the most
applicable type annotations. Julia therefore dynamically guarantees that
arguments will be of their declared types.

These guarantees mean little, however, when they are not exposed to the
programmer. As the scale of the Julia ecosystem increases, so does the
likelihood that some argument, somewhere, is misused. It does not matter if
the argument is a member of the statically-declared type annotation if the
method body calls a function on it that does not exist. Thus, there may be
cases where no applicable method exists for some call site. Only when code is
exercised are these errors discovered, but not all code is exercised with all
possible types during testing. Moreover, Julia's design allows users to
introduce new types and thread them through existing code, further creating
edge cases whereby invocations can go wrong.  As a result, bugs in Julia code
can hide from developers, hidden under a thin layer of passing tests and
faulty unstated assumptions, only to jump out when a user is least expecting
it. A static type system can help identify these issues ahead of time,
improving reliability of Julia code and easing development.

\paragraph{Composition.}

One of Julia's selling points is the composition of libraries using
multiple dispatch.  Suppose we wanted to differentiate the function
\c{f(x)=x*sin(x)+x*cos(x)} about \c{5}.  We could differentiate the
function's implementation on paper, but if \c{f} was even slightly
complex this becomes impractical.  We could also use finite
differencing, but this is imprecise.  In Julia, however, computing the
derivative is as simple as:
\begin{jllisting}
 > using ForwardDiff
 > ForwardDiff.derivative(f, 5)
 5.537670211431911
\end{jllisting}
The library works by replacing the argument of \c{f} with a dual
number.  The dual number extends every operation on numbers to operate
on the stored derivative value.

Consider the operation \c{x+y}, an invocation of the function \c{+} on
the variables \c{x} and \c{y}.  To evaluate this simple addition,
Julia must select which method to invoke out of over 190 options:
\begin{jllisting}
# 190 methods for generic function "+":
[1] +(x::T, y::T) where T<:Union{Int128, Int16, Int32, Int64, Int8, UInt128, UInt16, UInt32, UInt64, UInt8} in Base
[2] +(c::Union{UInt16, UInt32, UInt8}, x::BigInt) in Base.GM
[3] +(c::Union{Int16, Int32, Int8}, x::BigInt) in Base.GMP
[4] +(c::Union{UInt16, UInt32, UInt8}, x::BigFloat) in Base.MPFR
[5] +(c::Union{Int16, Int32, Int8}, x::BigFloat) in Base.MPFR
...
\end{jllisting}
The selection is based on two criteria. First, the dispatch algorithm
selects methods that are \emph{applicable}, in other word, methods
whose argument type annotations include the types of the actual values
provided.  Second, the dispatch algorithm must, out of all applicable
methods, find whichever one is the most \emph{specific}, the one that
most accurately describes the provided arguments.  All arguments
participate in dispatch; no preference is given to any single
argument.

As an example, consider the case where both \c{x} and \c{y} are \c{Int64}. 
The following methods are all \emph{applicable}:
\begin{jllisting}
[1] +(x::T, y::T) where T<:Union{Int128, Int16, Int32, Int64, Int8, UInt128, UInt16, UInt32, UInt64, UInt8} in Base
[2] +(a::Integer, b::Integer) in Base
[3] +(y::Integer, x::Rational) in Base
[4] +(x::T, y::T) where T<:Number in Base
[5] +(x::Number, y::Number) in Base
\end{jllisting}
The algorithm chooses the first of these implementations as when the
variable \c{T} is bound to \c{Int64}, the first method describes
exactly the arguments at the call site.  All other implementations
handle some additional value types, and are thus less specific.

As the number of libraries in the Julia ecosystem increases, errors become
more frequent.  Julia relies on undocumented interfaces such as \jlinl{+}. 
Programmers expect to be able to add two number-like-things together, but
there is no specification for either what a number-like-thing is, or what
\jlinl{+} should actually do.

To illustrate this, let us consider a case where the intuition for what a
number-like-thing is breaks down. Suppose that instead of using forward
differentiation to calculate the derivative of \c{f}, we wanted to do it
symbolically using computer algebra. We can do this in Julia  by using the
\jlinl{Symbolics} library as follows:
\begin{jllisting}
 > using Symbolics: variable, solve_for
 > x = variable(:x)
 > eqn = f(x)
 x*cos(x) + x*sin(x)
 > Symbolics.derivative(eqn, x)
 x*cos(x) + cos(x) + sin(x) - x*sin(x)
\end{jllisting}
Now, instead of passing a dual number to \jlinl{f} for automatic
differentiation, we pass the symbolic variable \jlinl{x}.  This
symbolic variable accumulates the operations performed on it,
constructing an equation describing the computation. We can then
symbolically differentiate this equation to determine the derivative
of the function as a whole. This can readily go wrong on a
function like \c{g}:
\begin{jllisting}
function g(x) 
   result = 0
   for i=1:x 
     result += i
   end
   return result
end
\end{jllisting}
Symbolically differentiating \c{g} with respect to \c{x} yields the error:
\vspace{-1em}
\begin{jllisting}
ERROR: TypeError: non-boolean (Num) used in boolean context
Stacktrace:
 [1] unitrange_last(start::Num, stop::Num)
   @ Base .\range.jl:294
 [2] UnitRange{Num}(start::Num, stop::Num)
   @ Base .\range.jl:287
 [3] Colon
   @ .\range.jl:5 [inlined]
 [4] Colon
   @ .\range.jl:3 [inlined]
 [5] f(x::Num)
   @ Main .\REPL[19]:3
\end{jllisting}
This is an example of where the lack of agreement on what it means to
be a number-like-thing bites us.  Here, should all number-like-things
be usable as iteration bounds?  The function \c{g} implicitly assumes
that yes, numbers should be usable as an iteration bound. Integers
satisfy this unstated assumption but abstract variables such as those
introduced by Symbolics cannot.  Moreover, there is no way to specify
that \c{g} takes arguments that can serve as loop bounds.  This
fundamental lack of agreement on what it should be possible to do to a
number can create unexpected errors from deep within programs that are
difficult to understand; why, in this example, is the error referring
to using a non-boolean as a boolean even though the real issue is that
we are trying to bound iteration with an algebraic variable? Deep
inspection of the library's source code can answer this question, but
most programmers would prefer not to have to do this to identify
simple errors of this nature.

Julia possesses no means of specifying or enforcing the existence or usage of
abstractions. Programmers introduce generic notions (such as ``this should act
like \jlinl{+}'') but then have no way to ensure that every implementation of
the notion is complete nor that usages are correct. As a result, this problem
of composition running wild only grows with the size of the Julia ecosystem.

A large number of Julia packages have been written for describing
implementations of various abstractions. None have gotten substantial
adoption. I argue that this failure is in part because none of them actually
solve the problem: they allow users to easily describe abstractions, but do
not guarantee that implementations are correct to any standard nor that usages
are safe. As a result, the benefit of using such a system is very limited.

The problem underlying these efforts to canonize some abstraction is that
checking correctness of Julia code against any given standard is hard. Julia
itself provides no mechanisms for source-level static analysis; the only tools
on offer are for an intermediate language with little direct correspondence to
source code. The only available source code analyzers are so deeply integrated
into one particular code editor (as part of the Julia Visual Studio Code
extension) and limited that they are not practical for this application.

I aim to address this gap by providing a framework for sound static type
checking in Julia. I built a system that can type check Julia code that
fits within the existing Julia type paradigms and can be extended to support
various sorts of abstractions. Moreover, I propose one particular approach to
abstraction (which I refer to as protocols, more on this later). However,
static typing is not a great fit for all use cases of Julia.

\paragraph{Gradual Typing.}
A type system answers the question of ``how can we ensure that
independently-developed code can interoperate?'' If both sides type checked
their code and the type system is sound then composition should then 
be straightforward. The problem is that, in practice, not all code is typed.

As I mentioned earlier, Julia straddles the line between performance and
productivity language. Some Julia developers such as scientists and analysts
want to use it as a productivity language, geting results fast without
worrying too much about if their code is correct. Other developers like
library programmers want to have every assurance possible that their code will
work in the ``real world.'' Where the analyst might be annoyed by type errors
the library developer might love the red squiggly line. Julia accommodates both
use cases---and a type system for Julia should as well.

A type system that can meet the needs of both the productivity and performance
developer is one that lets both typed and untyped code coexist. This mixing of
 soundly typed and untyped code is the popular concept of gradual typing.
Gradual type systems allow soundly statically typed code to run in the same
system as untyped code. 

One concern is that a key Julia feature is performance. Gradually typed Julia,
then, needs to be just as fast as base Julia if it is going to be useful
particularly for library developers---and past gradual type
systems~\cite{takikawa2016sound} have had severe performance impacts. My
type system for Julia needs to have minimal to no runtime impact.

As shown earlier, Julia code has a lot of semantically meaningful type
annotations. Normally, these annotations are used for dispatch, not for static
checking. The existence and semantics of type annotations then means that
programmers have an intuition about what it means to inhabit a type in Julia, which
is a critical issue for gradual typing. However, if I use Julia's existing type
system, then I need to be able to reason about subtyping in Julia.

\paragraph{Subtyping.}
Julia's type language inlcudes nominal single subtyping as well as union,
parametric invariant existential, and covariant tuple types. Subtyping of
unions and tuples is distributive, including over parametric types, which
Julia augments with the so-called diagonal rule.

All of these features cause Julia's type system to be theoretically
challenging. I was able to prove that subtyping in Julia is undecidable by
reduction from System $F_{<:}$~\cite{Pierce92}, showing that type checking in
Julia is potentially nonterminating.

Undecidability of subtyping is not fatal for type checking. However, it illustrates
that reasoning about and deriving useful properties from subtyping is difficult.
The type system cannot rely on any specific definition of subtyping, instead using
various approximations that aim to be sound and useful rather than complete. The type
checker needs to be parametric over operations on types.

\paragraph{Dynamic Code Generation.}
Dynamically generated code is another issue for a type system. Other untyped
languages make extensive use of features like \eval. Prior gradual
type systems for such languages ignore this dynamism as it is very difficult
to reason about code that does not yet exist. Moreover, dynamically checking
the newly-generated code is frequently impractical.

Julia, like other dynamic lanugages, supports \eval. Ideally, a type
system for Julia should then be able to support dynamically typed code. The
problem is that multiple dispatch allows functions to be extended with new
methods anywhere and anywhen---including from \eval. As a result,
\emph{no} call site is safe; any function call could have an untyped method
slipped underneath it at any time. While uncommon in practice, a sound type
system for Julia should be able to retain soundness even in the presence of
\eval.

\section{Thesis}

I posit that a static type system can be designed for Julia such that
\begin{itemize}
  \item statically typed code can interoperate with untyped methods,
  \item static type annotations do not introduce new dynamic checks,
  \item dynamically generated code does not break the whole system.
\end{itemize}
My static type system will guarantee that statically typed methods do not
go wrong while leaving the semantics of untyped code unchanged.

I additionally demonstrate one kind of abstraction for Julia programs:
protocols. A protocol defines the interface of a function that must be
implemented for all of the subtypes of the declared argument. Protocols
demonstrate how one common pattern in Julia can be typed and how future work
might use the type system to provide checked abstraction.

This work builds on a number of papers that I have co-written; papers in
\textbf{bold} are directly used or extended in this work, while those in
\textit{italics} serve as background.
\begin{itemize}
  \item \textit{Type Stability in Julia: Avoiding Performance Pathologies in JIT Compilation} \\ OOPSLA 2021
  \item \textbf{World Age in Julia: Optimizing Method Dispatch in the Presence of Eval} \\ OOPSLA 2020
  \item \textit{Julia's Efficient Algorithm for Subtyping Unions and Covariant Tuples} \\ ECOOP 2019
  \item \textbf{Julia Subtyping: a Rational Reconstruction} \\ OOPSLA 2018
  \item \textbf{Julia: Dynamism and Performance Reconciled by Design} \\ OOPSLA 2018
  \item \textit{KafKa: Gradual Typing for Objects} \\ ECOOP 2018
\end{itemize}

\renewcommand{\t}{\ensuremath{\tau}\xspace}
\newcommand{\g}{\ensuremath{\sigma}\xspace}
\newcommand{\jsub}[2]{\ensuremath{#1 <: #2}}
\chapter{The Julia Lanugage}\label{ch:julialang} 

\renewcommand{\c}[1]{\jlinl{#1}\xspace}
\renewcommand{\Ie}{\emph{i.e.}\xspace}
\renewcommand{\Eg}{\emph{e.g.}\xspace}
\newcommand{\Etal}{\emph{et al.}\xspace}
\newcommand{\Cf}{\emph{cf.}\xspace}
\newcommand{\m}{\jlinl}
\newcommand{\code}[1]{\c{#1}\xspace}

Before I can examine too deeply how I should type Julia I first need to
consider what Julia \emph{is}, exactly. Superficially, of course, Julia is a
programming language, but what was it designed to do, what  are its
distinguishing features, and how does the community take advantage of them? 
The answers to each of these questions are critical to answering how a type
system should work.

\blfootnote{This work was previously published in OOPSLA 2018~\cite{oopsla18a} as Julia: Dynamism
and performance reconciled by design. This section is an updated version of that presentation.}

The purpose of Julia is straightforward. Julia aims to bridge the gap between
productivity and performance languages. Previously, a scientist might have
written the interface to their library in Python and the backend in C++. In
contrast, Julia aims to allow programmers to write both the easy-to-use
interface and the high-performance implementation in a single homogeneous
Julia codebase.

\begin{figure}
\begin{lstlisting}
mutable struct Node
    val
    nxt
end

function insert(list, elem)
    if list isa Void
        return Node(elem, nothing)
    elseif list.val > elem
        return Node(elem, list)
    end
    list.nxt = insert(list.nxt, elem)
    list
end
\end{lstlisting}
\caption{Linked list}\label{lst}
\end{figure}

The fact that Julia delivers on its promise of having both performance and
ease of use is surprising. Dynamic languages like Python or R typically suffer
from at least an order of magnitude slowdown over C and often more. 
Fig.~\ref{lst} illustrates that Julia is indeed a dynamic language. Just as in
in Python, one can declare a \c{Node} datatype containing two untyped fields,
\c{val} and \c{nxt}, and an untyped \c{insert} function that takes a sorted
list and performs an ordered insertion. While this code will be optimized by
the Julia compiler, it is not going to run fast without some additional
programmer intervention.

The key to performance in Julia lies in the synergy between language design,
implementation techniques and programming style. Julia's design was carefully
tailored so that a very small team of language implementers could create an
efficient compiler. The key to this relative ease is to leverage the 
combination of language features and programming idioms to reduce overhead,
but what language properties enable easy compilation to fast code?

\emph{Language design:} Julia includes a number of features that are common
to many productivity languages, namely dynamic types, optional type
annotations, reflection, dynamic code loading, and garbage collection.  A
slightly less common feature is symmetric multiple dispatch~\cite{Bobrow86}.
In Julia a function can have multiple implementations, called methods,
distinguished by the type annotations added to parameters of the function.
At run-time, a function call is dispatched to the most specific method
applicable to the types of the arguments. Type annotations can be attached
to datatype declarations as well, in which case they are checked whenever
typed fields are assigned to. The language design does impose limits on some
of those features, for instance the \code{eval} function does not run in
local scope, but instead is evaluated at the top-level. Another significant
choice for optimizations is the difference between concrete and abstract
types: the former can have fields and can be instantiated while the latter
can be extended by subtypes.

\emph{Language implementation:} Performance in Julia does not arise from great
feats of compiler engineering: Julia's implementation is simpler than that of
many dynamic languages.  The Julia compiler has three main optimizations that
are performed on a high-level intermediate representation; native code
generation is then delegated to the LLVM compiler infrastructure. The
optimizations performed in Julia are (1) \emph{method inlining} which
devirtualizes multi-dispatched calls and inline the call target; (2)
\emph{object unboxing} to avoid heap allocation; and (3) \emph{method
specialization} where code is special-cased to its actual argument types. The
compiler does not support the kind of speculative compilation and
deoptimizations common in dynamic language implementations, but supports
dynamic code loading from the interpreter and with \c{eval()}.

The synergy between language design and implementation is in evidence in the
interaction between the three optimizations. Each call to a function that
has, as arguments, a combination of concrete types not observed before
triggers specialization. A data-flow analysis algorithm uses the type of the
arguments (and if these are user-defined types, the declared type of their
fields) to approximate the types of all variables in the specialized
function.  This enables both unboxing and inlining. The specialized method
is added to the function's dispatch table so that future calls with the same
combination of argument types can reuse the generated code.

\emph{Programming style:} To assist the implementation, Julia programmers
need to write idiomatic code that can be compiled effectively. Programmers
are keenly aware of the optimizations that the compiler performs and shape
their code accordingly. For instance, adding type annotations to fields of
datatypes is viewed as good practice as it provides information to the
compiler to estimate the size of instances and may allow unboxing. Another
good practice is to write methods that are \emph{type stable}. A method is
type stable if, when it is specialized to a set of concrete types, data-flow
analysis can assign concrete types to all variables in the function. This
property should hold for all specializations of the same method. Type
instability can stem from methods that can return values of different types,
from assignment of different types to the same variable depending on
branches of the function, or from functions that cannot devirtualized and
analyzed.

Julia's design for (easy implementation of) performance is critical to
understanding how the language itself and code written in Julia might be
typed. I will provide an overview of this design and how it facilitates
performance by synergizing the design with the compilation pipeline. This
close coupling provides good performance (between 0.9x and 6.1x of optimized C
code on a small suite of 10 benchmarks) and makes many of the choices in type
checking obvious. Towards this last point, I will consider a corpus of 50 popular
Julia projects on GitHub to examine how Julia's features and their underlying
design choices are exercised by the broader community and discuss how they lend
themselves towards type-ability.

\section{Related work}

\paragraph{Scientific computing languages.} R~\cite{R} and
MATLAB~\cite{MATLAB} are the two languages superficially closest to Julia.
Both languages are dynamically typed, garbage collected, vectorized and
offer an integrated development environment focused on a read-eval-print
loop.  However, the languages' attitudes towards vectorization differ. In R
and MATLAB, vectorized functions are more efficient than iterative code
whereas the contrary stands for Julia. In this context I use
``vectorization'' to refer to code that operates on entire
vectors\footnote{This discussion should not be confused with hardware-level
  vectorization, e.g. SIMD operations, which are available to Julia at the
  LLVM level.}, so for instance in R, all operations are implicitly
vectorized.  The reason vectorized operations are faster in R and MATLAB is
that the implicit loop they denote is written in a C library, while
source-level loops are interpreted and slow. In comparison, Julia can
compile loops very efficiently, as long as type information is present.

While there has been much research in compilation of
R~\cite{vee14,Graal,riposte} and MATLAB~\cite{mat1,mat2}, both languages
are far from matching the performance of Julia.  The main difference, in
terms of performance, between MATLAB or R, and Julia comes from language
design decisions. MATLAB and R are more dynamic than Julia, allowing, for
example, reflective operations to inspect and modify the current scope and
arbitrary redefinition of functions. Other issues include the lack of type
annotations on data declarations, crucial for unboxing in Julia.

Other languages have targeted the scientific computing space, most notably
IBM's X10~\cite{X10} and Oracle's Fortress~\cite{fortress}. The two languages
are both statically typed, but differ in their details. X10 focuses on
programming for multicore machines that have partitioned global addressed
spaces; its type system is designed to track the locations of values.
Fortress, on the other hand, had multiple dispatch like Julia, but never
reached a stage where its performance could be evaluated due to the complexity
of its type system. In comparison, Julia's multi-threading is still in its
infancy, and it does not have any support for partitioned address spaces.

\paragraph{Multiple dispatch.} Multiple dispatch goes back to~\cite{Bobrow86}
and is used in languages such as CLOS~\cite{Gabriel87}, Perl~\cite{Randal03}
and R~\cite{Chambers14}. Lifting explicit programmatic type tests into
dispatch requires an expressive annotation sublanguage to capture the same
logic; expressiveness that has created substantial research challenges.
Researchers have struggled with how to provide expressiveness while ensuring
type soundness. Languages such as Cecil~\cite{lit98} and
Fortress~\cite{Allen11} are notable for their rich type systems; but, as
mentioned in Guy Steele's retrospective talk, finding an efficient,
expressive and sound type system remains an open
challenge.\footnote{JuliaCon 2016,
  \url{https://www.youtube.com/watch?v=EZD3Scuv02g}.} The language design
trade-off seems to be that programmers want to express relations between
arguments that require complex types, but when types are rich enough, static
type checking becomes difficult. The Fortress designers were not able to
prove soundness, and the project ended before they could get external
validation of their design.
Julia side-steps many of the problems encountered in previous work on typed
programming languages with multiple dispatch. It makes no attempt to
statically ensure invocation soundness or prevent ambiguities, falling back to
dynamic errors in these cases.

\paragraph{Static type inference.} At heart, despite the allure of types and
the optimizations they allow, type inference for untyped programs is
difficult. Flow typing tries to propagate types through the program at
large, but sacrifices soundness in the process. Soft
typing~\cite{fagan1991soft} applies Hindley-Milner type inference to untyped
programs, enabling optimizations. This approach has been applied practically
in Chez Scheme~\cite{wright1994practical}. However, Hindley-Milner type
inference is too slow to use on practically large code bases. Moreover, many
language features (such as subtyping) are incompatible with it. Constraint
propagation or dataflow type inference systems are a commonly used
alternative to Hindley-Milner inference. These systems work by propagating
types in a data flow analysis~\cite{aiken1993type}. No unification is
needed, and it is therefore much faster and more flexible than soft
typing. Several inference systems based on data flow have been proposed for
JavaScript~\cite{chaudhuri2017fast}, Scheme~\cite{Shiv90}, and others.

\paragraph{Dynamic type inference for JIT optimizations.} Feeding dynamic type
information into a type propagation type inference system is not a technique
new to Julia. The first system to use dataflow type inference inside a JIT
compiler was RATA~\cite{RATA}. RATA relies on abstract interpretation of
dynamically-discovered intervals, kinds, and variations to infer extremely
precise types for JavaScript code; types which enable JIT optimizations. The
same approach was then used by Hackett~\cite{hackett2012fast}, which used a
simplified type propagation system to infer types for more general JavaScript
code, providing performance improvements. In comparison to dynamic type
inference systems for JavaScript, Julia's richer type annotations and multiple
dispatch allow it to infer more precise types.
Another related project is the StaDyn~\cite{GarciaDyn} language. StaDyn was
designed specifically with hybrid static and dynamic type inference in mind.
However, StaDyn does not have many of Julia's features that enable precise
type inference, including typed fields and multiple dispatch.

\paragraph{Dynamic language implementation.} Modern dynamic language
implementation techniques can be traced back to the work on
the Self language, that pioneered the ideas of run-time specialization and
deoptimization~\cite{Holzle94}. These ideas were then transferred into the Java HotSpot
compiler~\cite{hotspot}; in HotSpot, static type information can be used to
determine out object layout, and deoptimization is used when inlining
decisions were invalidated by newly loaded code. Implementations of JavaScript
have increased the degree of specialization, for instance allowing unboxed
primitive arrays at the more complex guards and potentially wide-ranging
deoptimization~\cite{Graal}.

\section{Julia in action}\label{act}

To introduce Julia, let's consider an example function. This code started as an
attempt to replicate the R language's multi-dimensional summary
function. For explanatory reason, I shortened the code somewhat, the
shortened version simply computes the sum of a vector. Just like the R
function that inspired it, the Julia code is polymorphic over vectors of
integer, float, boolean, and complex values. Furthermore, since R supports
missing values in every data type, I encode \c{NA}s in
Julia.\footnote{Since Julia v1.0 support for missing values is native, this
  example shows how it could be encoded.}

\begin{figure}[H]
\begin{minipage}{.394\columnwidth}
\begin{lstlisting}
function vsum(x)
    sum = zero(x)
    for i = 1:length(x)
        @inbounds v = x[i]
        if !is_na(v)
            sum += v
        end
    end
    sum
end
\end{lstlisting}
\caption{Compute vector sum}\label{sum}
\vspace{1em}

{\scriptsize
\begin{verbatim}
    push    %rbp
    mov     %rsp, %rbp
    mov    (%rdi), %rcx
    mov     8(%rdi), %rdx
    xor     %eax, %eax
    test    %rdx, %rdx
    cmove   %rax, %rdx
    movl    $1, %esi
    movabs  $0x8000000000000000, %r8
    jmp     L54
    nopw    %cs:(%rax,%rax)
L48:add     %rdi, %rax
    inc     %rsi
L54:dec     %rsi
    nopl    (%rax)
L64:cmp     %rsi, %rdx
    je      L83
    mov     (%rcx,%rsi,8), %rdi
    inc     %rsi
    cmp     %r8, %rdi
    je      L64
    jmp     L48
L83:pop     %rbp
    ret 
    nopw    %cs:(%rax,%rax)
\end{verbatim}}
\caption{\protect\jlinl{@code_native vsum([1])} (X86-64)}\label{native}

\end{minipage}
\hfill
\begin{minipage}{.6\columnwidth}
\begin{adjustwidth}{1em}{0em}
\begin{lstlisting}[linewidth=0.98\columnwidth]
zero(::Array{T}) where {T<:AbstractFloat} = 0.0
zero(::Array{T}) where {T<:Complex} = complex(0.0,0.0)
zero(x) = 0

is_na(x::T) where T = x == typemin(T)

typemin(::Type{Complex{T}}) where {T<:Real} 
      = Complex{T}(-NaN)
\end{lstlisting}

\caption{\protect\c{zero} yields the zero matching the element type, by default the
  integer \protect\c{0}.  \protect\c{is_na} checks for missing values encoded as the
  smallest element of a type (returned by the builtin function \protect\c{typemin}).
  \protect\c{typemin} is extended with a method to return the smallest complex
  value}\label{zero}

\vspace{2.3em}

\begin{lstlisting}[linewidth=0.98\columnwidth]
primitive type RBool 8 end 

RBool(x::UInt8) = reinterpret(RBool, x)
convert(::Type{T},x::RBool) where{T<:Real} 
    = T(reinterpret(UInt8,x))

const T  = RBool(0x1)
const F  = RBool(0x0)
const NA = RBool(0xff)

typemin(::Type{RBool}) = NA

+(x::Union{Int,RBool}, y::RBool) = Int(x) + Int(y)
\end{lstlisting}
\cbstart
\caption{\protect\c{RBool} is a an 8-bit primitive type representing boolean values
  extended with a missing value \protect\c{NA}. The constructor takes an 8-bit unsigned
  integer. Conversion casts any number into an \protect\c{RBool} by \protect\c{reinterpret}ing the
  in-memory representation as a \protect\c{RBool}. A new
  method is added to \protect\c{typemin} to return \protect\c{NA}}
\cbend 
\label{rbool}
\end{adjustwidth}
\end{minipage}
\end{figure}

Fig.~\ref{sum} shows how to sum values of a vector \c{x} of any type. As the
Julia syntax is straightforward, little explanation is required to
understand the programmer's intent. In this case, type annotations are not
needed for the compiler to optimize the code, so I omit them. Variables are
lexically scoped; an initial assignment defines them. Fig.~\ref{native} is
the output of \c{@code_native(vsum([1]))} (a call to the function with a
vector of integers). It shows the x86 machine code generated for
the specialized method. It is noteworthy that the generated machine code does not
contain object allocation or method invocation, nor does it invoke any
language runtime components. The machine code is similar to code one would
expect to be emitted by a C compiler.

Type stability is key to performant Julia code, allowing the compiler to
optimize using types. An expression is type stable if, in a given type
context, it always returns a value of the same type. Function \c{vsum(x)}
always returns a value that is either of the same type as the element type
of \c{x} (for floating point and complex vectors) or \c{Int64}. For the call
\c{vsum([1])}, the method returns an \c{Int64}, as its argument is of type
\c{Array\{Int64,1\}}. When presented with such a call, the Julia compiler
specializes the method for that type. Specialization provides enough
information to determine that all values manipulated by the computation are
of the same type, \c{Int64}. Thus, no boxing is required; moreover, all
calls are devirtualized and inlined. The \c{@inbounds} macro elides array
bounds checking.

Type stability may require cooperation from the developer. Consider variable
\c{sum}: its type has to match the element type of \c{x}. In our case,
\c{sum} must be appropriately initialized to support any of the possible
argument types integer, float, complex or boolean. To ensure type stability,
the programmer leverages dispatch and specialization with the definition of
the function \c{zero} shown in Fig.~\ref{zero}. It dispatches on the type of
its argument. If the argument is an array containing subtypes of float, the
function returns float \c{0.0}. Similarly, if passed an array containing
complex numbers, the function returns a complex zero. In all other cases, it
returns integer \c{0}. All three methods are trivially type stable, as they
always return the same value for the same types.

Missing values also require attention. Each primitive type needs its own
representation---yet the code for checking whether a value is missing must
remain type stable. This can be achieved by leveraging dispatch. I add a
function \c{is_na(x)} that returns true if \c{x} is missing. I select the
smallest value in each type to use as its missing value (obtained by calling
the builtin function \c{typemin}).

The solution outlined so far fails for booleans, as their minimum is
\c{false}, which I can't steal. Fig.~\ref{rbool} shows how to add a new
boolean data type, \c{RBool}. Like Julia's boolean, \c{RBool} is represented
as an 8-bit value; but like R's boolean, it has three values, true, false
and missing. Defining a new data type entails providing a constructor and a
conversion function. Since our data type has only three useful values, we
enumerate them as constants. I then add a method to \c{typemin} to return
\c{NA}.  Finally, since the loop adds booleans to integers, I need to
extend addition to integer and \c{RBool}, this is done by interpreting
true as 1 and false as zero.

\section{Evaluating relative performance}\label{rel}

\begin{minipage}{\textwidth}
\begin{wrapfigure}{r}{.5\textwidth}
\includegraphics[width=.48\textwidth]{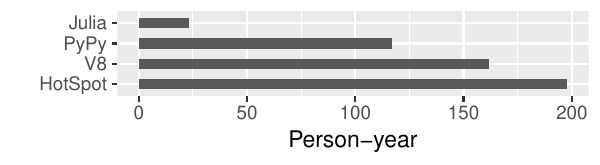}
\caption{Time spent on implementations}\label{my}
\end{wrapfigure}

Julia has to be fast to compete against other languages used for scientific
computing, but it also has to be easy to develop and maintain.  Programming
languages are notoriously expensive propositions in terms of the level of
expertise required during development and the effort required to achieve
production-quality outcomes. Fig.~\ref{my} shows a very rough estimate of the
the person-years invested in several language implementations. These blunt
approximations were obtained using commit histories: two commits made by the
same developer in one week were counted as one person-week of effort. While
approximate, this figure suggests that performance comes at a substantial cost
in engineering. For example, V8 for JavaScript and HotSpot for Java, two
high-performance implementations, have nearly two person-centuries invested into
their respective implementations. Even PyPy, an academic project, has over
one century of work by our metric. Given the difference in implementation
effort, the fact that Julia's performance is competitive is surprising.
\end{minipage}

To estimate the languages' relative performance, I selected 10 small
programs for which implementations in C, JavaScript, and Python are
available in the programming language benchmark game (PLBG)
suite~\cite{PLBG}. The suite consists of small but non-trivial benchmarks
which stress either computational or memory performance. I started with
PLBG programs written by the Julia developers and fixed some performance
anomalies. The benchmarks are written in an idiomatic style, using the same
algorithms as the C benchmarks. Their code is largely untyped, with type
annotations only appearing on structure fields. Over the 10 benchmark
programs, 12 type annotations appear, all on structs and only in the nbody,
binary\_trees, and knucleotide. The \c{@inbounds} macro eliding bounds
checking is the only low-level optimization used, leveraged only in revcomp.
Using the PLBG methodology, I measured the size of the programs by removing
comments and duplicate whitespace characters, then performing the minimal
GZip compression. The combined size of all the benchmarks is $6$ KB for
Julia, $7.4$ KB for JavaScript, $8.3$ KB for Python and $14.2$ KB for C.

\begin{figure}[H]
\includegraphics[width=\columnwidth]{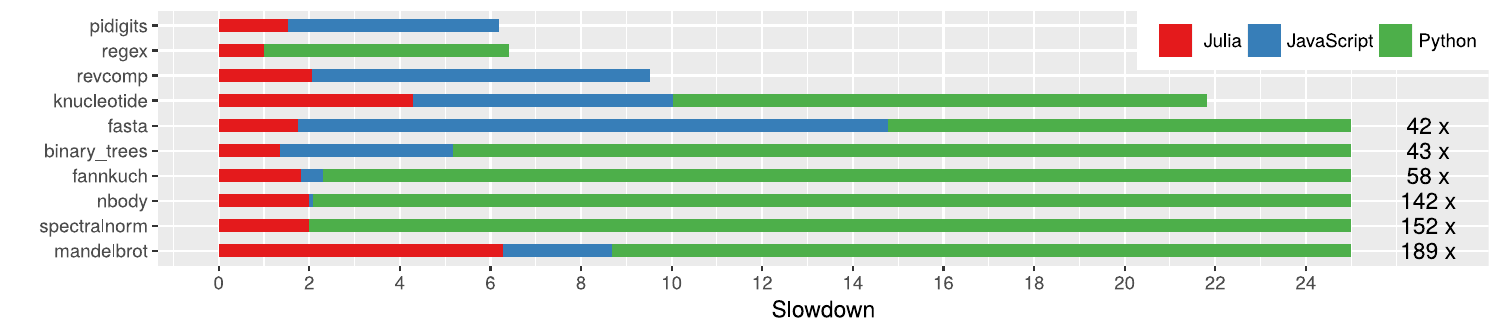}
\caption{Slowdown of Julia, JavaScript, and Python relative to C}\label{perf}
\end{figure}

Fig.~\ref{perf} compares the performance of the four languages with the
results normalized to the running time of the C programs. Measurements were
obtained using Julia v0.6.2, CPython 3.5.3, V8/Node.js v8.11.1, and GCC 6.3.0 -O2
for C, running on Debian 9.4 on a Intel i7-950 at 3.07GHz with 10GB of RAM.
All benchmarks ran single threaded. No other optimization flags were used.

The results show Julia consistently outperforming Python and JavaScript
(with the exception of spectralnorm).  Julia is mostly within 2x of
C. Slowdowns are likely due to memory operations. Like other high level
dynamically-typed programming languages, Julia relies on a garbage collector
to manage memory. It prohibits the kind of explicit memory management tricks
that C allows. In particular, it allocates structs on the heap. Stack
allocation is only used in limited circumstances. Moreover, Julia disallows
pointer arithmetic.

Three programs fall outside of this range: two programs (knucleotide and
mandelbrot) have slowdowns greater than 2x over C, while one (regex) is
faster than C. The knucleotide benchmark was written for clarity over
performance; it makes heavy use of abstractly-typed struct fields (which
cause the values they denote to be boxed).  In the case of mandelbrot, the C
code is manually vectorized to compute the fractal image 8 pixels at a time;
Julia's implementation, however, computes one pixel at a time. Finally,
regex, which was within the margin of error of C, simply calls into the same
regex library C does.

Julia is fast on tiny benchmarks, but this may not be representative of
real-world programs. I lack the benchmarks to gauge Julia's performance at
scale. Some libraries have published comparisons. JuMP, a large embedded
domain specific language for mathematical optimization, is one such library.
JuMP converts numerous problem types (e.g. linear, integer linear, convex,
and nonlinear) into standard form for solvers. When compared to equivalent
implementations in C++, MATLAB, and Python, JuMP is within 2x of C++. For
comparison, MATLAB libraries are between 4x and 18x slower than C++, while
Python's optimization frameworks are at least 70x slower than
C++~\cite{Miles13}. This provides some evidence that Julia's performance on
small benchmarks may carry over to larger programs.

\section{The Julia programming language}\label{over}

The designers of Julia set out to develop a language specifically for the
needs of scientific computation, and they chose a finely tuned set of
features to support this use case. Antecedent languages, like R and MATLAB,
illustrate scientific programmers' desire to write high-level scripts, which
motivated Julia's adoption of an optionally typed surface language.
Likewise, these languages drove home the importance of flexibility:
programmers regularly extend core language functionalities to fit
their needs. Julia provides this extensibility mechanism through multiple
dispatch.

\subsection{Values, types, and annotations}

\subsubsection{Values}

Values can be either instances of \emph{primitive types}, represented as
sequences of bits, or \emph{composite types}, represented as a collection of
fields holding values.  Logically, every value is tagged by its full type
description; in practice, however, tags are often elided when they can be
inferred from context.  Composite types are immutable by default, thus
assignment to their fields is not allowed. This restriction is lifted when
the \c{mutable} keyword is used.

\subsubsection{Types declarations}

Programmers can declare three kinds of types: \emph{abstract types},
\emph{primitive types}, and \emph{composite types}. Types can be parametrized
by bounded type variables and have a single supertype.  The type \c{Any} is
the root of the type hierarchy, or the greatest supertype
(top). \emph{Abstract} types cannot be instantiated; \emph{concrete} types
can.

\begin{jllisting}
abstract type Number end

abstract type Real <: Number end

primitive type Int64 <: Signed 64 end

struct Polar{T<:Real} <: Number
    r::T
    t::T
end
\end{jllisting}

The code shown is an extract of Julia's numeric tower. \c{Number} is an
abstract type with no declared supertype, which means \c{Any} is its
supertype. \c{Real} is also abstract but has \c{Number} as its supertype.
\c{Int64} is a primitive type with \c{Signed} as its supertype; it is
represented in 64 bits. The struct \c{Polar\{T<:Real\}} is a subtype of
\c{Number} with two fields of type \c{T} bounded by \c{Real}. Run-time
checks ensure that values stored in these fields are of the declared type.
When types are omitted from field declarations, fields can hold values of
\c{Any} type.  Julia does not make a distinction between reference and value
types as Java does. Concrete types can be manipulated either by value or by
reference; the choice is left to the implementation. Abstract types,
however, are always manipulated by reference. It is noteworthy that
composite types do not admit subtypes; therefore, types such as \c{Polar}
are final and cannot be extended with additional fields.

\subsubsection{Type annotations}

Julia offers a rich type annotation language to express constraints on
fields, parameters, local variables, and method return 
types~\cite{oopsla18b}. The \c{::} operator
ascribes a type to a definition. The annotation language includes union
types, written \c{Union\{A,...\}}; tuple types, written \c{Tuple\{A,...\}};
iterated union types, written \c{TExp where A<:T<:B}; and singleton types,
written \c{Type\{T\}} or \c{Val\{V\}}. The distinguished type \c{Union\{\}},
with no argument, has no value and acts as the bottom type.

Union types are abstract types which include, as values, all instances of
their arguments. Thus, \code{Union\{Integer, String\}} denotes the set of
all integers and strings.  Tuple types describe the types of the elements
that may be instantiated within a given tuple, along with their order. They
are parametrized, immutable types.  Additionally, they are \emph{covariant}
in their parameters. The last parameter of a tuple type may optionally be
the special type \c{Vararg}, which denotes any number of trailing elements.

Julia provides iterated union types to allow quantification over a range of
possible instantiations.  For example, the denotation of a polar coordinate
represented using a subtype \c{T} of real numbers is
\c{Polar\{T\}}~\c{where Union\{\}<:T<:Real}. Each \c{where} clause
introduces a single type variable. The type application syntax \c{A\{B\}}
requires \c{A} to be a \c{where} type, and substitutes \c{B} for the
outermost type variable in \c{A}. Type variable bounds can refer to outer
type variables. For example,

\begin{center}
\c{Tuple\{T, S\}}~\c{where S<:AbstractArray\{T\}}~\c{where T<:Real}
\end{center}

\noindent
refers to 2-tuples whose first element is some \c{Real}, and whose second
element is an array whose element type is the type of the first tuple
element, \c{T}.

A singleton type is a special kind of abstract type, \c{Type\{T\}}, whose
only instance is the object \c{T}.

\subsubsection{Subtyping}

In Julia, the subtyping relation between types, written \c{<:}, is used in
run-time casts, as well as method dispatch.  Semantic subtyping partially
influenced Julia's subtyping~\cite{Frisch02}, but practical considerations
caused Julia to evolve in a unique direction.  Julia has an original
combination of \emph{nominal subtyping}, \emph{union types}, \emph{iterated
  union types}, \emph{covariant} and \emph{invariant} constructors, and
\emph{singleton types}, as well as the \emph{diagonal rule}. Parametric
types are invariant in their parameters because this allows the 
Julia compiler to perform optimizations dependent on the memory
representation of values.  Arrays of dissimilar values box each of their
arguments, for consistent element size, under type
\c{Array\{Any\}}. However, if all the values are statically determined to be
of the same kind, they are stored inside of the array itself. 
Tuple types represent both tuples of values and function
arguments. They are covariant as this allows Julia
to compute dispatch using subtyping of tuples. Subtyping of union types is
asymmetrical but intuitive.  Whenever a union type appears on the left-hand
side of a judgment, as in \c{Union\{T1,...\}}~\c{<: T}, all the types
\c{T1,...}  must be subtypes of \c{T}.  In contrast, if a union type appears
on the right-hand side, as in \c{T <: Union\{T1,...\}}, then only one type,
\c{Ti}, needs to be a supertype of \c{T}.  Covariant tuples are distributive
with respect to unions, so \c{Tuple\{Union\{A,B\}, C\} \ <:
  Union\{Tuple\{A,C\}, Tuple\{B,C\}\}}.  The iterated union construct
\c{TExp where A<:T<:B}, as with union types, must have either a ``forall'' or an
``exist'' semantics, according to whether the union appears on the left or right
of a subtyping judgment.  Finally, the \emph{diagonal rule} states that if a
variable occurs more than once in covariant position, it is restricted to
ranging over only concrete types. For example, \c{Tuple\{T,T\}}~\c{where T}
can be seen as \c{Union\{Tuple\{Int8,Int8\}, Tuple\{Int16,Int16\}, ...\}},
where \c{T} ranges over all concrete types.  The details of  subtyping
are intricate and the interactions between  features can be
surprising, described in the paper~\cite{oopsla18b}.

\subsubsection{Dynamically-checked type assertions}

Type annotations in method arguments are guaranteed by the language
semantics. A method executes only if all of its arguments have types that
match their declarations.  However, Julia allows type annotations elsewhere
in the program, these act as checked type assertions. For example, to
guarantee that variable \c{x} has type \c{Int64}, the assertion \c{x::Int64}
can be inserted into its declaration.  Likewise, functions can assert a
return type, as in \c{f()::Int = ...} for example. Fields and expressions
can also be annotated.  These annotations check the type of the expression's
or field's value.  If that type is not a subtype of the declared type, Julia
will try to convert it to the declared type. If this conversion fails, 
an exception is thrown. 

\subsection{Multiple dispatch}

Julia uses multiple dispatch extensively, allowing extension of
functionality by means of overloading. Each function (for example \c{+}) can
consist of an arbitrarily large number of methods (in the case of \c{+},
180). Each of these methods declares what types it can handle, and Julia
will dispatch to whichever method is most specific for a given call.  As
hinted at with addition, multiple dispatch is omnipresent.  Virtually every
operation in Julia involves dispatch.  New methods can be added to existing
functions, extending them to work with new types.

\noindent\begin{minipage}{\textwidth}
\subsubsection{Example}
\noindent\begin{wrapfigure}{r}{7cm}
\begin{lstlisting}[linewidth=7cm]
struct Dual{T}
    re::T
    dx::T
end

function Base.:(+)(a::Dual{T},b::Dual{T}) where T
    Dual{T}(a.re+b.re, a.dx+b.dx)
end
function Base.:(*)(a::Dual{T},b::Dual{T}) where T
    Dual{T}(a.re*b.re, a.dx*b.re+b.dx*a.re)
end
function Base.:(/)(a::Dual{T},b::Dual{T}) where T
    Dual{T}(a.re/b.re, (a.dx*b.re-a.re*b.dx)/(b.re*b.re))
end
\end{lstlisting}

\end{wrapfigure}

\cbstart
Consider forward differentiation, a technique that allows derivatives to be
calculated for arbitrary programs. It is implemented threading a value
together with its derivative through a program. In many languages, the code
being differentiated would have to be aware of forward differentiation as the
dual numbers need new definitions of arithmetic.  Multiple dispatch allows to
implement a library that works for existing functions, as I can simply extend
arithmetic operators.  Suppose I want to compute the derivative of \c{f(a,b)
= a*b/(b+b*a+b*b)} about \c{a}, with \c{a =} \c{1} and \c{b =} \c{3}. Forward
differentiation works by introducing a concept of dual numbers as shown in the
example. Dual numbers consist of a real component (the actual value being
computed) and the derivative of that number (\c{dx}, in the example).
Differentiation is then performed by implementing the chain rule for whatever
operation is then performed. In the case of addition, for instance, the real
and derivative components of the two dual numbers are simply added. Multiplication, on
the other hand scales the derivatives of the terms by the opposing real component to determine
the derivative of the final value.

I can implement forward differentiation in Julia very easily by overloading
arithmetic. As seen in the example, I can simply add new definitions for the
same operators that are used for all other arithmetic operations. Since we
covered all of the operations used in the function \c{f}, I can now figure
out the derivative of \c{f} by simply calling it with dual numbers:
\c{f(Dual(1.0,1.0), Dual(3.0,0.0)).dx} yields \c{0.16}.
\cbend

\end{minipage}

\subsubsection{Semantics}
Dispatching on a function \c{f} for a call with argument type \c{T} consists
in picking a method $m$ from all the methods of \c{f}. The selection filters
out methods whose types are not a supertype of \c{T} and takes the method
whose type \c{T'} is the most specific of the remaining ones. In contrast to
single dispatch, every position in the tuples \c{T} and \c{T'} have the same
role---there is no single receiver position that takes precedence.
Specificity is required to disambiguate between two or more methods which
are all supertypes of the argument type. It extends subtyping with extra
rules, allowing comparison of dissimilar types as well. The specificity
rules are defined by the implementation and lack a formal semantics. In
general, \c{A} is more specific than \c{B} if \c A~\c{!=}~\c B and either \c
A \c{<:} \c B or one of a number of special cases hold:

\begin{enumerate}
\item[(a)] \c A \c = \c{R\{P\}} and \c B = \c{S\{Q\}}, and there exist
  values of \c P and \c Q such that \c R \c{<:} \c S. This allows us to
  conclude that \c{Array\{U\}} \c{where U} is more specific than
  \c{AbstractArray\{String\}}.
\item[(b)] Let \c C be the non-empty meet (approximate intersection) of \c A
  and \c B, and \c C is more specific than \c B and \c B is not more
  specific than \c A. This is used for union types: \c{Union\{Int32,
    String\}} is more specific than \c{Number} because the meet, \c{Int32},
  is clearly more specific than \c{Number}.
\item[(c)] \c A and \c B are tuple types, \c A ends with a \c{Vararg} type
  and \c A would be more specific than \c B if its \c{Vararg} was expanded
  to give it the same number of elements as \c B. This tells us that
  \c{Tuple\{Int32,Vararg\{Int32\}\}} is more specific than
  \c{Tuple\{Number,Int32,Int32\}}.
\item[(d)] \c A and \c B have parameters and compatible structures, \c A
  provides a consistent assignment of non-\c{Any} types to replace \c B's
  type variables, regardless of the diagonal rule. This means that
  \c{Tuple\{Int, Number, Number\}} is more specific than \c{Tuple\{R,S,S\}}
  \c{where \{R, S<:R\}}.
\item[(e)] \c A and \c B have parameters and compatible structures and \c
  A's parameters are equal or more specific than \c B's. As a consequence,
  \c{Tuple\{Array\{R\}} \c{where R, Number\}} is more specific than
  \c{Tuple\{AbstractArray\{String\}, Number\}}.
\end{enumerate}

\noindent\begin{minipage}{\textwidth}
\begin{wrapfigure}[7]{r}{6cm}
\begin{lstlisting}[linewidth=6cm,aboveskip=0cm]
ntuple(f, ::Type{Val{0}}) = (@_inline_meta; ())
ntuple(f, ::Type{Val{1}}) = (@_inline_meta; (f(1),))
ntuple(f, ::Type{Val{2}}) = (@_inline_meta; (f(1), f(2)))
\end{lstlisting}
\end{wrapfigure}

\noindent
One interesting feature is dispatch on type objects and on primitive
values.  For example, the Base library's \c{nutple} function is defined as a
set of methods dispatching on the value of their second argument. Thus a call to \c{ntuple(id, Val\{2\})} yields \c{(1,2)} where \c{id} is the
identity function.  The \c{@_inline_meta} macro is used to force inling. 
\end{minipage}

\subsection{Metaprogramming}

Julia provides various features for defining functions at compile-time and
run-time and has a particular definition of visibility for these
definitions.

\noindent\begin{minipage}{\textwidth}
\subsubsection{Macros}
\begin{wrapfigure}[8]{r}{6cm}
\begin{lstlisting}[linewidth=6.15cm,aboveskip=0cm]
macro assert(ex, msgs...)
    msg_body = isempty(msgs) ? ex : msgs[1]
    msg = string(msg_body)
    return :($ex ? nothing
       : throw(AssertionError($msg)))
end
\end{lstlisting}
\end{wrapfigure}

Macros provide a way to generate code and reduce the need for \c{eval()}.  A
macro maps a tuple of arguments to an expression which is compiled
directly. Macro arguments may include expressions, literal values, and
symbols. The example on the right shows the definition of the \c{assert}
macro which either returns \c{nothing} if the assertion is true or throws an
exception with an optional message provided by the user. The \c{:(...)}
syntax denotes quotation, that is the creation of an expression. Within it,
values can be interpolated: \c{\$x} will be replaced by the value of \c{x}
in the expression. Once defined, this macro can then be used to make assertions
like \c{@assert 1 + 1 == 2}.
\end{minipage}
\vspace{1em}

\cbstart
Another form of macro available in Julia is the string macro. String macros
allow static compilation of string literals. One example in the Julia standard
library is regular expressions: \c{r".*"} defines a regular expression that
matches a string of any characters and length, for example. String literal
macros are implemented very much like normal macros: they are only distinguished
by a \c{_str} suffix. As an example, the regular expression
macro is defined as \c{macro r_str(p) Regex(p) end}. A string macro implementation is
then simply passed the string literal which it can then analyze or otherwise process
as part of expansion.

Macros in Julia are unhygenic: macro developers can easily bind to and
introduce new external syntactic forms if they so wish. Julia implements
a sort of superficial hygine, wherein macro-introduced symbols are by default
analyzed and rewritten with generated unique identifiers. However, this system
can be opted out of using the \c{esc} expression form. Macros, as a result, can
introduce new forms. For instance, the JuMP library introduces the
\c{@variable} macro which defines a new optimization variable and binds it into scope.
As an example, if I wanted to introduce a variable \c{x} to a model with an upper
bound of 2, I could do it with \c{@variable(model, x <= 2)}. The variable \c{x} will
now be in scope and be inititalized with the desired constraint.

Macros have found a wide range of use cases in Julia. A few common patterns are:
\begin{itemize}
    \item \textbf{Sugar}: Macros like \c{@assert} or \c{@debug} encapsulate some simple but extremely common and otherwise tedious operation, such as asserting that an expression is true or logging a message at debug level.
    \item \textbf{Semantic}: Macros such as \c{@inline} mark expressions with metadata to alter how they are compiled. Similarly,
    macros such as \c{@.} modify the semantics of the expression they're given. The \c{@.} macro, for instance, automatically vectorizes the expression it's given.
    \item \textbf{DSL}: As seen in JuMP, another use case of macros is to define DSLs. DSLs in Julia can sometimes reuse the existing Julia grammar (as in the case of JuMP) or deviate wholly from it (as seen in the example of the regular expression macro).
\end{itemize}
\cbend
\subsubsection{Reflection}

Julia provides methods for run-time introspection.  The names of fields may
be interrogated using \c{fieldnames()} and their types, with
\c{fieldtype()}. Types are themselves
represented as a structure called \c{DataType}. The direct subtypes of any
\c{DataType} may be listed using \c{subtypes()}. The internal representation
of a \c{DataType} is important when interfacing with C code and several
functions are available to inspect these details. \c{isbits(T::DataType)}
returns true if \c{T} is stored with C-compatible alignment. The builtin
function \c{fieldoffset(T::DataType, i::Integer)} returns the offset for
field \c{i} relative to the start of the type. The methods of any function
may be listed using \c{methods()}. The method dispatch table may be searched
for methods accepting a given type using \c{methodswith()}.

\noindent\begin{minipage}{\textwidth}
\begin{wrapfigure}[5]{r}{6cm}
\begin{lstlisting}[linewidth=6.15cm,aboveskip=0cm]
for op in (:+, :*, :&, :|)
    eval(:($op(a,b,c) = $op($op(a,b),c)))
end
\end{lstlisting}
\end{wrapfigure}

More powerful is the \c{eval()} function which takes an expression object
and evaluates it in the global scope of the current module. For example
\c{eval(:(1+2))} will take the expression \c{:(1+2)} and evaluate it
yielding the expected result. 
\end{minipage}
\vspace{1em}

When combined with an invocation to the parser, any arbitrary string can be
evaluated, so for instance \c{eval(parse("function id(x) x end"))} adds an
identity method. One important difference from languages such as JavaScript is
that \c{eval()} does not have access to the current scope. This is crucial for
optimizations as it means that local variables are protected from
interference. The \c{eval()} function is sometimes used as part of code
generation. Here for example is a generalization of some of the basic binary
operators to three arguments. This generates four new methods of three
arguments each.

\subsubsection{World Age}

World age is a critical component of Julia's design for performance. It
arose out of a problem encountered with the one-shot JIT compilation strategy:
what happens if the set of methods changes?

\begin{figure}[H]
 \begin{minipage}{.27\textwidth}
  \begin{lstlisting}
> x = 3.14
> f(x) = (
    eval(:(x = 0)),
    x * 2,
    Main.x)

> f(42) # (0, 84, 0)
> x     # 0
\end{lstlisting}\vspace{-2mm}
\caption{Scope of eval in Julia}\label{fig:eval-scope}
\end{minipage}
\begin{minipage}{.05\textwidth}
\hspace{1em}
\end{minipage}
\begin{minipage}{.31\textwidth}\begin{lstlisting}[morekeywords={defn}]
> (defn g [] 2)
> (defn f [x]
    (eval `(defn g [] ~x))
    (* x (g)))
> (f 42) ; 1764
> (g)    ; 42
> (f 42) ; 1764
\end{lstlisting}\vspace{-2mm}
    \caption{Eval in Clojure}\label{fig:clojure}
  \end{minipage}
\begin{minipage}{.31\textwidth}\begin{lstlisting}
> g() = 2
> f(x) = (
    eval(:(g()=$x));
    x * g())
> f(42) # 84
> g()   # 42
> f(42) # 1764
\end{lstlisting}\vspace{-2mm}
\caption{Eval in Julia}\label{fig:eval-methods}
\end{minipage}
\end{figure}

Using \c{eval} Julia programs can modify the global state at any time
including both simple variables but functions too. Modifications are
limited to the global scope---as shown in~\ref{fig:eval-scope}, local
variables are not changed by \c{eval}---but any global reference may be
altered using \c{eval} at any time. As seen in the example, the local
reference to \c{x} was unaffected but explicitly referring to the outer
global \c{Main.x} shows the new value immediately.

In most dynamic languages global function lookup works the same way as
these variable assignements do. For example, in~\ref{fig:clojure} I see
that if I use \c{eval} to define a new implementation of \c{g} mid-\c{f}
then \c{f} ``picks up'' that definition of \c{g} immediately. This semantics
gives Julia serious problems, however, for the one-shot JITting model means
that \c{f} would have a compiled implementation that is now referring to the
wrong \c{g}; Julia would have to either make all method invocations dynamic or 
implement on-stack-replacement to back out the compilation in order to support
it.

World age is Julia's solution to this problem. If faced with a hard problem
one can either face it head on (and implement deoptimization/on-stack
replacement, in this example) or define it out of existence. Julia took the
latter approach: world age concretizes what definitions running code has
access to therein providing a consistent semantics for both compiled and
dynamically dispatched method invocations.

The action of world age on our example is shown in figure~\ref{fig:eval-methods}.
The function \c{f} is defined exactly as it is in Clojure and yet its results
are different; instead of referring to the newly-added method \c{g} (that returns
\c{42}), it uses the version defined when \c{f} was first called returning 2. 
Thus, the first result of calling \c{f} is 84. Only after execution returns
to the top level does the new definition of \c{g} become visible to \c{f}, at
which point the result matches Clojure's.

Besides returning to the top level (either explicitly or through use of
\c{eval}, which executes its argument at top level) programmers can use the
\c{invokelatest} function to use whatever the newest definitions are. These
tools provide escape hatches for cases in which programmers do want to access
newer definitions, such as when calling user-generated code, for example.

\cbstart
A number of patterns commonly appear in Julia packages to work with the
restrictions imposed by world age when combined with \eval. I cover a few of them here.

\paragraph{Boilerplating.}

The most common use of \eval is to automatically generate code for
boilerplate functions. These generated functions are typically created at the top-level so that
they can be used by the rest of the program. Consider the
\c{DualNumbers.jl} package, which provides a common dual number representation
for automatic differentiation. A dual number, which is a pair of the normal value and an ``epsilon'', which represents the derivative of the value, should support the same
operations as any number does and mostly defers to the standard operations.
For example, the \c{real} function, which gets the real component of a number
when applied to a dual number should recurse into both the actual and epsilon
value. \Eval can generate all of the needed implementations at package
load time (\c{@eval} is a macro that passes its argument to \eval as an AST).
\begin{lstlisting}
  for op in (:real, :imag, :conj, :float, :complex)
     @eval Base.$op(z::Dual) = Dual($op(value(z)), $op(epsilon(z)))
  end
\end{lstlisting}
A common sub-pattern is to generate proxies for interfaces defined by an
external system. For this purpose, the \c{CxxWrap.jl} library uses \eval at the
top level to generate (with the aid of a helper method that generates the ASTs)
proxies for arbitrary C++ libraries.
\begin{lstlisting}
  eval(build_function_expression(func, funcidx, julia_mod))
\end{lstlisting}

\paragraph{Defensive callbacks.}

The most widely used pattern for \c{invokelatest} deals with function values of
unknown age.  For example, when invoking a callback provided by a client, a
library may protect itself against the case where the provided function was
defined after the library was loaded. There are two forms of this pattern.
The simplest uses \c{invokelatest} for all callbacks, such as the library
\c{Symata.jl}:
\begin{lstlisting}
  for hook in preexecute_hooks
    invokelatest(hook)
  end
\end{lstlisting}
Every hook in \c{preexecute_hooks} is protected against world-age errors (at
the cost of slower function calls).  To avoid this slowdown, the second
common pattern catches world-age exceptions and falls back to \c{invokelatest} such as in
from the \c{Genie.jl} web server:
\begin{lstlisting}
  fr::String = try
    f()::String
  catch
    Base.invokelatest(f)::String
  end
\end{lstlisting}
This may cause surprises, however. If a sufficiently old method exists, the
call may succeed but invoke the wrong method.\footnote{In Julia,
  higher-order functions are passed by name as generic functions,
  so a callback will be subject to multiple dispatch.}
  This pattern may also catch unwanted exceptions and
execute \c f twice, including its side-effects.

\paragraph{Domain-specific generation}

As a language targeting scientific computing, Julia has a large number of
packages that do various symbolic domain reasoning.  Examples include
symbolic math libraries, such as \c{Symata} and \c{GAP}, which have the
functionality to generate executable code for symbolic expressions.
\c{Symata} provides the following method to convert an internal expression
(a \c{Mxpr}) into a callable function. Here, \c{Symata} uses a translation
function \c{mxpr_to_expr} to convert the \c{Symata} \c{mxpr} into a Julia
\c{Expr}, then wraps it in a function definition (written using explicit AST
forms), before passing it to \eval.

\begin{lstlisting}
  function Compile(a::Mxpr{:List}, body)
    aux = MtoECompile()
    jexpr = Expr(:function,
                 Expr(:tuple, [mxpr_to_expr(x, aux) for x in margs(a)]...),
                 mxpr_to_expr(body, aux))
    Core.eval(Main, jexpr)
  end
\end{lstlisting}

\paragraph{Bottleneck}

Generated code is commonly used in Julia as a way to mediate between a high-level 
DSL and a numerical library. Compilation from the DSL to executable
code can dramatically improve efficiency while still retaining a high-level
representation. However, functions generated thusly cannot be called from the
code that generated them, since they are too new. Furthermore, this code is expected
to be high-performance, so using \c{invokelatest} for every call is not acceptable. The
bottleneck pattern overcomes these issues. The idea is to split the program
into two parts: one that generates code, and another that runs it. The two
parts are bridged with a single \c{invokelatest} call (the ``bottleneck''),
allowing the second part to call the generated code efficiently. The pattern
is used in the \c{DiffEqBase} library, part of the DifferentialEquations family
of libraries that provides numerical differential equation solvers.

\begin{lstlisting}
  if hasfield(typeof(_prob),:f) && hasfield(typeof(_prob.f),:f) &&
       typeof(_prob.f.f) <: EvalFunc
    Base.invokelatest(__solve,_prob,args...; kwargs...)
  else
    __solve(_prob,args...;kwargs...)
  end
\end{lstlisting}
Here, if \c{_prob} has a field \c f, which has another field \c f, and the
type of said inner-inner \c f is an \c{EvalFunc} (an internally-defined
wrapper around any function that was generated with \eval), then it will
invoke the \c{__solve} function using \c{invokelatest}, thus allowing \c{__solve}
to call said method. Otherwise, it will do the invocation normally.

\paragraph{Superfluous eval}

This is a rare anti-pattern, probably indicating
a misunderstanding of world age by some Julia programmers. For example,
\code{Alpine.jl} package has the following call to \eval:
\begin{lstlisting}
  if isa(m.disc_var_pick, Function)
    eval(m.disc_var_pick)(m)
\end{lstlisting}
Here, \code{eval(m.disc_var_pick)} does nothing useful but imposes a performance
overhead. Because \c{m.disc_var_pick} is already a function value, calling
\eval on it is similar to using \c{eval(42)} instead of \c{42} directly;
this neither bypasses the world age nor even interprets an AST.

\paragraph{Name-based dispatch}
Another anti-pattern uses \eval to convert function names to functions.
For example, \c{ClassImbalance.jl} package chooses a function to call,
using its uninterpreted name:
\begin{lstlisting}
  func = (labeltype == :majority) ? :argmax : :argmin
  indx = eval(func)(counts)
\end{lstlisting}
It would be more efficient to operate with function values directly,
i.e. \c{func = ... : argmin} and then call it with \c{func(counts)}.
Similarly, when a symbol being looked up is generated dynamically, as it is
in the following example from \c{TextAnalysis.jl}, the use of \eval could be
avoided.
\begin{lstlisting}
  newscheme = uppercase(newscheme)
  if !in(newscheme, available_schemes) ...
  newscheme = eval(Symbol(newscheme))()
\end{lstlisting}
This pattern could be replaced with a call \lstinline{getfield(TextAnalysis, Symbol(newscheme))},
where \lstinline{getfield} is a special built-in function that finds a value in the
environment by its name. Using \lstinline{getfield} would be more efficient than \eval.
\cbend

The goal of world age was to nail down what methods are visible to any given 
part of the program for giving a consistent semantics to compilation. However,
its utility is not limited to compilation: I can also use it to solve the key
problem for a gradual type system for a language like Julia with open multiple
disptach. Concequently, I will return to world age in more detail later when
I discuss the design of the type system.

\subsection{Discussion}

The design of Julia makes a number of compromises, and I discuss some of the implications here.

\noindent\begin{minipage}{\textwidth}
\begin{wrapfigure}{r}{6.4cm}
\begin{lstlisting}[linewidth=6.3cm,aboveskip=0cm]
 abstract type AbsPt end
 struct Pt <: AbsPt
     x::Int
     y::Int
 end

 abstract type AbsColPt <: AbsPt end
 struct ColPt <: AbsColPt
     x::Int
     y::Int
     c::String
 end

 copy(p::Pt, dx, dy) = 
    Pt(p.x+dx, p.y+dy)
 copy(p::ColPt, dx, dy) =
    ColPt(p.x+dx, p.y+dy, p.c)

 move(p::AbsPt, dx, dy) = 
    copy(p, dx, dy)
\end{lstlisting}
\end{wrapfigure}

\paragraph{Object oriented programming.}

Julia's design does not support the class-based object oriented
programming style familiar from Java. Julia lacks the encapsulation
that is the default in languages going back all the way to Smalltalk:
all fields of a struct are public and can be accessed
freely. Moreover, there is no way to extend a point class \c{Pt} with
a color field as one would in Java; in Julia the user must plan ahead
for extension and provide a class \c{AbsPt}. Each ``class'' in that
programming style is a pair of an abstract and a concrete class. One
can define methods that work on abstract classes such as the \c{move}
method which takes any point and new coordinates. The \c{copy} methods
are specific to each concrete ``class'' as they must create instances.
The unfortunate side effect of the fact that abstract classes have
neither fields nor methods is that there is no documentation to remind
the programmer that a \c{copy} method is needed for \c{ColPt}. This
has to be discovered by inspection of the code.
\end{minipage}

\paragraph{Functional programming.}

Julia supports several functional programming idioms---higher order
functions, immutable-by-default values---but has no arrow types. Instead,
the language ascribes incomparable nominal types to functions.  Thus, many
traditional typed idioms are impractical, and it is impossible to dispatch
on function types.  However, nominal types do allow dispatch on methods
passed as arguments, enabling a different set of patterns. For example, the
implementation of \c{reduce} delegates to a special-purpose function
\c{reduce_empty} which, given a function and list type, determines the value
corresponding to the empty list. If reducing with \c{+}, the natural empty
reduction value is 0, for the correct 0. Capturing this, \c{reduce_empty}
has the following definition: \code{reduce_empty(::typeof(+), T)} \c{=}
\c{zero(T)}. In this case, \c{reduce_empty} dispatches the nominal \c{+}
function type, then returns the zero element for \c{T}.

\paragraph{Gradual typing.} The goal of gradual type systems is to allow
dynamically typed programs to be extended with type annotations after the
fact~\cite{SiekTaha06,tf-dls06}. Julia's type system superficially appears to
fit the bill; programs can start untyped, and, step by step, end up fully
decorated with type annotations. But there is a fundamental difference. In a
gradually typed language, a call to a function \c{f(t::T)}, such as \c{f(x)},
will be statically checked to ensure that the variable \c{x}'s declared type
matches the argument's type \c{T}. In Julia, on the other hand, a call to
\c{f(x)} will not be checked statically; if \c{x} does not have type \c{T}, 
then Julia throws a runtime error.

Another difference is that, in Julia, a variable, parameter, or field
annotated with type \c{T} will \emph{always} hold a value of type \c{T}.
Gradual type systems only guarantee that values will act like type \c{T},
wrapping untyped values with contracts to ensure they they are
indistinguishable~\cite{tf-popl08}. If a gradually-typed program manipulates a
value erroneously, that error will be flagged and blame will be assigned to
the part of the program that failed to respect the declared types. Similarly,
Julia departs from optional type systems, like Hack \cite{hack13} or
Typescript \cite{typescript13}. These optional type systems provide no
guarantee whatsoever about what values a variable of type \c{T} actually
holds.

Julia is closest in spirit to Thorn~\cite{oopsla09}. The two languages
share a nominal subtype system with tag checks on field assignment and method
calls. In both systems, a variable of type \c{T} will only ever have values of
type \c{T}. However, Julia differs substantially from Thorn, as it lacks a
static type system and adds multiple dispatch.

\section{Implementing Julia}\label{imp}
\noindent\begin{minipage}{\textwidth}
\begin{wrapfigure}{r}{.4\textwidth}
\includegraphics[width=.3\textwidth]{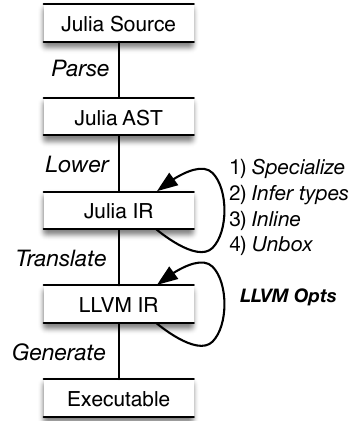}
\caption{Julia JIT compiler}\label{jit}
\end{wrapfigure}

Julia is engineered to generate efficient native code at run-time.  The
Julia compiler is an optimizing just-in-time compiler structured in three
phases: source code is first parsed into abstract syntax trees; those trees
are then lowered into an intermediate representation that is used for Julia
level optimizations; once those optimizations are complete, the code is
translated into LLVM IR and machine code is generated by LLVM~\cite{LLVM}.
Fig.~\ref{jit} is a high level overview of the compiler pipeline.
\end{minipage}

With the exception of the standard library which is pre-compiled, all Julia
code executed by a running program is compiled on demand. The compiler is
relatively simple: it is a method-based JIT without compilation tiers; once
methods are compiled they are not changed as Julia does not support
deoptimization with on-stack replacement.

Memory is managed by a stop-the-world, non-moving, mark-and-sweep garbage
collector. The mark phase can be executed in parallel. The collector has a
single old generation for objects that survive a number of cycles. It uses a
shadow stack to record pointers in order to be precise.

\noindent\begin{minipage}{\textwidth}
\begin{wrapfigure}{r}{.32\textwidth}
\begin{minipage}{.3\textwidth}\small
\caption{Source files}
\begin{tabular}{@{}lrr@{}}\hline
Language   &files &           code\\\hline
Julia       &                   296&  115,252\\
C            &                   79&  44,930\\
C++           &                  21&  18,491\\
Scheme         &                 17&   8,270\\
C/C++ Header    &                44&    6,205\\
make              &               7&     684\\
Bourne Shell       &              2&      85\\
Assembly            &             4&     74\\\hline
                     &          470& 193,991\\\hline
\end{tabular}\label{cloc}\end{minipage}
\end{wrapfigure}

Since v0.5, Julia natively supports multi-threading but the feature is still
labeled as ``experimental''. Parallel loops use the \c{Threads.@threads}
macro which annotates \c{for} loops that are to run in a multi-threaded
region. Other part of the multi-threaded API are still in flux. An
alternative to Julia native threading is the ParallelAccelerator system
of~\cite{ecoop17} which generates OpenMP code on the fly for parallel
kernels. The system crucially depends on type stability---code that is not
type stable will execute single threaded.
Fig.~\ref{cloc} gives an overview of the implementation of Julia v0.6.2. The
standard library, Core, Base and a few other modules, accounts for most of
the use of Julia in Julia's implementation. The middle-end is written in C
and Julia;
C++ is used for LLVM IR code generation.  Finally, Scheme and Lisp are used
for the front end. External dependencies such as LLVM, which is used as back
end, do not participate to this figure.
\end{minipage}

\subsection{Method specialization}

Julia's compilation strategy is built on runtime type information.  Every
time a method is called with a new tuple of argument types, it is
specialized to these types. Optimizing methods at invocation time, rather
than ahead of time, provides the JIT with key pieces of information: the
memory layout of all arguments is known, allowing for unboxing and direct
field access. Specialization, in turn, allows for devirtualization.
Devirtualization replaces method dispatch with direct calls to a specialized
method. This reduces dispatch overhead and enables inlining. As the
compilation process is rather slow, results are cached, thus methods are
only compiled the first time they are called with a new type.  This process
converges as long as functions are only called with a limited number of
types.
If a function gets called with many different argument types, then
invocations will repeatedly incur the cost of specialization. Julia cannot
avoid this pathology, as programs that generate a large number of call
signatures are easy to write. To alleviate this problem, Julia allows tuple
types to contain a \c{Vararg} component, which is treated as having type
\c{Any}.  Likewise, each function value has its own type, but Julia only
specializes on function types if the argument is called in the method body.
Other heuristics are used for type \c{Type}.  Julia has one recourse against
type unstable code, programmers can use the \c{@nospecialize} annotation to
prevent specialization on a specific argument.

\subsection{Type inference}

Type information enables many of Julia's key optimizations. The compiler
performs a data-flow analysis to discover types after specializing.  Julia
uses a set constraint-based analysis with constraints arising from return
values, method dereferences, and argument types. Type requirements need to
be satisfied at function call sites and field assignments. The system
propagates constraints forward to satisfy requirements, inferring the types
for intermediate values along the way.

\noindent\begin{minipage}{\textwidth}
\begin{wrapfigure}[10]{r}{8cm}
\begin{minipage}{3.4cm}
\begin{lstlisting}[linewidth=3.4cm]
function f(a,b)
    c = a+b
    d = c/2.0
    return d
end
\end{lstlisting}
\end{minipage}
\begin{minipage}{4.5cm}
\begin{lstlisting}[linewidth=5.5cm]
function f(a::Int,b::Int)
    c = a+b::Int
    d = c/2.0::Float64
    return d
end => Float64
\end{lstlisting}
\end{minipage}
\caption{A simple example of type inference} \label{simpinf}
\end{wrapfigure}

Given the concrete types of all function arguments, intraprocedural type
inference propagates types forward into the method body. An example is shown
in Fig.~\ref{simpinf}.  When \c{f} is called with a pair of integers, type
inference finds that \c{a+b} returns an integer; therefore \c{c} is likewise
an integer. From this, it follows that \c{d} is a float and so is the return
type of the method. Note that this explanation relies on knowing the return
type of \c{+}. Since addition could be overloaded, it is necessary to be
able to infer the return types of arbitrary methods. Return types may vary
depending on argument type, and previous inference results may not cover the
current case. Therefore, when a new function is called, analysis of the
caller must be suspended and continue on the callee to figure out the return
type of the call.
\end{minipage}

\noindent\begin{minipage}{\textwidth}
\begin{wrapfigure}{r}{9cm}
\begin{minipage}{4cm}
\begin{lstlisting}[linewidth=4cm]
function a()
    return b(3)+1
end
function b(num)
    return num+2
end
\end{lstlisting}
\end{minipage}
\begin{minipage}{5cm}
\begin{lstlisting}[linewidth=5cm]
function a()
    return b(3)+1::Int
end => Int
function b(num::Int)
    return num+2::Int
end => Int
\end{lstlisting}
\end{minipage}
\caption{Simple interprocedural type inference} \label{simpininf}
\end{wrapfigure}

Interprocedural analysis is simple for non-recursive methods as seen in
Fig.~\ref{simpininf}: analysis proceeds with the called method and the
return type is computed. For recursive methods cycle elimination is
performed. Once a cycle is identified, it is executed until it reaches
convergence. The cycle is then contracted into a single monolithic function
from the perspective of analysis. More challenging are methods whose
argument or return types can grow indefinitely depending on its
arguments. To avoid this, Julia limits the size of the inferred types to an
arbitrary bound. In this manner, the set of possible types is finite and
therefore termination of the analysis is guaranteed.
\end{minipage}

\subsection{Method inlining}

Inlining replaces a function call by the body of the called function. In
Julia, it can be realized in a very efficient way because of its synergy
with specialization and type inference. Indeed, if the body of a method is
type stable, then the internal calls can be inlined. Conversely, inlining can
help type inference because it gives additional context. For instance, inlined
code can avoid branches that can be eliminated as dead code, which allows in
turn to propagate more precise type information. Yet, the memory cost
incurred by inlining can be sometimes prohibitive; moreover it requires
additional compilation time. As a consequence, inlining is bounded by a
number of pragmatic heuristics.

\subsection{Object unboxing}

Since Julia is dynamic, a variable may hold values of many types. As a
consequence, in the general case, values are allocated on the heap with a
tag that specifies their type. Unboxing allows to manipulate values
directly. This optimization is helped by a combination of design
choices. First, since concrete types are final, a concrete type specifies
both the size of a value and its layout. This would not be the case in Java
or TypeScript due to subtyping. In addition, Julia does not have a null
value; if it did, there would be need for an extra tag for primitive
values. As a consequence, values such as integers and floats can always be
stored unboxed.  Repeated boxing and unboxing can be expensive, and
unboxing can also be impossible to realize although the type information is
present, in particular for recursive data structures. As with inlining,
heuristics are thus used to determine when to perform this optimization.

\begin{figure}
\centering
\includegraphics[width=.8\textwidth]{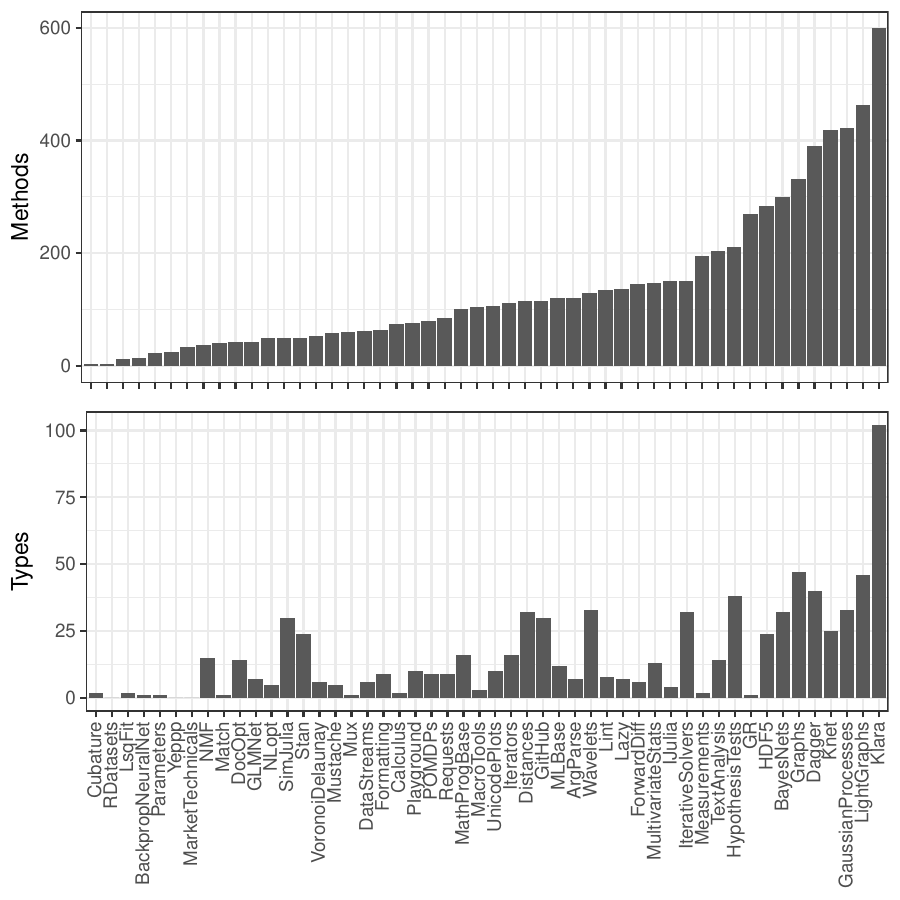}
\caption{Number of methods and types by package}\label{numfuncs}
\end{figure}

\section{Julia in Practice}\label{pra}

In order to understand how programmers use the language, I analyzed a corpus
of 50 packages hosted on GitHub. I chose packages---libraries, in Julia
parlance---over runnable end-user programs out of necessity: no central
repository exists of Julia programs. Packages were
included based on GitHub stars. Selected packages also had to pass their own
test suites.  Additionally, I analyzed Julia's standard library.

\subsection{Typeful programming}

Julia is a language where types are optional. Yet, knowing them is
profitable since it enables major optimizations. 
Users are thus encouraged
to program in a typeful style where code is, as much as possible, type
stable. To what extent is this rule followed?

\begin{figure}
  \centering
  \hfill
  \subfloat[Methods by percentage of typed arguments]{
    \label{perctyped}
      \includegraphics[scale=0.5]{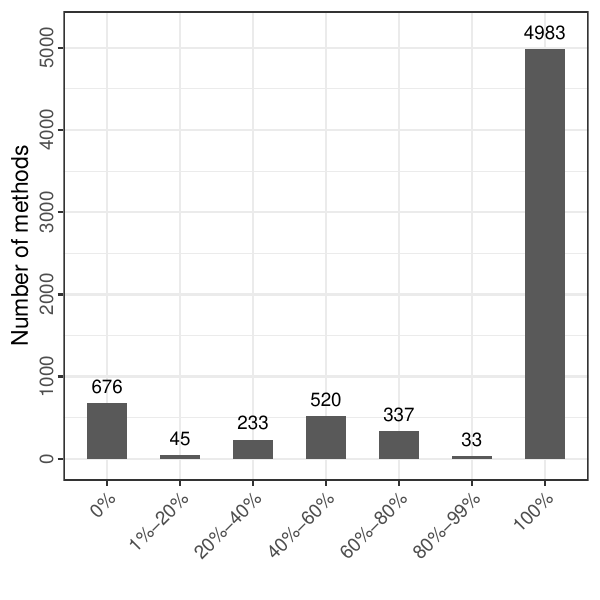}
      \vspace{-3mm}
  }
  \hfill
  \subfloat[Targets per call site per package]{
    \label{targets_per_call_site_per_package}
      \includegraphics[scale=0.5]{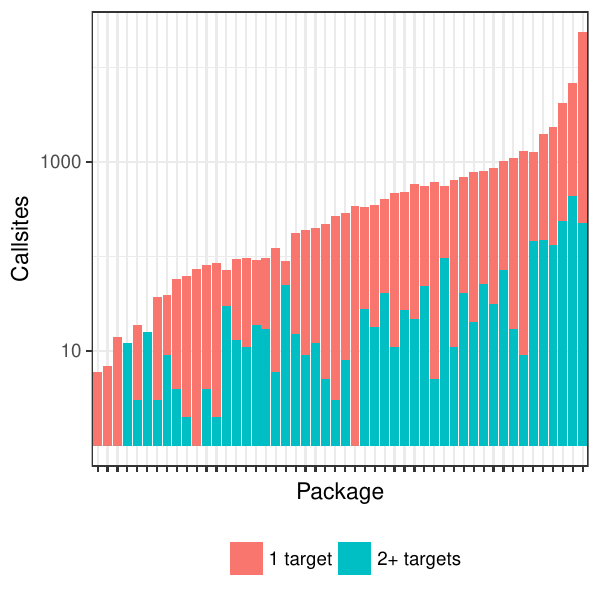}
      \vspace{-3mm}
  }
  \hfill
  \caption{High-level multimethod usage metrics}
\end{figure}

\subsubsection{Type annotations}

Fig.~\ref{numfuncs} gives the number of methods and types defined in each
package after it was loaded into Julia, to ensure that generated methods were
counted. I performed structural analysis of parsed ASTs, allowing us to
measure only methods and types written by human developers. In total, the
corpus includes 792 type definitions and 7,018 methods. The median number of
types and methods per package is 9 and 104, respectively. Klara, a library
for Markov chain Monte Carlo inference, is the largest package by both
number of types and methods with 102 and 599, respectively. Three packages,
MarketTechnicals, RDatasets, and Yeppp, define zero types; while Cubature
defines just 3 methods, the fewest in the corpus.  Clearly, Julia users
define many types and functions. However, the level of dynamism remains a
question.
Fig.~\ref{perctyped} shows the distribution of type annotations on arguments
of method definitions. 0\% means all arguments are untyped (\c{Any}),
while 100\% means that all arguments are non-\c{Any}.  An impressive 4,983
(or 62\%) of methods are fully type-annotated.  
Despite having the opportunity to write untyped methods, developers
define mostly typed methods.

\subsubsection{Type stability}

\noindent\begin{minipage}{\textwidth}
\begin{wrapfigure}{r}{0.5\textwidth}
\end{wrapfigure}

Type stability is key to devirtualizing and inlining methods.  I measure
type instability at run-time by dynamic analysis of the test suites of our
corpus. Each called method was recorded along with the tuple of types of its
arguments and the call site. I filtered calls to anonymous and
compiler-generated functions to focus on functions defined by humans.
Fig.~\ref{targets_per_call_site_per_package} compares, for each package, the
number of call sites where all the calls target only one specialized
method to those that call two and more. Calls are recorded regardless
of whether they were devirtualized. The y-axis is shown in log scale. On
average, 92\% of call sites target a single specialized method.  Code is
thus in general type stable, which agrees with the assumption that
programmers attempt to write type stable code.
\end{minipage}

\subsection{Multiple dispatch}

Multiple dispatch is the most prominent features of Julia's design.  Its
synergy with specialization is crucial to understand the performance of the
language and its ability to devirtualize and inline efficiently.  How is
multiple dispatch used from a programmer's perspective?  Moreover, a promise
of multiple dispatch is that it can be used to extend existing behavior with
new implementations. How much do Julia libraries extend existing
functionality, and what functionality do they extend?

\subsubsection{Overloading}

Fig.~\ref{fnhist} examines how multiple dispatch is used to extend existing
functionality. I use the term \emph{external overloading} to mean that a
package adds a method to a function defined in a library.  Packages are
binned based on the percentage of functions that they overload versus
define. Packages with only external overloading are at 100\%, while packages
that do not use external overloading would be in the 0\% bin.  Many packages
are defined without extensive use of external overloading.  For 28 out of 50
packages, fewer than 30\% of the functions they define are
overloads. However, the distribution of overloading has a long tail, with a
few libraries relying on overloads heavily.  The Measurements package has
the highest proportion of overloads, with 147 overloads out of a total of
161 methods (91\%). This is justified by the purpose of Measurements: it
propagates errors throughout other operations, which is done by extending
existing functions.

\begin{figure}
  \centering
  \subfloat[Packages by \% of overloaded functions]{
    \label{fnhist}
      \includegraphics[width=0.75\textwidth]{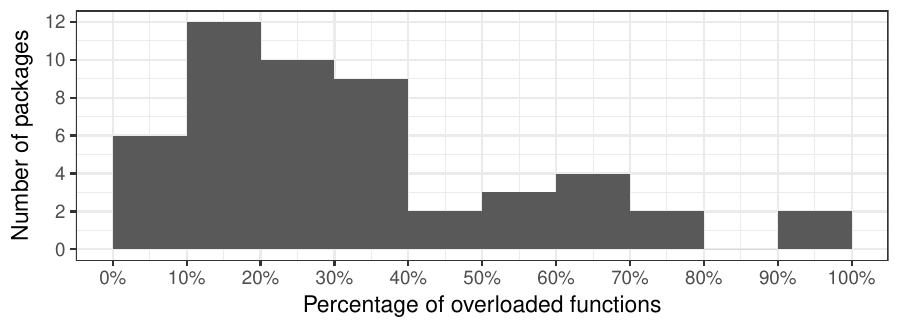}
      \vspace{-3mm}
  }

  \noindent\subfloat[Function overloads by category]{
    \label{fnover}
      \includegraphics[width=0.75\textwidth]{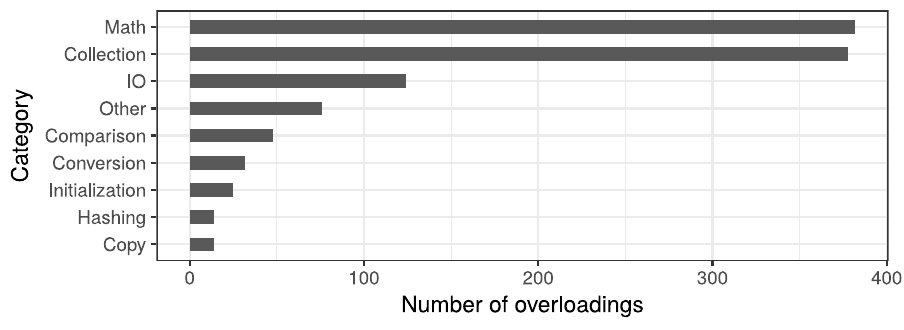}
      \vspace{-3mm}
  }
  \caption{Function overloadings}
\end{figure}

To address the question of what are overloaded, I manually categorized the
top 20th quantile of overloaded functions (128 out of 641) into 9 groups.
Fig.~\ref{fnover} depicts how many times functions from each group is
overloaded.  Multiple dispatch is used heavily to overload mathematical
operators, like addition or trigonometric functions. Libraries overload
existing operators to work with their own types, providing natural
interfaces and interoperability with existing code. Examples include
Calculus, which overloads arithmetic to allow symbolic expressions; and
ForwardDiff, which can compute numerical derivatives of existing code using
dual numbers that act just like normal values.  Collection functions also
are widely overloaded.  Many libraries have collection-like objects, and by
overloading these methods they can use their collections where Julia expects
any abstract collection.  However, Julia's interfaces are only defined by
documentation, as a result of its dynamic design. The \c{AbstractArray}
interface can be extended by any struct, and it is only suggested in the
documentation that implementations should overload the appropriate methods.
Use cases for math and collection extension are easy to come by, so their
prevalence is expected. However, the lack of overloads in other
categories illustrates some surprising points. For example, the large number
of IO, math, and collection overloads (which implement variations on
\c{tostring}) suggest a preponderance of new types.  However, few overloads
to compare, convert, or copy  are provided.

\begin{figure}[H]
  \vspace{1.5em}
\begin{tabular}{ll}
\begin{minipage}{.5\textwidth}
\includegraphics[width=0.9\columnwidth]{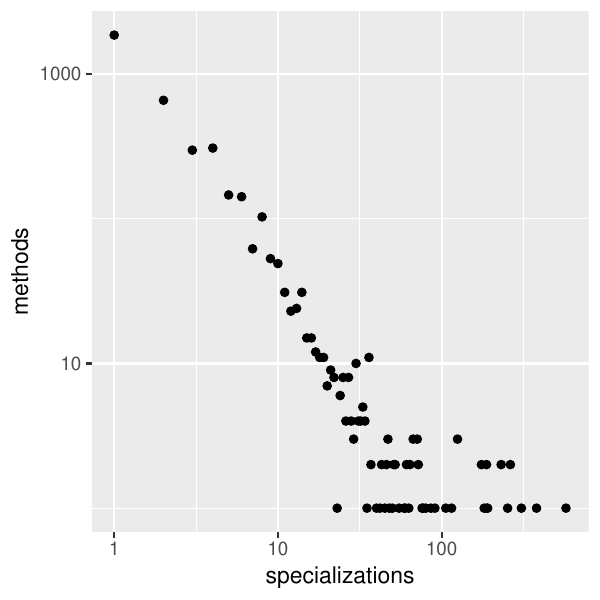}
\caption{Number of specializations per method}
\label{fig:SPM}
\end{minipage}&\begin{minipage}{.5\columnwidth}
\includegraphics[width=0.9\textwidth]{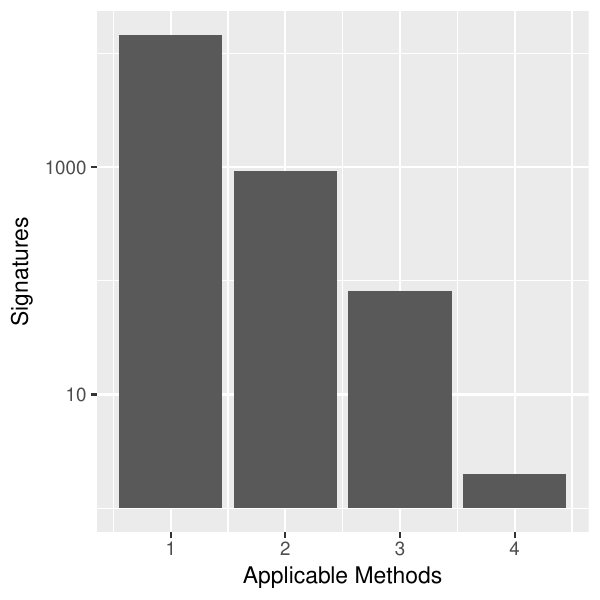}
\caption{Applicable methods per call signature}
\label{fig:AMCS}
\end{minipage}\end{tabular}
\end{figure}

\subsection{Specializations}

Figure~\ref{fig:SPM} gives the number of specializations per method recorded
dynamically on our corpus. The data uses strict eliminations, so that the
results from different packages can be summed without duplicate functions.
The distribution has a heavy tail, which shows that programmers actually
write methods that can be very polymorphic. Note that polymorphism is not in
contradiction with type stability, since a method called with different
tuples of argument types across different call sites can be type stable for
each of its call sites.
Conversely, 46\% of the methods have only been specialized once after
running the tests. Many methods are thus used monomorphically: this hints that
a number of methods may have a type specification that prevent polymorphism,
which means that programmers tend to think of the concrete types they want
their methods applied to, rather than only an abstract type specification.

Figure~\ref{fig:AMCS} corroborates this hypothesis. It represents the number
of applicable methods per call signature. A method is applicable if the
tuple of types corresponding to the requirements for its arguments is a
supertype of that of the actual call. This data is collected on dynamic
traces for functions with at least two methods. 93\% of the signatures can
only dispatch to one method, which strongly suggests that methods tend to be
written for disjoint type signatures. As a consequence it shows that the
specificity rules, used to determine which method to call, boil down to
subtyping in the vast majority of cases.

\subsection{Impact on performance}

\begin{figure}[!t]
\includegraphics[width=1\textwidth]{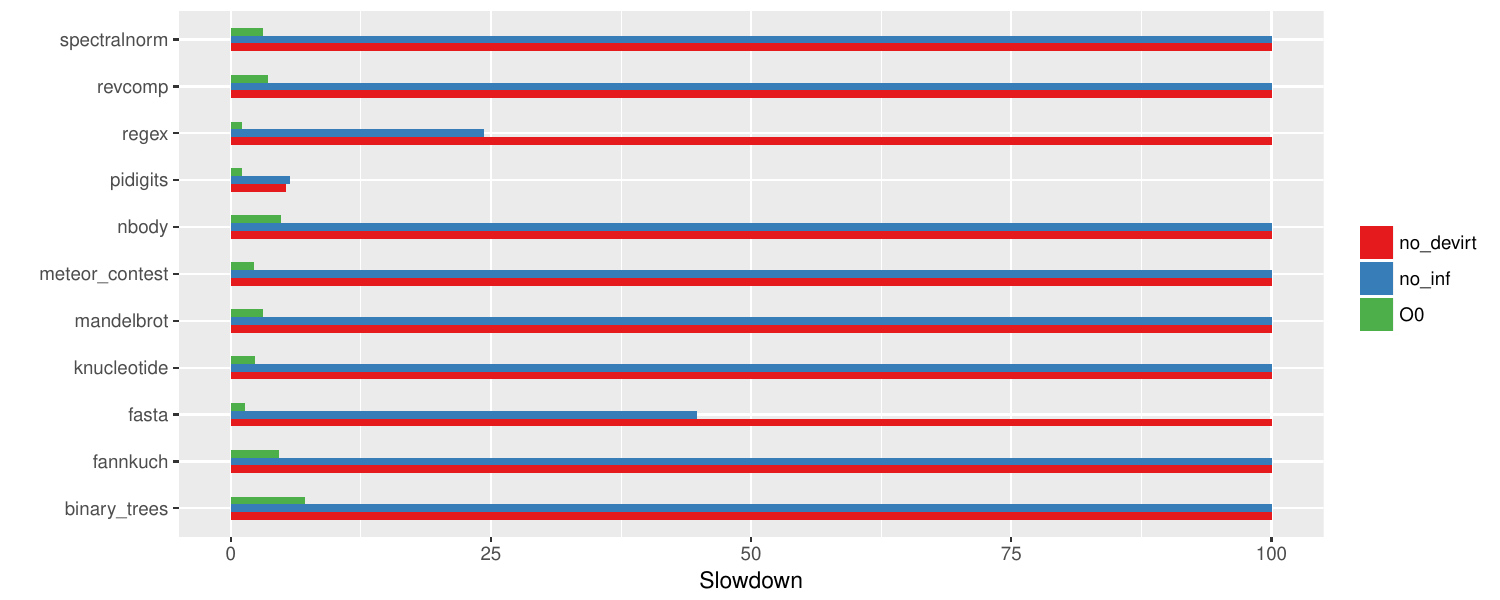}
\caption{Optimization and performance}\label{perfer}
\end{figure}

Fig.~\ref{perfer} illustrates the impact on performance of LLVM
optimizations, type inference and devirtualization. By default Julia uses
LLVM at optimization level \code{O2}. Switching off all LLVM optimizations
generates code between 1.1x and 7.1x slower.  Turning off
type inference means that method are specialized correctly but all
internal operations will be performed on values of type \code{Any}.
Functions that have only a single method may still be devirtualized and
dispatched to. The graph is capped at 100x slowdown. The actual
slowdowns range between 5.6x and 2151x. Lastly, turning off devirtualization
implies that no inlining will be performed and all function calls are
dispatched dynamically. The slowdowns range between 5.3x and 1905x.

Obviously, Julia was designed to be optimized with type information. These
results suggest that performance of fully dynamic code is rather bad.  It is
likely that if users were to write more dynamic code, some of the techniques
that have proved successful for other dynamic languages could be ported to
Julia. But clearly, the current implementation crucially relies on code
being type stable and on devirtualization and inlining. The impact of the
LLVM optimizations is small in comparison.

\chapter{Subtyping in Julia}\label{ch:juliasub} 

Julia's key relation over types is subtyping. Every method invocation is
resolved using subtyping both to determine which methods could apply as well
as to figure out which is the most specific. Consequently, subtyping is
critical both for the semantics of Julia itself and for programmers reasoning
about Julia. 

Unfortunately, while I have previously formalized the \emph{algorithm} that
Julia uses~\cite{oopsla18b} the system has proven theoretically intractable.
As will be shown later in this chapter Julia's subtyping relation is very
complex and proving meaningful properties about it has proven elusive. Subtyping
in Julia is supposed to be based on nominal subtyping---that if the set of
values that type $A$ describes is a subset of the set of values of type $B$
then $A$ is a subtype of $B$ and vice versa. Proving even this simple fact
for a practical subset of Julia is challenging.

As I were wrestling with this complexity a question came up: if subtyping is
this complex then is it even decidable? Julia has bounded quantified types
and decidable subtyping in such a setting would be the exception rather than
the rule~\cite{Grigore:2017:JGT:3009837.3009871,hu2019decidability}. As it turns
out, subtyping in Julia is also provably undecidable; the remainder of this section
will be devoted to proving that fact and discussing how I then accommodate 
this subtyping relation.

\section{Related Work}

Subtyping is key for a language with multiple dispatch.
Subtyping is used to decide which methods might be called at a given site
or whether a given invocation is safe or not. Static typing for a lanugage with
multiple dispatch must then rely extensively on the subtyping relation. Moreover,
the decidability of subtyping then determines whether the type system as a whole
is decidable.

Parametric polymorphism is the usual pain point; it is easy to create type
languages for which subtyping is very hard or impossible to decide with
parametric polymorphism. Languages with multiple dispatch differ on whether
parametric polymorphism is supported or not. Most previous efforts focused on
non-polymorphic types, such as Cecil~\cite{Chambers94}, Typed
Clojure~\cite{Tobin16}, and MultiJava~\cite{clifton2000multijava}. Subtyping
is used to check that classes implement all of the required methods of their
supertypes. The subtype relations themselves are defined over covariant tuples
and discrete unions. Approaches that combine multiple dispatch with parametric
polymorphism are more involved. 

Mini-Cecil~\cite{Litvinov98,Litvinov03} is one example of a language with both
universal polymorphism and parametric multiple dispatch. In Mini-Cecil,
universal types have only top-level quantifiers.
Fortress~\cite{Allen11}, in addition, supports arrow types, and
internally uses both universal and existential types with top-level
quantifiers.
Mini-Cecil and Fortress both use a constraint generation strategy to resolve
subtyping; they support union and intersection types but do not provide
distributivity ``in order to simplify constraint
solving''~\cite{Litvinov03}.  For Mini-Cecil typechecking is
decidable. Fortress argued decidability based on~\cite{Castagna94}, though
no proof is provided.

In type systems with bounded existential types, as well as type systems with
nominal subtyping and variance, decidability of subtyping has been a major
concern~\cite{Kennedy07,10.1007/978-3-642-10672-9_10}. Pierce demonstrated
that even a small subset of System $F_{\leq}$ is undecidable~\cite{Pierce92},
demonstrating how easy it is to accidentally be undecidable even with very
simple polymorphic type systems.

The decidability of subtyping in practical languages has been extensively
studied. For example, Java was shown to be undecidable by
Grigore~\cite{Grigore:2017:JGT:3009837.3009871} and Scala's current type
system is undeicdable~\cite{hu2019decidability}. It is relatively easy to
accidentally introduce an undecidability into subtyping.

Usually undecidability is not a practical problem for a language. If
undecidability arises from features that are uncommonly used, as seen in
Java~\cite{tate2011taming}, then few programmers are likely to run into
programs that fail to compile. Moreover, when undecidability is readily
accessible it then became part of the practice of software engineering for the
language in question, as seen in C++~\cite{bagnara2022coding}. Exploitation of
undecidability is also possible: theoretical results about Java have produced
encodings of increasingly complex language grammars~\cite{gil2019fling} into
the type system. Undecidabilities are managable in practice so long as either
the execution model is easy to understand or the critical features or patterns
are not typically used.

\section{Formalization of Subtyping}
\renewcommand{\t}{\ensuremath{\tau}\xspace}
\newcommand{\tytext}[1]{\ensuremath{\text{\texttt{#1}}}}
\newcommand{\tupletyp}[1]{\tytext{Tuple}\{#1\}}
\newcommand{\uniontyp}[1]{\tytext{Union}\{#1\}}
\newcommand{\cstrt}[2]{\tytext{#1}\{#2\}}
\newcommand{\wheret}[2]{\ensuremath{#1\;\tytext{where}\;#2}}

Julia's type language is deceptively simple. The language has a number
of basic type forms including:
\begin{itemize}
  \item tag types such as \jlinl{Int} or \jlinl{Rational} which are simple inhabitants of an explicit subtyping lattice and can be parameterized such as \jlinl{Vector\{Int\}},
  \item tuple types \jlinl{Tuple\{Int, Int\}},
  \item untagged union types \jlinl{Union\{Int, String\}},
  \item and bounded existential types \jlinl{Vector\{T\} where Nothing <: T <: Int}
\end{itemize}

The first three structures are relatively straightforward; the addition of
bounded existential types makes the subtyping relationship much more
complicated.

I will not be describing Julia's subtyping relation in depth in this work;
instead I will be relying on the formalization of Belyakova et.
al.~\cite{yuliasubtyping}. This formalism captures the overwhelming majority
of subtyping for Julia's type language (primarily excluding variadic length
tuples). In order to describe the problem with subtyping in Julia I will be using
an excerpt of that formalism, shown in figure~\ref{fig:jlsubex}. 

I will only describe the part of the formalism required to understand the
proof of undecidability. Judgments are of the form $E \vdash \t <: \t' \vdash
E'$, which should be read as ``$\t$ is a subtype of $\t'$ against the
environment $E$ producing $E'$.''

\begin{figure}
\begin{mathpar}
\inferrule[Refl]
    {  }
    { E \vdash \t <: \t \vdash E }

\inferrule[Union]
    {  }
    { E \vdash \uniontyp{} <: \t \vdash E }

\inferrule[Tuple]
    { E \vdash a_1 <: a'_1 \vdash E_1 \\ \ldots \\ E_{n-1} \vdash a_n <: a'_n \vdash E_n}
    { E \vdash \tupletyp{a_1, \ldots, a_n} <: \tupletyp{a'_1, \ldots, a'_n} \vdash E_n }

\inferrule[App\_Inv]{ n \leq m \\ E_0 = \mathit{add}(E, \text{barrier}) \\ 
    \forall\, 0 < i \leq n, \; E_{i-1} \vdash a_i <: a'_i \vdash E'_i \wedge E'_i \vdash a'_i <: a_i \vdash E_i \\
    }
    { E \vdash \cstrt{name}{a_1, \ldots, a_m} <: \cstrt{name}{a'_1, \ldots, a'_n} \vdash \mathit{del}(\text{barrier}, E_i)}

\inferrule[L\_Intro]
    { \mathit{add}(E, ^L T_{\t_1}^{\t_2}) \vdash \t <: \t' \vdash E'}
    { E \vdash (\wheret{\t}{\t_1 <: T <: \t_2}) <: \t' \vdash \mathit{del}(T, E') }

\inferrule[R\_Intro]
    { \mathit{add}(E, ^R T_{\t_1}^{\t_2}) \vdash \t <: \t' \vdash E' \\
      \mathit{consistent}(T, E') }
    { E \vdash \t <: (\wheret{\t'}{\t_1 <: T <: \t_2}) \vdash \mathit{del}(T, E') }

\inferrule[L\_Right]{
  \mathit{search}(T, E) = {^L T}_{l}^{u} \\
  E \vdash l <: \t \vdash E' }
  { E \vdash \t <: T \vdash E' }

\inferrule[L\_Left]{
  \mathit{search}(T, E) = {^L T}_{l}^{u} \\
  E \vdash u <: \t \vdash E' }
  { E \vdash T <: \t \vdash E' }

\inferrule[R\_Right]{
  \mathit{search}(T, E) = {^R T}_{l}^{u} \\
  (\mathit{is\_var}(\t) \wedge \mathit{search}(\t, E) = {^L S}_{l'}^{u'}) \implies \lnot \mathit{outside}(T, S, E) \\
  E \vdash \t <: u \vdash E' }
  { E \vdash \t <: T \vdash \mathit{upd}({^R T}_{\uniontyp{l, \t}}^{u}, E') }

\inferrule[R\_Left]{
  \mathit{search}(T, E) = {^R T}_{l}^{u} \\
  E \vdash l <: \t \vdash E' }
  { E \vdash T <: \t \vdash \mathit{upd}({^R T}_{l}^{t}, E') }

\inferrule[R\_L]{
  \mathit{search}(T_1, E) = {^R T_1}_{l_1}^{u_1} \\
  \mathit{search}(T_2, E) = {^L T_2}_{l_2}^{u_2} \\
  \mathit{outside}(T_1, T_2, E) \implies E \vdash u_2 <: l_2 \vdash E' \\
  E \vdash u_1 <: l_2 \vdash E''
}
    { E \vdash T_1 <: T_2 \vdash \mathit{upd}(^R {T_1}_{\uniontyp{T_1, l_1}}^{u_1}, E') }
\end{mathpar}
\caption{Julia subtyping (extract)}
\label{fig:jlsubex}
\end{figure}

Subtyping happens against an environment $E$ that carries the in-scope
variables and defines their variance; a variable in $E$ looks like $^S T_l^u$,
where $S$ is the side (left or right) it first appeared on, $l$ is the lower
bound, and $u$ is the upper bound. $E$ may also contain barriers (used to
indicate where I switch from a covariant to an invariant context). Left-side
variables have forall semantics, while right-side variables have existential
semantics.

The proof of undecidability relies on seven subtyping rules that work as follows:
\begin{itemize}
  \item \textsc{Refl}: Reflexivity of subtyping (every \t is a subtype of itself) is axiomatic within the formalism and
  makes no demands of or modifications to the environment $E$.
  \item \textsc{Union}: I do not reproduce the full generality of left-hand union subtyping
  as it is not required for the proof of undecidability. Instead, I simply capture that the 
  empty union is the bottom type (e.g. is a subtype of all other types \t). This follows from
  the original rule \textsc{Union\_Left}.
  \item \cbstart \textsc{Tuple}: Julia tuples have standard covariant subtyping; each
  element of the left-hand tuple $a_i$ is checked against the matching element in the right-hand tuple $a'_i$.
  The first element is checked against the original context $E$, begetting the context $E_1$, and so on, until all
  pairs of elements have ben checked. The final context $E_n$ is then returned.
  \item \textsc{App\_Inv}: In contrast, constructor applications (such as \jlinl{Vector\{T\}})
  are invariant on their arguments where tuples were covariant. The system implements this by appending
  a barrier element to the context $E$, then checking that for each pair $a_i$ and $a'_i$ that
  first $E_{i-1} \vdash a_i <: a'_i \vdash E'_i$ and then that $E'_i \vdash a'_i <: a_i \vdash E_i$.
  Checking both directions of subtyping ensures that the final environment $E_i$ requires that $a'_i$
  and $a_i$ are equal.
  \item \textsc{L\_Intro}: Left-hand addition of variables is straightforward. The system
  checks that the body of the introduction form \t is a subtype of the right-hand side \t'
  against the environment extended with $T$, $\mathit{add}(E, ^L T_{\t_1}^{\t_2})$. Once the
  subtyping relation has been established into the environment $E'$ this new variable is then
  removed before returning using $\mathit{del}$.
  \item \textsc{R\_Intro}: Right-hand addition works analogously, albeit with the introduction
  of the \textit{consistent} metafunction. \textit{consistent} simply ensures that the lower bound
  of $T$ (which may have changed in the interim) is a subtype of the upper bound in $E'$.
  \item \textsc{L\_Right}: The \textsc{R/L\_Right/Left} rules handle cases where a ``bare'' type 
  variable appears in the subtyping relationship. The L/R refers to where the variable was originally 
  introduced (on the left or right hand side, respectively), while the Right/Left refers to where the variable
  appeared.

  In the case of \textsc{L\_Right}, this was a variable that first appeared on the left-hand-side
  that is now appearing on the right hand side. Left-hand variables have forall semantics, so I need
  to ensure that all possible instantiations will be consistent with the bound implied by this subtyping
  relationship. As a result, to show that $E \vdash \t <: T \vdash E'$ I need to show that the
  lower bound of $T$ in $E$, $l$, is a subtype of \t using $E \vdash l <: \t \vdash E'$.
  \item \textsc{L\_Left}: The left-left case is then analogous to the left-right case discussed above.
  Instead of checking that the lower bound is consistent with the left-hand type expression, I instead check
  \item \textsc{R\_Right}: Right-hand type variables are existentially, rather than universally, quantified.
  As a result, I modify right-hand variables to ensure that they are consistent with the relation that I re trying
  to prove.

  Trivially, I need to ensure that the existing upper bound is consistent with this new lower bound, which is checked
  with $E \vdash \t <: u \vdash E'$. I also need to ensure that if the left-hand-side is a left-introduced variable
  (which is checked using $\mathit{is\_var}(\t)$ and $\mathit{search}(\t, E) = {^L S}_{l'}^{u'})$) that it shares the same variance
  as I do. This prevents right-hand variables that are introduced inside of a invariant context from being instantiated with lower bound begotten from covariantly-quantified variables outside of that context.

  For example, consider the statement \c{(Vector\{T\} where T<:Real) <: (Vector\{U where U<:Real\})}. Intuitively, this is a false proposition: the right-hand vector is one of heterogenous real numbers, whereas the left-hand vector is of a homogenous single (albeit unspecfied) type of real number. This is equivalent to ${^L T}^\texttt{Real}\ \mathit{barrier}\ {^R U}^\texttt{Real} \vdash \texttt{Vector}\{T\} <: \texttt{Vector}\{U\}$. If I allowed \textsc{R\_Right} to apply in this case then the judgment holds if ${^L T}^\texttt{Real}\ \mathit{barrier}\ {^R U}^\texttt{Real} \vdash \texttt{Real} <: \texttt{Real}$ (by applying \textsc{L\_Left}) which is trivially true. Thus, I cannot allow \textsc{R\_Right} to hold in invariant contexts since I cannot instantiate an invariant variable to be equal
  to a covariant one. Instead, the rule \textsc{R\_L} needs to be used.
  \item \textsc{R\_Left}:  In the right-left case, here, I have a right-hand variable $T$ on the left-hand side of a subtyping judgment. I need to ensure that the bounds on $T$ remain consistent (by checking that $E \vdash l <: \t \vdash E'$).
  If all of these are satisifed, then our resulting environment is $\mathit{upd}({^R T}_{l}^{\t}, E')$,
  or $E'$ updated with $T$ upper bounded by the new $\t$.
  \item \textsc{R\_L}: If I are checking subtyping between two variables
  that are both on the ``wrong'' side (e.g. a right-introduced variable on the left and vice versa)
  I need to perform a more complex procedure. $\mathit{outside}(T_1, T_2, E)$ is true if $T_1$ precedes
  $T_2$ in $E$ and is separated by a barrier; that is, if $T_2$ is invariant and $T_1$ is an earlier covariant
  definition. If this is the case I need to constrain $T_1$ and $T_2$ to be equal by checking that
  the upper bound of $T_2$ is a subtype of the lower bound of $T_1$. In any case, I also need to ensure
  that $u_1$ is a subtype of $l_2$.

  A brief example is in the same test I used for \textsc{R\_Right} where I are testing if a quantified vector is a subtype of a
  vector of quantified values. In this case, I need to check that the upper bound of $U$, \texttt{Real}, is a subtype of
  the (implicit) lower bound of $T$, $\uniontyp{}$. This is false and causes the result of subtyping in that example to be correct.
\end{itemize}
\cbend

\newcommand{\fsub}{$F_{\leq}$\xspace} 
\newcommand{\fsubd}{$F_{\leq}^D$\xspace} 
\newcommand{\fsubp}{$F_{\leq}^P$\xspace} 

\section{Proof of Undecidability}
Our proof of undecidability proceeds by reduction of subtyping in one of the
intermediate deterministic fragments of System \fsub as described by
Pierce~\cite{pierce1992bounded}  System \fsubp, to Julia subtyping. I do this
by translating System \fsubp judgements to Julia subtyping judgements and
vice versa. At a high level, the translation works by flipping upper-bounded
universal quantification in \fsubp to lower-bounded existential quantification
in Julia in the opposite order. I will begin by describing System \fsubp and
then our reduction from subtyping in System \fsubp to subtyping in Julia.

\subsection{System \fsubp}

System \fsubp is a restricted version of Pierce's System \fsub without
arrow types and with types that carry explicit information about whether they 
appear in negative or positive (left or right) position. I provide the grammar
and subtyping rules for System \fsubp in figure~\ref{fig:fsubp}. Following Pierce, 
$\Gamma^-(\alpha) = \t$ holds if $\alpha \leq \t \in \Gamma$.

\newcommand{\Alt}{~\vert~}
\newcommand{\Altf}{\faded{\Alt}}
\begin{figure}

  \footnotesize
  \[
  \begin{array}{ccl@{\qquad}l}
      \\ \t^+ & ::= & & \text{\emph{Positive types}}
      \\ &\Alt& \mathit{Top}       & \text{Top type}
      \\ &\Alt& \lnot \t^-  & \text{Negative negation}
      \\ &\Alt& \forall \alpha\leq\t^-.\t^+  & \text{Positive quantification}
  \end{array}
  \begin{array}{ccl@{\qquad}l}
      \\ \t^- & ::= & & \text{\emph{Negative types}}
      \\ &\Alt& \alpha       & \text{Type variable}
      \\ &\Alt& \lnot \t^+  & \text{Positive negation}
      \\ &\Alt& \forall \alpha\leq\mathit{Top}.\t^-  & \text{Negative quantification}
  \end{array}
  \]

~\\
\begin{mathpar}
\inferrule[PTop]{ }{\Gamma^- \vdash \t^- \leq \mathit{Top}}

\inferrule[PVar]{ \Gamma^- \vdash \Gamma^-(\alpha) \leq \t^+ }{\Gamma^- \vdash \alpha \leq \t^+}

\inferrule[PAll]{ \Gamma^-, \alpha \leq \phi^- \vdash \sigma^- \leq \tau^+}{\Gamma^- \vdash \forall \alpha\leq\mathit{Top}.\sigma^- \leq \forall \alpha \leq \phi^-.\tau^+}

\inferrule[PNeg]{ \Gamma^- \vdash \t^- \leq \sigma^+ }{ \Gamma^- \vdash \lnot \sigma^+ \leq \lnot \t^- }
\end{mathpar}
\caption{Subtyping for System \fsubp}
\label{fig:fsubp}
\end{figure}

Pierce showed that subtyping in System \fsubp is turing complete by reducing
reduction in a rowing machine to  subytping in a further reduced version
called System \fsubd. I focus here on System \fsubp as each rule is simpler
compared to System \fsubd and begets a more consice translation to and from
Julia subtyping terms.

\subsection{Reduction from System \fsubp to Julia subtyping}

\newcommand{\jltrans}[1]{\ensuremath{\llbracket#1\rrbracket}} To show turing
completeness of subtyping in Julia I show that that there exists a
translation, denoted $\jltrans{\t}$, such that for any System \fsubp
environment $\Gamma^-$ and types $\t, \sigma$ there exists a resultant Julia
environment $E$ where  $\Gamma^- \vdash \t \leq \sigma \iff \jltrans{\Gamma^-}
\vdash \jltrans{\sigma} <: \jltrans{\t} \vdash E$.

The translation needs to convert environments $\Gamma^-$ into Julia subtyping
environments $E$ as well as both positive and negative types $\t^+$ and
$\t^-$. Our translation rules are depicted in~\ref{fig:fsubtojl}. I use
the nominal type constructors $\cstrt{Neg}$ and $\cstrt{All}$ simply to
transition into distinguishable invariant contexts; their definitions
are simply \jlinl{struct Neg{T} end} and \jlinl{struct All{T} end}.

\begin{figure}

  \footnotesize
  \[
  \begin{array}{ccl@{\qquad}l}
      \\ \jltrans{\Gamma^-, \alpha \leq \phi^-} & ::= & \jltrans{\Gamma^-} {^L \alpha_l^u}  {^R \beta_\alpha^\alpha} & \text{\emph{Environments}}
      \\ \jltrans{\cdot} & ::= & \cdot & 
      \\ 
      \\ \jltrans{\mathit{Top}} & ::= & \uniontyp{} & \text{\emph{Positive types}}
      \\ \jltrans{\lnot \t^-} & ::= & \cstrt{Neg}{\jltrans{\t^-}} & 
      \\ \jltrans{\forall \alpha\leq\t^-.\t^+} & ::= & \wheret{\tupletyp{\cstrt{All}{\alpha}, \jltrans{\t^+}}}{\jltrans{\t^-} <: \alpha <: \texttt{Any}} &
      \\
      \\ \jltrans{\alpha} & ::= & \alpha & \text{\emph{Negative types}}
      \\ \jltrans{\lnot \t^+} & ::= & \wheret{\cstrt{Neg}{\alpha}}{\jltrans{\t^+} <: \alpha <: \texttt{Any}} & 
      \\ \jltrans{\forall \alpha\leq\mathit{Top}.\t^-} & ::= & \wheret{\tupletyp{\cstrt{All}{\alpha}, \jltrans{\t^-}}}{\uniontyp{} <: \alpha <: \texttt{Any}} &
  \end{array}
  \]
\caption{Translation from System \fsubp to Julia}
\label{fig:fsubtojl}
\end{figure}

I assume that variable names in System \fsubp and Julia are equivalent
for the purposes of clarity and to avoid confusion; I will be referring
to variables using the System \fsubp terminology $\alpha$ and $\beta$. Note that
I implicitly map System \fsubp variables to a pair of Julia variables; I treat
the System \fsubp environment $\Gamma^-$ as mapping both of these Julia variables
to their originating System \fsubp bound.

The translation of the environment is simple: for a given System \fsubp variable
$\alpha$ it creates the matching Julia type variables $\alpha$ and $\beta$ and
bounds the right-hand variable $\beta$ to be equal to $\alpha$. 

Positive and negative type translation is simple: I take a universally quantified
term with an upper bound and flip it into an existentially quantified term with a 
lower bound. Additionally, I introduce a tuple that contains an invariant reference
cell containing the newly-introduced variable as its first element and the translated
version of the type being quantified over as the second element. Translation for 
negated types follows from the original System \fsub definition of negation wherein 
$\lnot \tau \equiv \forall \alpha \leq \tau.\alpha$; I simply translate the negation
as if it were explicitly written out with this definition.

\paragraph{Forward}
I begin by showing that $\Gamma^- \vdash \t^- \leq \sigma^+ \implies \jltrans{\Gamma^-}
\vdash \jltrans{\sigma^+} <: \jltrans{\t^-} \vdash E$. To do so, I proceed by induction
on the System \fsub rule used to derive the premise.
\begin{itemize}
  \item \textsc{PTop}; want to show that $\jltrans{\Gamma^-} \vdash \jltrans{\mathit{Top}} <: \jltrans{\t^-} \vdash E'$ for some $E'$. This follows trivially from rule \textsc{Top} as $\jltrans{\mathit{Top}}$ is $\uniontyp{}$, the bottom type.
  \item \textsc{PVar}; want to show that $\jltrans{\Gamma^-} \vdash  \jltrans{\t^+} <: \jltrans{\alpha} \vdash E'$.
  I know that $\Gamma^- \vdash \Gamma(\sigma) \leq \t^+$ so by the IH there is some $E'$ such that $\jltrans{\Gamma^-} \vdash  \jltrans{\t^+} <: \jltrans{\Gamma^-(\alpha)} \vdash E'$.
  Let $\sigma = \Gamma^-(\alpha)$; note that this implies $\alpha \leq \sigma \in \Gamma^-$, so $\jltrans{\Gamma^-(\alpha)} = \jltrans{\sigma}$.
  Therefore, by \textsc{L\_Right}, $\jltrans{\Gamma^-} \vdash \jltrans{\t^+} <: \alpha \vdash E'$ since $\jltrans{\Gamma^-} \vdash \jltrans{\t^+} <: \jltrans{\sigma} \vdash E'$
  because by definition of environment translation $^L \alpha_{\jltrans{\sigma}}^\texttt{Any} \in \jltrans{\Gamma^-}$.
  Finally, $\jltrans{\Gamma^-} \vdash  \jltrans{\t^+} <: \jltrans{\alpha} \vdash E'$ as $\jltrans{\alpha} = \alpha$.
  \item \textsc{PAll}: 
  \newcommand{\extctx}{{^L\alpha_{\jltrans{\phi^-}}^{\texttt{Any}}}, {^R\beta_{\uniontyp{}}^{\texttt{Any}}}}
  \newcommand{\extctxp}{{^L\alpha_{\jltrans{\phi^-}}^{\texttt{Any}}}, {^R\beta_{\uniontyp{\alpha}}^{\texttt{Any}}}}
  \newcommand{\extctxpp}{{^L\alpha_{\jltrans{\phi^-}}^{\texttt{Any}}}, {^R\beta_{\uniontyp{\alpha}}^{\alpha}}}
  I want to show $\jltrans{\Gamma^-} \vdash \jltrans{\forall \alpha \leq \phi^-.\tau^+} <: \jltrans{\forall \beta\leq\mathit{Top}.\sigma^-} \vdash \jltrans{\Gamma^-}$.
  Equivalently, applying the translation, I want to show that $\jltrans{\Gamma^-} \vdash \wheret{\tupletyp{\cstrt{All}{\alpha}, \jltrans{\tau^+}}}{\jltrans{\phi^-} <: \alpha <: \texttt{Any}} <: \wheret{\tupletyp{\cstrt{All}{\beta}, \jltrans{\sigma^-}}}{\uniontyp{} <: \beta <: \texttt{Any}} \vdash \jltrans{\Gamma^-}$. To show this, I must apply L\_Intro and R\_Intro, making our goal $\jltrans{\Gamma^-}, \extctx \vdash \tupletyp{\cstrt{All}{\alpha}, \jltrans{\tau^+}} <: \tupletyp{\cstrt{All}{\beta}, \jltrans{\sigma^-}} \vdash \jltrans{\Gamma^-}, \extctxpp$. I are left with two subgoals after applying Tuple.
  \begin{itemize}
    \item $\jltrans{\Gamma^-}, \extctx \vdash \cstrt{All}{\alpha} <: \cstrt{All}{\beta} \vdash \jltrans{\Gamma^-}, \extctxpp$; which I resolve by applying App\_Inv, again begetting two subgoals.
    \begin{itemize}
      \item $\jltrans{\Gamma^-}, \extctx \vdash \alpha <: \beta \vdash \jltrans{\Gamma^-}, \extctxp$, the forward direction of subtyping. I get $\jltrans{\Gamma^-}, \extctx \vdash \alpha <: \texttt{Any} \vdash \jltrans{\Gamma^-}, \extctxp$ by applying R\_Right, then
      $\jltrans{\Gamma^-}, \extctx \vdash \texttt{Any} <: \texttt{Any} \vdash \jltrans{\Gamma^-}$ by L\_Left.
      \item The converse case, $\jltrans{\Gamma^-}, \extctxp \vdash \beta <: \alpha \vdash \jltrans{\Gamma^-}, \extctxpp$; I first apply R\_Left to get $\jltrans{\Gamma^-}, \extctxp \vdash \uniontyp{} <: \alpha \vdash \jltrans{\Gamma^-}, \extctxp$, which holds trivially.
    \end{itemize}
    \item By the IH, $\jltrans{\Gamma^-, \alpha\leq\phi^-}, \extctxpp \vdash \jltrans{\tau^+} <: \jltrans{\sigma^-} \vdash \jltrans{\Gamma^-}, \extctxpp$.
  \end{itemize}
  \item \textsc{PNeg}: 
  I want to show $\jltrans{\Gamma^-} \vdash  \jltrans{\lnot \t^-} <: \jltrans{\lnot \sigma^+} \vdash \jltrans{\Gamma^-}$, or, equivalently under translation, that $\jltrans{\Gamma^-} \vdash \cstrt{Neg}{\jltrans{\t^-}} <: \wheret{\cstrt{Neg}{\alpha}}{\jltrans{\sigma^+} <: \beta <: \texttt{Any}} \vdash \jltrans{\Gamma^-}$.
  \newcommand{\negctx}{{^R \beta_{\jltrans{\sigma^+}}^{\texttt{Any}}}}
  \newcommand{\negctxp}{{^R \beta_{\uniontyp{\jltrans{\sigma^+}, \jltrans{\t^-}}}^{\texttt{Any}}}}
  I show this by applying rule \textsc{R\_Intro} begetting $\jltrans{\Gamma^-}, \negctx \vdash \cstrt{Neg}{\jltrans{\t^-}} <: \cstrt{Neg}{\beta} \vdash E$. Note that while $E$ contains $\beta$ it will be removed by \textsc{R\_Intro} in the output environment.
  Rule \textsc{App\_Inv} then applies, giving us two goals, one for each direction.
  \begin{itemize}
    \item In the forward direction I want to show that $\jltrans{\Gamma^-}, \negctx \vdash \jltrans{\t^-} <: \beta \vdash E'$ for some $E'$. Only rule \textsc{R\_Right} can apply, the application of which requires
    us to show $\jltrans{\Gamma^-}, \negctx \vdash \jltrans{\t^-} <: \texttt{Any} \vdash \jltrans{\Gamma^-}, \negctx$ which holds trivially by \textsc{Top}. Therefore $E'$ must be $\jltrans{\Gamma^-}, \negctxp$.
    \item In the backwards direction I want to show that $\jltrans{\Gamma^-}, \negctxp \vdash \beta <: \jltrans{\t^-} \vdash E$. I must do so by applying rule \textsc{R\_Left}, which then
    requires us to show that $\jltrans{\Gamma^-}, \negctxp \vdash \uniontyp{\jltrans{\sigma^+}, \jltrans{\t^-}} <: \jltrans{\t^-} \vdash E$. Applying rule \textsc{Union\_Left} gives us two ensuing cases:
    \begin{itemize}
      \item First, I need to show $\jltrans{\Gamma^-}, \negctxp \vdash \jltrans{\sigma^+} <: \jltrans{\t^-} \vdash E''$. I apply the weakening lemma to simplify this to
      $\jltrans{\Gamma^-} \vdash \jltrans{\sigma^+} <: \jltrans{\t^-} \vdash E'''$, which holds by the IH.
      \item Next, I need to show that $E''' \vdash \jltrans{\t^-} <: \jltrans{\t^-} \vdash E'''$. I trivially apply reflectivity to show the rule and conclude that $E = E'''$.
    \end{itemize}
  \end{itemize}
\end{itemize}

\paragraph{Reverse}
The reverse direction consists of showing that $\jltrans{\Gamma^-} \vdash \jltrans{\sigma^+} <: \jltrans{\t^-} \vdash E \implies \Gamma^- \vdash \t^- \leq \sigma^+$. I proceed by rule induction on the Julia subtyping
rule used to derive $\jltrans{\Gamma^-} \vdash \jltrans{\sigma^+} <: \jltrans{\t^-} \vdash E$ while performing case analysis of the structure of the translated type. I begin by case analyzing on the left-hand side $\t^-$ to see if it is a variable or not, then
on the right-hand-side $\sigma^+$.
 eing translated which then uniquely identifies the rule being applied.
\begin{itemize}
  \item $\t^- = \alpha$. I know that $\jltrans{\Gamma^-} \vdash \jltrans{\sigma^+} <: \jltrans{\alpha} \vdash \jltrans{\Gamma^-}$ or equivalently $\jltrans{\Gamma^-} \vdash \jltrans{\sigma^+} <: \alpha \vdash \jltrans{\Gamma^-}$. 
  If $\alpha$ is a left-hand variable, then I know that $^L \alpha_{\jltrans{\sigma^-}}^{\texttt{Any}}$ is in the environment as \textsc{L\_Right} must have been used to resolve it and by definition of environment translation $\alpha\leq\sigma^- \in \Gamma$. 
  I then know that $\jltrans{\Gamma^-} \vdash \jltrans{\sigma^+} <: \jltrans{\sigma^-} \vdash \jltrans{\Gamma^-}$ and by the IH that $\Gamma^- \vdash \sigma^- \leq \sigma^+$ or that $\Gamma^- \vdash \Gamma^-(\alpha) \leq \sigma^+$. If $\alpha$ is a right-hand
  variable then I know that $^L \beta_{\jltrans{\sigma^-}}^\texttt{Any} {^R \alpha_{\beta}^\beta} \in \jltrans{\Gamma^-}$ by definition of environment translation and that $\beta\leq\sigma^- \in \Gamma^-$. I must have used rule \textsc{R\_Right} to conclude
  this, so I then know that $\jltrans{\Gamma^-} \vdash \jltrans{\sigma^+} <: \beta \vdash E$ which then implies $\jltrans{\Gamma^-} \vdash \jltrans{\sigma^+} <: \jltrans{\sigma^-} \vdash E$ from which $\Gamma^- \vdash \sigma^- \leq \sigma^+$ follows.
  Then, $\Gamma^- \vdash \Gamma^-(\beta) \leq \sigma^+$ and thus $\Gamma^- \vdash \Gamma^-(\alpha) \leq \sigma^+$. 
  \item $\sigma^+ = \mathit{Top}$; therefore $\Gamma^- \vdash \t^- \leq \sigma^+$ holds iff $\Gamma^- \vdash \t^- \leq \mathit{Top}$ which holds by \textsc{PTop}.
  \item $\sigma^+ = \lnot \sigma^-$. First, I are given that $\jltrans{\Gamma^-} \vdash \cstrt{Neg}{\jltrans{\sigma^-}} <: \jltrans{\t^-} \vdash \jltrans{\Gamma^-}$. It follows that $\jltrans{\t^-}$ must be of the
  form $\cstrt{Neg}{\jltrans{\t^+}}$ as this judgment must have been concluded by first applying \textsc{L\_Intro} followed by \textsc{App\_Inv}; this can only occur when the constructor names match and therefore there must exist some $\t^+$
  such that $\t^- = \lnot \t^+$. Thus, I want to show that $\Gamma^- \vdash \lnot \t^+ \leq \lnot \sigma^-$ given $\jltrans{\Gamma^-} \vdash \cstrt{Neg}{\jltrans{\sigma^-}} <: \wheret{\cstrt{Neg}{\beta}}{\jltrans{\t^+} <: \beta <: \texttt{Any}} \vdash \jltrans{\Gamma^-}$.
  To conclude the latter \textsc{R\_Intro} must have been used, so I have $\jltrans{\Gamma^-}, {^R \beta_{\jltrans{\t^+}}{\texttt{Any}}} \vdash \cstrt{Neg}{\jltrans{\sigma^-}} <: \cstrt{Neg}{\beta} \vdash E$. Rule \textsc{App\_Inv} must
  have then been applied to find that there exists some environment $E'$ such that $\jltrans{\Gamma^-}, {^R \beta_{\jltrans{\t^+}}{\texttt{Any}}} \vdash \jltrans{\sigma^-} <: \beta \vdash E'$ (1) and
  $E' \vdash \beta <: \jltrans{\sigma^-} \vdash E$ (2). Rule \textsc{R\_Right} must have been used to resolve (1) so therefore $E' = \jltrans{\Gamma^-}, {^R \beta_{\uniontyp{\jltrans{\t^+}, \jltrans{\sigma^-}}}{\texttt{Any}}}$ 
  by applying \textsc{Any} to show the bound. Thus, to show $\jltrans{\Gamma^-}, {^R \beta_{\uniontyp{\jltrans{\t^+}, \jltrans{\sigma^-}}}{\texttt{Any}}} \vdash \beta <: \jltrans{\sigma^-} \vdash E$, we
  must have derived $\jltrans{\Gamma^-}, {^R \beta_{\uniontyp{\jltrans{\t^+}, \jltrans{\sigma^-}}}{\texttt{Any}}} \vdash \uniontyp{\jltrans{\t^+}, \jltrans{\sigma^-}} <: \jltrans{\sigma^-} \vdash E$ to apply \textsc{R\_Left}.
  Rule \textsc{Union\_Left} must have been applied with $\jltrans{\Gamma^-}, {^R \beta_{\uniontyp{\jltrans{\t^+}, \jltrans{\sigma^-}}}{\texttt{Any}}} \vdash \jltrans{\t^+} <: \jltrans{\sigma^-} \vdash E''$ as a precondition.
  Since $\beta$ is fresh, $\jltrans{\Gamma^-} \vdash \jltrans{\t^+} <: \jltrans{\sigma^-} \vdash \jltrans{\Gamma^-}$ then follows, and therefore by application of the IH $\Gamma^- \vdash \sigma^- \leq \t^+$. Thus,
  I can apply \textsc{PVar} to conclude that $\Gamma^- \vdash \lnot \t^+ \leq \sigma^-$ or $\Gamma^- \vdash \t^- \leq \sigma^+$.
  \item $\sigma^+ = \forall \alpha\leq\sigma^-.\t^+$. By similar reasoning to the prior case I can derive that $\t^{-} = \forall \beta.\phi$. Thus, we
  want to show that $\Gamma^- \vdash \lnot (\forall \beta\leq\mathit{Top}.\phi) \leq \forall \alpha\leq\sigma^-.\t^+$ given that 
  $\jltrans{\Gamma^-} \vdash \jltrans{\forall \alpha\leq\sigma^-.\t^+} <: \jltrans{\forall \beta\leq\mathit{Top}.\phi} \vdash E$. I 
  expand the translations to get $\jltrans{\Gamma^-} \vdash \wheret{\tupletyp{\cstrt{All}{\alpha}, \jltrans{\t^+}}}{\jltrans{\sigma^-} <: \alpha <: \texttt{Any}} <: \wheret{\tupletyp{\cstrt{All}{\beta}, \jltrans{\phi}}}{\uniontyp{} <: \beta <: \texttt{Any}} \vdash E$.
  I had to have applied \textsc{L\_Intro} and \textsc{R\_Intro} to get $G \vdash \tupletyp{\cstrt{All}{\alpha}, \jltrans{\t^+}} <: \tupletyp{\cstrt{All}{\beta}, \jltrans{\phi}} \vdash E$ where $G=\jltrans{\Gamma^-}, {^L \alpha_{\jltrans{\sigma^-}}^{\texttt{Any}}}, {^R \beta_{\jltrans{\phi}}^{\texttt{Any}}}$.
  To have concluded this, I need to have had some intermediate context $E'$ such that I can use \textsc{tuple} to conclude both $G \vdash \cstrt{All}{\alpha} <: \cstrt{All}{\beta} \vdash E'$ (1) and $E' \vdash \jltrans{\t^+} <: \jltrans{\phi} \vdash E$ (2). Furthermore,
  to have concluded (1), I had to have applied \textsc{App\_Inv} with intermediate context $E''$ such that $G \vdash \alpha <: \beta \vdash E''$ (1.1)  and $E'' \vdash \beta <: \alpha \vdash E'$. The same sequence as applies in the forward case must be used to conclude
  these judgements, so $E' = \jltrans{\Gamma^-} {^L \alpha_{\jltrans{\sigma^-}^{\texttt{Any}}}} {^R \beta_\alpha^\alpha}$. Note that $\jltrans{\Gamma^-, \alpha\leq\sigma^-}=E'$. Therefore, from (2), it follows that $\jltrans{\Gamma^-, \alpha\leq\sigma^-} \vdash \jltrans{\t^+} <: \jltrans{\phi} \vdash E$.
  Thus, by the IH, I conclude that $\Gamma^-, \alpha\leq\sigma^- \vdash \phi \leq \t^+$ and then by application of \textsc{PAll} that $\Gamma^- \vdash \forall \beta.\phi \leq \forall \alpha\leq \sigma^-.\t^+$.
\end{itemize}

Therefore, System \fsubp subtyping holds if and only if the translated version in Julia holds. By Pierce's result, then, I can conclude that subtyping in Julia is undecidable. 

\section{Undecidability in Practice}

The undecidability result is not a purely theoretical one; the described
translation is able to match the undesirable behavior of System \fsubp in
Julia. For example, Ghelli's looping gadget can be trivially translated into
Julia as shown in figure~\ref{fig:ghelli}.
\begin{figure}[H]
\begin{lstlisting}
function Neg(T)
  return (Ref{X} where X>:T)
end
function Kappa(T)
  return (Tuple{Ref{Y},Neg(Y)} where Y>:T)
end
const Theta = (Tuple{Ref{Z},Neg(Kappa(Z))} where Z)
Kappa(Theta) <: Theta # stackoverflow error
\end{lstlisting}
\caption{Ghelli's looping gadget, Julia version.}
\label{fig:ghelli}
\end{figure}

Similarly, I can mechanically translate any System \fsubp judgement
(including typing context $\Gamma$) into Julia and check it using Julia's
typechecker with predictable results, seen in figure~\ref{fig:fsubptrans}.
The translation takes a Julia representation of a System \fsubp environment
(a dictionary mapping names to System \fsubp upper bound types) and a pair
of System \fsubp types to compare and translates it into an equivalent Julia
subtype check. Translation itself follows the construct used in the proof exactly,
though getting it started takes more effort in the form of a small gadget.

The gadget ensures  that the System \fsubp variables are the same on both
sides of the Julia subtyping judgment. I accomplish this by having both
encoded Julia types \jlinl{A} and \jlinl{B} begin on the left side of the
judgment in the tuple \jlinl{Tuple{Ref{A}, Ref{V} where Vl>:B}} around which
the in-scope variables are added. The right hand side is then constructed as
\jlinl{Tuple{Ref{T}, Ref{V} where Vr>:T} where T}. Julia then first concludes
\jlinl{Ref{A} <: Ref{T}} therein forcing \jlinl{A} and \jlinl{T} to be equal.
Julia then checks if \jlinl{(Ref{Vl} where Vl>:B) <: (Ref{Vr} where Vr>:T)}
or, substituting for \jlinl{T}, if \jlinl{(Ref{Vl} where Vl>:B) <: (Ref{Vr}
where Vr>:A)}. Running in the forward direction, Julia then ensures that
\jlinl{Vl <: Vr} using \textsc{R\_Right}, then checks the opposite direction
\jlinl{Vr <: Vl}. Applying \textsc{R\_Left} then \textsc{L\_Right} then gives
us \jlinl{A <: B} as the ultimate judgment that needs to be concluded. 

\begin{figure}
\begin{lstlisting}
abstract type FSubType end
struct FSubVar <: FSubType
  name::Symbol
end
struct FSubTop <: FSubType end
struct FSubUni <: FSubType
  binding::Symbol
  ub::FSubType
  body::FSubType
end


enc(v::FSubVar, eenv::Dict{Symbol,TypeVar}) = eenv[v.nam
enc(v::FSubTop, eenv::Dict{Symbol, TypeVar}) = Union{}
function enc(v::FSubUni, eenv::Dict{Symbol, TypeVar}) 
  nvn = TypeVar(gensym(v.binding), enc(v.ub, eenv), Any)
  eenv[v.binding] = nvn
  return UnionAll(nvn, Tuple{Ref{nvn}, enc(v.body, eenv)})
end
enc(v::FSubType) = enc(v, Dict{Symbol, TypeVar}())

enc(env::Dict{Symbol,FSubType}) = Dict(k => begin nvn = TypeVar(gensym(k)); nvn.lb = enc(v); nvn end for (k,v) in env)

esub(a::FSubType,b::FSubType) = enc(b) <: enc(a)
function esub(a::FSubType, b::FSubType, env::Dict{Symbol,FSubType})
  tenv = enc(env)
  A = enc(b, tenv)
  B = enc(a, tenv)
  lhs = foldl((t,v) -> UnionAll(v,t), values(tenv); init=Tuple{Ref{A}, Ref{V} where V>:B})
  rhs = (Tuple{Ref{T}, Ref{V} where V>:T} where T)
  return lhs <: rhs
end
\end{lstlisting}
\caption{Implementation of translation from System \fsubp subtyping judgments to Julia subtyping.}
\label{fig:fsubptrans}
\end{figure}

While somewhat contrived, these examples demonstrate that it is possible to
get undesirable behavior out of Julia's type system and that deep theoretical
analysis is probably intractable with the current conception. 

The practical impact of subtyping's undecidability is minor. The key feature
being used, lower bounds on type variables, is extremely rare in practice.
The only identifiable uses of it occur in one place in the entire standard library,
nowhere in broader user code, and rarely as a result of type inference. As a result,
the main impacts are on the theory of the system.

From the perspective of a type system the result is frustrating: it means that
static dispatch resolution is essentially incomplete if it is to terminate.
After all, if methods are chosen with subtyping and subtyping is undecidable,
then I cannot universally decide which methods will be invoked in a finite
amount of time. My objective is instead to try to find a subset of the type
language that is \emph{sufficient} to work practically while skirting around
the unpleasant generalities.

Julia programmers are already reasoning extensively about types and Julia
programs, obviously, already make extensive use of types. It follows, then,
that a reasonable implementation of subtyping already exists as part of the
Julia compiler. Moreover, Julia's own implementation of subtyping is the one
against which programmers judge whether their types are too complex or not; if
a relation sends the implementation into an infinite loop most programmers
will consider that a bug. Therefore, I will use the notion of subtyping
as implemented in Julia.

Adopting this concept leads to a problem. Julia implements subtyping as a very
large, relatively opaque, code base that is not tractable to theoretical
examination. The simplified formalization after all has proven largely
too difficult to reason about. Modeling the implementation is practically
infeasible and any such effort would be unlikely to produce pleasant formal 
results.

As a result, instead of a concrete formal model of subtyping I will instead 
be relying on a ``black box'' concept of what subtyping is. In place of a
concrete modelled algorithm for subtyping my type system instead relies on
subtyping adhering to a set of properties; the type system will then work with
any satisfying instantiation of this relation. This design accommodates both
future extensibility (as the type language and underlying subtyping relation are 
not linked to any specific definition) and the inherent need to choose trade-offs
when implementing subtyping in Julia.

For the purposes of typing \emph{itelf}, luckily, relatively few properties
are needed of subtyping. As will be described later, one property is required:
if $\jsub{\t_1}{\t_2}$ and $\jsub{\t_2}{\t_3}$ then $\jsub{\t_1}{\t_3}$---that
subtyping is transitive.

\DeclareDocumentCommand\TR{om}{\EM{\IfNoValueTF{#1}{\PackageWarning{}{Undefined Type System}}{#1}\llbracket #2 \rrbracket}}

\DeclareDocumentCommand\TRG{omm}{\EM{\IfNoValueTF{#1}{\PackageWarning{}{Undefined Type System}}{#1}\llbracket #2 \rrbracket_{#3}}}
\DeclareDocumentCommand\TAG{ommm}{\EM{\IfNoValueTF{#1}{\PackageWarning{}{Undefined Type System}}{#1}\llparenthesis #2 \rrparenthesis_{#3}^{#4}}}

\newcommand{\OTS}{{\mathcal{O}}}
\newcommand{\CTS}{{\mathcal{C}}}
\newcommand{\BTS}{{\mathcal{B}}}
\newcommand{\TTS}{{\mathcal{T}}}
\newcommand{\SOMS}{{\mathcal{S}}}
\newcommand{\sspce}{;~}
\newcommand{\idbody}[1]{\SubCast{#1}\x\sspce \SubCast{#1}\x}

\newcommand{\bscast}[2]{\EM{\BehCast{#1}{{#2}}}}
\newcommand{\kty}[1]{\EM{\xt{kty}(#1)}}

\newcommand{\WHERE}{~\EM{\xt{\bf where}}~}
\newcommand{\OR}{\EM{~\xt{\bf or}}~}
\newcommand{\IF}{\EM{~\xt{\bf if}}~}

\newcommand{\HS}{\hspace{.2cm}}
\newcommand{\LS}{\hspace{1cm}}
\newcommand{\namet}[2]{\EM{#1\!\!\,:\,\!#2}}
\newcommand{\TypeCk}[3]{\EM{#1\vdash #2:#3}}

\newcommand{\EM}[1]{\ensuremath{#1}\xspace}
\newcommand{\xt}[1]{\ensuremath{\mathsf{#1}}}
\newcommand{\bt}[1]{\xt{\bf #1}}

\newcommand{\EMxt}[1]{\EM{\xt{#1}}}

\newcommand{\x}   {\EMxt x}
\newcommand{\xp}   {\EMxt{x'}}
\newcommand{\n}   {\EMxt n}

\providecommand{\m}{m}
\renewcommand{\m}   {\EMxt m}
\providecommand{\mp}{mp}
\renewcommand{\mp}   {\EMxt{m'}}
\newcommand{\s}   {\EM{\sigma}}
\DeclareDocumentCommand\a{o}{\IfNoValueTF{#1}{\EMxt {a}}{\EM{\xt {a}_{#1}}}}
\DeclareDocumentCommand\ap{o}{\IfNoValueTF{#1}{\EMxt {a'}}{\EM{\xt {a'}_{#1}}}}
\DeclareDocumentCommand\app{o}{\IfNoValueTF{#1}{\EMxt {a''}}{\EM{\xt {a''}_{#1}}}}
\DeclareDocumentCommand\t{o}{\IfNoValueTF{#1}{\EMxt t}{\EM{\xt t_{#1}}}}
\DeclareDocumentCommand{\tp}{o}{\IfNoValueTF{#1}{\EM{ \xt t' }}{\EM{\xt t_{#1}'}}}
\DeclareDocumentCommand{\tpp}{o}{\IfNoValueTF{#1}{\EM{ \xt t'' }}{\EM{\xt t_{#1}''}}}
\DeclareDocumentCommand{\tppp}{o}{\IfNoValueTF{#1}{\EM{ \xt t''' }}{\EM{\xt t_{#1}'''}}}
\DeclareDocumentCommand{\e}{o}{\IfNoValueTF{#1}{\EM{ \xt e }}{\EM{\xt e_{#1}}}}
\DeclareDocumentCommand{\ep}{o}{\IfNoValueTF{#1}{\EM{ \xt e' }}{\EM{\xt e_{#1}'}}}
\DeclareDocumentCommand{\epp}{o}{\IfNoValueTF{#1}{\EM{ \xt e'' }}{\EM{\xt e''_{#1}}}}
\DeclareDocumentCommand{\eppp}{o}{\IfNoValueTF{#1}{\EM{ \xt e''' }}{\EM{\xt e'''_{#1}}}}
\DeclareDocumentCommand{\fd}{o}{\IfNoValueTF{#1}{\EM{ \xt{fd} }}{\EM{\xt{fd}_{#1}}}}
\DeclareDocumentCommand{\fdp}{o}{\IfNoValueTF{#1}{\EM{ \xt{fd}' }}{\EM{\xt{fd}_{#1}'}}}
\DeclareDocumentCommand{\fdpp}{o}{\IfNoValueTF{#1}{\EM{ \xt{fd}'' }}{\EM{\xt{fd}_{#1}''}}}
\DeclareDocumentCommand{\fdppp}{o}{\IfNoValueTF{#1}{\EM{ \xt{fd}''' }}{\EM{\xt{fd}_{#1}'''}}}
\DeclareDocumentCommand{\md}{o}{\IfNoValueTF{#1}{\EM{ \xt{md} }}{\EM{\xt{md}_{#1}}}}
\DeclareDocumentCommand{\f}{o}{\IfNoValueTF{#1}{\EM{ \xt f }}{\EM{\xt f_{#1}}}}
\DeclareDocumentCommand{\mdp}{o}{\IfNoValueTF{#1}{\EM{ \xt{md}' }}{\EM{\xt{md}_{#1}'}}}
\DeclareDocumentCommand{\mdpp}{o}{\IfNoValueTF{#1}{\EM{ \xt{md}'' }}{\EM{\xt{md}_{#1}''}}}
\DeclareDocumentCommand{\mdppp}{o}{\IfNoValueTF{#1}{\EM{ \xt{md}''' }}{\EM{\xt{md}_{#1}'''}}}
\DeclareDocumentCommand{\C}{o}{\IfNoValueTF{#1}{\EM{ \xt{C} }}{\EM{\xt{C}_{#1}}}}
\DeclareDocumentCommand{\mt}{o}{\IfNoValueTF{#1}{\EM{ \xt{mt} }}{\EM{\xt{mt}_{#1}}}}
\DeclareDocumentCommand{\mtp}{o}{\IfNoValueTF{#1}{\EM{ \xt{mt}' }}{\EM{\xt{mt}_{#1}'}}}
\DeclareDocumentCommand{\mtpp}{o}{\IfNoValueTF{#1}{\EM{ \xt{mt}'' }}{\EM{\xt{mt}_{#1}''}}}
\DeclareDocumentCommand{\mtppp}{o}{\IfNoValueTF{#1}{\EM{ \xt{mt}''' }}{\EM{\xt{mt}_{#1}'''}}}
\DeclareDocumentCommand{\D}{o}{\IfNoValueTF{#1}{\EM{ \xt{D} }}{\EM{\xt{D}_{#1}}}}
\DeclareDocumentCommand{\Dp}{o}{\IfNoValueTF{#1}{\EM{ \xt{D'} }}{\EM{\xt{D'}_{#1}}}}
\DeclareDocumentCommand{\Dpp}{o}{\IfNoValueTF{#1}{\EM{ \xt{D''} }}{\EM{\xt{D''}_{#1}}}}

\newcommand{\K}   {\EMxt K}
\renewcommand{\k} {\EMxt k}
\newcommand{\Kk}   {\K~\k}
\newcommand{\Kp}  {{\EMxt{K'}}}
\newcommand{\Kpp}  {{\EMxt{K''}}}
\newcommand{\Kppp}  {{\EMxt{K'''}}}
\renewcommand{\sp}{{{\EM{\s'}}}}
\newcommand{\spp}{{{\EM{\s''}}}}
\newcommand{\A}   {\EMxt {A}}
\newcommand{\I}   {\EMxt {I}}
\newcommand{\E}   {\EMxt {E}}
\newcommand{\Cp}  {\EMxt{C'}}
\newcommand{\Cpp}  {\EMxt{C''}}
\newcommand{\Cppp}  {\EMxt{C'''}}
\newcommand{\cmd}  {\EMxt{M}}
\newcommand{\cmdp}  {\EMxt{M'}}
\newcommand{\M}{\EMxt{M}}
\newcommand{\Env}   {\EM{\Gamma}}
\newcommand{\Envp}   {\EM{\Gamma'}}
\newcommand{\EE}   {\EM{\textsf{{E}}}}
\newcommand{\any} {\EM{\star}}
\newcommand{\this}{\EMxt{this}}
\newcommand{\that}{\EMxt{that}}
\newcommand{\none}{\EM{\cdot}}
\newcommand{\CW}    {\EMxt{?C}}
\newcommand{\CWp}   {\EMxt{?C'}}
\newcommand{\CWpp}  {\EMxt{?C''}}
\newcommand{\DW}     {\EMxt{?D}}
\newcommand{\DWp}    {\EMxt{?D'}}
\newcommand{\DWpp}   {\EMxt{?D''}}

\newcommand{\FRead}[1]   {\EM{\this.#1}}
\newcommand{\FWrite}[2]  {\EM{\this.#1} = #2}
\newcommand{\FReadR}[2]   {\EM{#1.#2}}
\newcommand{\FWriteR}[3]  {\EM{#1.#2} = #3}
\newcommand{\Call}[3]  {\EM{#1.#2(#3)}}
\newcommand{\KCall}[5] {\EM{{#1}.{#2}_{{#4} \shortrightarrow {#5}}(#3)}}
\newcommand{\DynCall}[3]  {\EM{#1@#2_{\any\shortrightarrow\any}(#3)}}
\newcommand{\sDynCall}[3]  {\EM{#1@#2(#3)}}

\newcommand{\New}[2]   {\EM{\new\;#1(#2)}}
\newcommand{\SubCast}[2]{\EM{\langle{#1}\rangle\,{#2}}}
\newcommand{\BehStart}{\EM{\blacktriangleleft}}
\newcommand{\BehEnd}{\EM{\blacktriangleright}}
\newcommand{\BehCast}[2]{\EM{\BehStart\! #1\! \BehEnd #2}}

\newcommand{\MonStart}{\EM{\triangleleft}}
\newcommand{\MonEnd}{\hspace{-2px}\EM{\triangleright}}
\newcommand{\MonCast}[2]{\EM{\MonStart\, #1\, \MonEnd #2}}
\newcommand{\typegen}[2]{\xt{ifcgen}(#1,#2)}

\newcommand{\new}      {\EM{\bt{new}}}
\newcommand{\HT}[2]    {\EM{{#1}\!:{#2}}}
\newcommand{\Mdef}[5]  {\EM{ \HT{ #1( \HT{#2}{#3})}{#4}\;\{{#5}\}}}
\newcommand{\obj}[2]   { \EM{ #1\{#2\}}}
\newcommand{\ThorSub}[4]{\EM{#1~#2 \vdash #3 \Sub_t #4}}
\newcommand{\behcast}[7]{\EM{#5\,#6\,#7 = \xt{bcast}(#1, #2, #3, #4)}}
\newcommand{\behcastE}[7]{\EM{\xt{bcast}(#1, #2, #3, #4) = #5\,#6\,#7}}
\newcommand{\behcastS}[4]{\EM{\xt{bcast}(#1, #2, #3, #4)}}

\newcommand{\B}        {\EM{~|~}}

\newcommand{\Red}{\EM{\rightarrow}}
\newcommand{\Reduce}[6]   {\EM{{#1}~{#2}~{#3} \Red {#4}~{#5}~{#6}}}
\newcommand{\ReduceA}[6]  {\EM{#1~#2~#3 } & \EM{\Red #4~#5~#6}}
\newcommand{\class}       {\EM{\bf{class}}}
\newcommand{\Class}[3]    {\EM{\bt{class}\;#1\,\{\,#2~#3\,\}}}

\newcommand{\Ftype}[2]    {\EM{ \HT{#1}{#2} }}
\newcommand{\Fdef}[2]    {\EM{ \HT{#1}{#2} }}
\newcommand{\Mtype}[3]    {\EM{ \HT{#1(#2)}{#3}}}
\newcommand{\opdef}[2]    {\framebox[1.1\width]{#1} ~ #2\\}
\newcommand{\Map}[2]     {\EM{ #1[#2] }}
\newcommand{\Bind}[2]     {\EM{#1 \mapsto #2}}

\newcommand{\Sub}{\EM{<:}}
\newcommand{\OK}{\EM{~\checkmark}}
\newcommand{\names}[1]{\EM{\xt{names}(#1)}}
\newcommand{\cload}[1]{\EM{\xt{nodups}(#1)}}

\newcommand{\ConsSub}{\EM{\lesssim}}

\newcommand{\EvalRulePassthrough}[1]{#1}
\newcommand{\EvalRuleLabel}[1]{\EvalRulePassthrough{\tiny\scshape\textcolor{gray}{#1}}}
\newcommand{\CondRule}[3]{ \EvalRuleLabel{#1} & #3 & #2 \\}
\newcommand{\SuchRule}[3]{ #3 &~{\emph{s.t.}} #2 \\}
\newcommand{\EnvType}[5]{ \EM{#1\,#2\,#3\vdash #4 : #5}}
\newcommand{\EnvTypex}[5]{ \EM{#1\,#2\,#3\vdash_{\!s} #4 : #5}}

\newcommand{\RuleRef}[1]{\hyperlink{infer:#1}{\TirNameStyle{#1}}}
\newcommand{\IRule}[4][]{\inferrule*[lab={\tiny \hypertarget{infer:#2}{#2}},#1]{#3}{#4}}
\newcommand{\Rule}[4][]{\inferrule*{#3}{#4}}
\newcommand{\HasType}[3]{ \EM{#1} (\EM{#2}) = \EM{#3}}
\newcommand{\wrap}[4]{\EM{\xt{W}(#1,#2,#3,#4)}}
\newcommand{\wrapE}[1]{\EM{\xt{W}(#1)}}
\newcommand{\wrapAny}[3]{\EM{\xt{W}\!\any(#1,#2,#3)}}

\newcommand{\classoff}[2]{\EM{\xt{mtypes}(#1,~#2)}}

\newcommand{\mtype}[3]{\EM{\xt{mtype}(#1,#2,#3)}}

\newcommand{\Convertible}[3]{\EM{#1 \vdash #2 \Mapsto #3}}
\newcommand{\ConvertE}[4]{\EM{#1 \vdash_{\!s} #3 \Mapsto #4}}
\newcommand{\In}{\EM{\in}}
\newcommand{\T}{\EM{\xt T}}
\newcommand{\AND}{\EM{\wedge}}
\newcommand{\App}[2]{\EM{#1(#2)}}

\newcommand{\SSub}[4]{\EM{#1~#2\vdash_{\!s} #3\Sub #4}}
\newcommand{\StrSub}[4]{\EM{#1~#2\vdash #3\Sub #4}}
\newcommand{\StrNotSub}[4]{\EM{#1~#2\vdash #3 \not\Sub #4}}
\newcommand{\ThrSub}[4]{\EM{#1~#2\vdash_{\!s} #3~\src\Sub~#4}}
\newcommand{\ConSub}[4]{\EM{#1~#2 \vdash #3 \lesssim #4}}
\newcommand{\OKW}{\EM{~\checkmark_{s}}}
\newcommand{\OKX}[1]{\EM{~\checkmark_{#1}}}
\newcommand{\EnvTypeW}[4]{ \EM{#1\,#2\vdash_{\!s} #3 : #4}}
\newcommand{\EnvTypeS}[4]{ \EM{#1\,#2\vdash_{\!s} #3 : #4}}
\newcommand{\EnvTypeT}[4]{ \EM{#1\,#2\vdash_{\!s} #3 : #4}}
\newcommand{\EnvTypeE}[5]{ \EM{#1\,#2\vdash_{\!s} #4 : #5}}
\newcommand{\trulename}[1]{#1}

\newcommand{\sign}[1]{\xt{signature}(#1)}

\newcommand{\WFtype}[2]{\EM{#1\vdash#2 \OK}}
\newcommand{\WF}[4]{\EM{#1\,#2\,#3\vdash#4 \OK}}
\newcommand{\WFp}[2]{#1~#2\OK}
\newcommand{\WFq}[1]{#1\OK}

\newcommand{\WFpx}[2]{#1~#2\OK_x}
\newcommand{\WFx}[4]{\EM{#1\,#2\,#3\vdash_{\!s}~#4 \OK}}
\newcommand{\WFX}[5]{\EM{#1\,#2\,#3\vdash_{\!#5}~#4 \OK}}
\newcommand{\WFtypex}[2]{\EM{#1\vdash_{\!s}~ #2 \OK}}
\newcommand{\WFtypeX}[3]{\EM{#1\vdash_{\!s}~ #3 \OK}}

\newcommand{\WFtypeW}[2]{\EM{#1\vdash_{\!s}~ #2 \OK}}
\newcommand{\WFW}[3]{\EM{#1\,#2\vdash_{\!s}~#3 \OK}}
\newcommand{\WFpW}[2]{#1~#2\OKW}
\newcommand{\WFpX}[3]{\EM{#1~#2\OKX{#3}}}

\newcommand{\V}{\EM{\checkmark}}

\newcommand{\dyn}[1]{\xt{dyn}(#1)}

\newcommand{\fresh}[1]{\EM{#1~\xt{fresh}}}
\newcommand{\Kt}[1]{\EM{\text{ktype}(#1)}}
\newcommand{\All}[1]{\EM{\forall ~\xt #1 ~.~}}
\newcommand{\SP}{\hspace{.5cm}}
\newcommand{\SPP}{\SP\SP}

\newcommand{\IGNOREUNLESSNEEDED}[1]{}
\newcommand{\figref}[1]{Fig.~\ref{#1}\xspace}
\newcommand{\ruleref}[1]{Rule~{\small #1}\xspace}

\newcommand{\kafka}{\ensuremath{\mathsf KafKa}\xspace}
\newcommand{\src}[1]{\colorbox[gray]{0.89}{$#1$}}
\newcommand{\dt}[1]{\,\xt{?}#1}
\newcommand{\consistent}[3]{\EM{#1 \vdash #2 ~\sim~ #3}}
\newcommand{\rtranst}[6]{#1 \Rightarrow #2 ~ #3 / #4 \vdash #5 \looparrowright_{beh} #6}
\newcommand{\rtranstz}[4]{#1 \Rightarrow #2 \vdash #3 \looparrowright_{mon} #4}

\providecommand{\figref}[1]{Fig.~\ref{#1}}
\newcommand{\secref}[1]{Sec.~\ref{#1}}
\newcommand{\chapref}[1]{Chapter~\ref{#1}}
\newcommand{\appref}[1]{App.~\ref{#1}}
\newcommand{\thmref}[1]{Theorem~\ref{#1}}
\newcommand{\lemref}[1]{Lemma~\ref{#1}}
\newcommand{\juliette}{\textsc{Juliette}\xspace}
\newcommand{\juliav}{Julia 1.4.1\xspace}
\newcommand{\jlmultnum}{357\xspace}
\newcommand{\jladdnum}{166\xspace}
\renewcommand{\eval}{\texttt{\small eval}\xspace}
\renewcommand{\Eval}{\texttt{\small Eval}\xspace}
\newcommand{\evals}{\texttt{\small eval}s\xspace}
\newcommand{\evaled}{\texttt{\small eval}ed\xspace}
\newcommand{\invokelatest}{\texttt{\small invokelatest}\xspace}
\newcommand{\Invokelatest}{\texttt{\small Invokelatest}\xspace}
\newcommand{\faded}[1]{\color{gray}#1}
\newcommand{\add}[1]{%
  \cbcolor{green}
  \begin{changebar}
    #1
  \end{changebar}
  \cbcolor{gray}}
\newcommand{\subst}[2]{\ensuremath{[#1\!\mapsto\!#2]}}
\newcommand{\dom}{\ensuremath{\mathop{dom}}}

\renewcommand{\obar}[1]{\overline{#1}}
\newcommand{\ol}[2]{\left.\overline{#1}\right.^{#2}}
\newcommand{\eqtt}{\texttt{=}}
\newcommand{\HEY}[2]{{\footnotesize\textsc{#1-#2}}\xspace}
\newcommand{\TRR}[1]{\HEY T{#1}}
\newcommand{\TER}[1]{\HEY{TE}{#1}}
\newcommand{\WAE}[1]{\HEY E{#1}}
\newcommand{\WAEF}[1]{\HEY{F}{#1}}
\newcommand{\WAT}[1]{\HEY{T}{#1}}
\newcommand{\WAQE}[1]{\HEY{QE}{#1}}
\newcommand{\WAOE}[1]{\HEY{OE}{#1}}
\newcommand{\WAOD}[1]{\HEY{OD}{#1}}
\newcommand{\WAOT}[1]{\HEY{OT}{#1}}
\newcommand{\WACAN}[1]{\HEY{CN}{#1}}
\newcommand{\langsep}{\;|\;}
\newcommand{\err}{{\texttt{error}}\xspace}
\newcommand{\primv}{\ensuremath{\nu}\xspace}
\newcommand{\primvs}{\ensuremath{\obar{\primv}}\xspace}
\newcommand{\Primop}{\ensuremath{\Delta}\xspace}
\newcommand{\primop}[1]{\ensuremath{\delta_{#1}}\xspace}
\newcommand{\primopd}{\primop{l}}
\newcommand{\primcall}[2]{\ensuremath{\primop{#1}(#2)}\xspace}
\newcommand{\primcalld}[1]{\primcall{l}{#1}}
\newcommand{\PrimopSE}{\ensuremath{\Omega}\xspace} 
\newcommand{\PrimopRT}{\ensuremath{\Psi}\xspace} 
\renewcommand{\v}{\ensuremath{\texttt{v}}\xspace}
\newcommand{\vs}{\obar{\v}\xspace}
\newcommand{\skp}{\texttt{unit}\xspace}
\newcommand{\mname}{\ensuremath{\texttt{m}}\xspace}
\newcommand{\mval}[1]{{\texttt{#1}}\xspace}
\providecommand{\m}{m}
\renewcommand{\m}{\mval{m}}
\renewcommand{\e}{\ensuremath{\texttt{e}}\xspace}
\newcommand{\es}{\ensuremath{\obar{\e}}\xspace}
\renewcommand{\x}{\ensuremath{x}\xspace}
\newcommand{\xs}{\ensuremath{\obar{\x}}\xspace}
\newcommand{\seq}[2]{\ensuremath{#1\,;\,#2}}
\newcommand{\mcall}[2]{\ensuremath{\mathop{#1}(#2)}}
\newcommand{\mcallt}[4]{\ensuremath{\mathop{#1}(#2) ::_{#3} #4}}
\newcommand{\mcalld}{\mcall{\m}{\vs}}
\renewcommand{\md}{\ensuremath{\texttt{md}}\xspace}
\newcommand{\mds}{\ensuremath{\obar{\md}}\xspace}
\newcommand{\evalty}[2]{\ensuremath{{\color{violet}\llbracket\,{\color{black}#1}\,\rrbracket}_#2}}
\newcommand{\evalchk}[2]{\ensuremath{\llbracket\,#1\,\rrbracket_{#2}}}
\newcommand{\evalinf}[1]{\ensuremath{\llbracket\,#1\,\rrbracket}}
\newcommand{\evalg}[1]{\ensuremath{{\color{violet}\llparenthesis\,{\color{black}#1}\,\rrparenthesis}}}
\newcommand{\evalt}[2]{\ensuremath{\evalg{#2}_{\color{violet}#1}}}
\newcommand{\mdef}[3]{\ensuremath{\triangleleft\, \mathop{#1}(#2)\eqtt\,#3 \,\triangleright}}
\newcommand{\mdefty}[4]{\ensuremath{\triangleleft\, \mathop{#1}(#2)::#3\eqtt\,#4 \,\triangleright}}
\newcommand{\mdefd}{\ensuremath{\mdef{\m}{\obar{\jty{\x}{\t}}}{\e}}}
\newcommand{\mdeftyd}{\ensuremath{\mdefty{\m}{\obar{\jty{\x}{\t}}}{\mu}{\e}}}
\newcommand{\rslt}{\ensuremath{\texttt{r}}\xspace}
\newcommand{\p}{\ensuremath{\texttt{p}}\xspace}
\newcommand{\iexp}{\ensuremath{\texttt{i}}\xspace}
\renewcommand{\t}{\ensuremath{\tau}\xspace}
\newcommand{\ts}{\ensuremath{\obar{\t}}\xspace}
\providecommand{\g}{g}
\renewcommand{\g}{\ensuremath{\sigma}\xspace}
\newcommand{\gs}{\ensuremath{\obar{\g}}\xspace}
\newcommand{\mty}{\ensuremath{\mathbb{f}_\m}\xspace}
\newcommand{\Unit}{\ensuremath{\mathbb{1}}\xspace}
\newcommand{\typeof}{\ensuremath{\mathop{\mathbf{typeof}}}\xspace}
\newcommand{\dispatch}{\ensuremath{\mathop{\mathbf{dispatch}}}\xspace}
\newcommand{\notjsub}[2]{\ensuremath{#1 \not<: #2}}
\newcommand{\jty}[2]{\ensuremath{#1 :: #2}}
\newcommand{\MT}{\ensuremath{\mathrm{M}}\xspace}
\newcommand{\MTg}{\ensuremath{\MT_g}\xspace}
\newcommand{\MText}[2]{\ensuremath{#1 \bullet #2}}
\newcommand{\getmd}{\ensuremath{\mathop{\mathbf{getmd}}}\xspace}
\newcommand{\latest}{\ensuremath{\mathop{\mathbf{latest}}}}
\newcommand{\applcbl}{\ensuremath{\mathop{\mathrm{applicable}}}}
\newcommand{\containseq}{\ensuremath{\mathop{\mathrm{contains}}}}
\newcommand{\Cx}{\ensuremath{\texttt{C}}\xspace}
\newcommand{\Xx}{\ensuremath{\texttt{X}}\xspace}
\newcommand{\Tx}{\ensuremath{\texttt{T}}\xspace}
\newcommand{\OEx}{\ensuremath{\texttt{E}}\xspace}
\newcommand{\hole}{\ensuremath{\square}\xspace}
\newcommand{\plugx}[2]{\ensuremath{#1\left[#2\right]}}
\newcommand{\plugCx}[2]{\ensuremath{{\color{violet} #1\left[{\color{black} #2}\right]}}}
\newcommand{\rep}{\ensuremath{\mathtt{rep}}\xspace}
\newcommand{\rV}{\mathop{\mathcal{V}}}
\newcommand{\rX}{\mathop{\mathcal{X}}}
\newcommand{\rT}{\mathop{\mathcal{T}}}
\newcommand{\rC}{\mathop{\mathcal{C}}}
\newcommand{\rVd}{\ensuremath{\rV(\v)}\xspace}
\newcommand{\rXX}[1]{\ensuremath{\rX(#1,\ \mcalld)}\xspace}
\newcommand{\rXd}{\rXX{\Xx}}
\newcommand{\rCC}[1]{\ensuremath{\rC(#1,\ \rdx)}\xspace}
\newcommand{\rCd}{\rCC{\Cx}}
\newcommand{\CanSym}{\ensuremath{\mathop{\boldsymbol{Can}}}\xspace}
\newcommand{\Can}[2]{\ensuremath{\CanSym\left(#1\ \mathbf{as}\ #2\right)}\xspace}
\newcommand{\evalstep}{\ensuremath{\rightarrow}\xspace}
\newcommand{\evalstepc}{\ensuremath{\evalstep^*}\xspace}
\newcommand{\evalstepm}[1]{\ensuremath{\evalstep_{#1}}\xspace}
\newcommand{\stwa}[2]{\ensuremath{\langle #1, #2 \rangle}}
\newcommand{\evalwast}[2]{\ensuremath{#1\ \evalstep\ #2}}
\newcommand{\evalwa}[4]{\ensuremath{\evalwast{\stwa{#1}{#2}}{\stwa{#3}{#4}}}}
\newcommand{\evalwad}[2]{\evalwa{\MT}{#1}{\MT}{#2}}
\newcommand{\evalwastc}[2]{\ensuremath{#1\ \evalstepc\ #2}}
\newcommand{\evalwac}[4]{\ensuremath{\evalwastc{\stwa{#1}{#2}}{\stwa{#3}{#4}}}}

\newcommand{\evalwastm}[3]{\ensuremath{#1\ \evalstepm{#2}\ #3}}
\newcommand{\evalwam}[5]{\ensuremath{\evalwastm{\stwa{#1}{#2}}{#3}{\stwa{#4}{#5}}}}
\newcommand{\evalerrwa}[3]{\ensuremath{#1\ \vdash\ #2\,\evalstep\,#3}}
\newcommand{\evalerrwad}[2]{\evalerrwa{\MT}{#1}{#2}}
\newcommand{\evalfullwast}[2]{\ensuremath{#1\ \overset{c}{\evalstep} \ #2}}
\newcommand{\evalfullwa}[4]{\ensuremath{\evalfullwast{\stwa{#1}{#2}}{\stwa{#3}{#4}}}}
\newcommand{\Gm}{\ensuremath{\Gamma}\xspace}
\newcommand{\typedwa}[4]{\ensuremath{#2\ \vdash_{#1}\ #3\ :\ #4}}
\newcommand{\typedwad}[2]{\typedwa{}{\Gm}{#1}{#2}}
\newcommand{\gm}{\ensuremath{\gamma}\xspace}
\newcommand{\substok}[2]{\ensuremath{#1 \vdash #2}\xspace}
\newcommand{\substokd}{\substok{\Gm}{\gm}}
\newcommand{\nv}{\ensuremath{\nu}\xspace}
\newcommand{\nvs}{\ensuremath{\obar{\nv}}\xspace}
\newcommand{\Ex}{\ensuremath{\texttt{E}}\xspace}
\newcommand{\mdeq}[3]{\ensuremath{#2(#1) \leadsto #3}\xspace}
\newcommand{\mdeqd}{\mdeq{\obar{\sigma}}{\m}{\m'}}
\newcommand{\SpecEnv}{\Phi}
\newcommand{\mspec}[4]{\ensuremath{\vdash^{#1}_{{\color{violet}#2}\leadsto{\color{violet}#3}} #4}\xspace}
\newcommand{\mspecd}{\mspec{\SpecEnv}{\MT}{\MT'}{\mdeqd}}
\newcommand{\expropt}[6]{\ensuremath{#2\ \vdash^{#1}_{{\color{violet}#3}\leadsto{\color{violet}#5}}\ #4 \leadsto #6}}
\newcommand{\exproptd}[2]{\expropt{\SpecEnv}{\Gm}{\MT}{#1}{\MT'}{#2}}
\newcommand{\expropte}[2]{\expropt{\SpecEnv}{}{\MT}{#1}{\MT'}{#2}}
\newcommand{\mdopt}[5]{\ensuremath{\vdash^{#1}_{{\color{violet}#2}\leadsto{\color{violet}#3}} #4 \leadsto #5}\xspace}
\newcommand{\mdoptd}[2]{\mdopt{\SpecEnv}{\MT}{\MT'}{#1}{#2}}
\newcommand{\tableopt}[4]{\ensuremath{\vdash^{#1} #3 \leadsto #4}}
\newcommand{\tableoptd}{\tableopt{\SpecEnv}{\e}{\MT}{\MT'}}
\newcommand{\tableoptref}[4]{\ensuremath{\vdash^{#1}_{#2} #3 \leadsto #4}}
\newcommand{\tableoptrefd}{\tableoptref{\SpecEnv}{\e}{\MT}{\MT'}}
\newcommand{\tableoptexpr}[3]{\ensuremath{\vdash_{{\color{violet}#1}\leadsto{\color{violet}#2}} #3}}
\newcommand{\tableoptexprd}[1]{\tableoptexpr{\MT}{\MT'}{#1}}
\newcommand{\rdx}{\ensuremath{\texttt{rdx}}\xspace}
\newcommand{\vnm}{\ensuremath{\texttt{v}^{\neq \m}}\xspace}

\newcommand{\infers}[4]{#1 |\hspace{-0.48em}\sim_{#2} #3 : #4}

\providecommand{\Alt}{~\vert~}
\providecommand{\Altf}{\faded{\Alt}}

\chapter{Gradual Typing for Julia}\label{ch:juliatypes} 

I have discussed the complexities in typing Julia at length; what remains is
to actually type Julia; now, what remains is to describe how to type Julia itself
and sketch how some abstractions in Julia might be captured.

I extend Julia to create Typed Julia. Typed Julia builds on normal Julia by
adding two constructs:
\begin{itemize}
	\item Typed methods that are statically guaranteed to be correct with respect to their type arguments.
	\item Protocols that require that a suitable implementation of a method exists for some set of types.
\end{itemize}
These typed features coexist with untyped Julia code, allowing programs
to be incrementally specified and typed.

The core of Typed Julia is the ability to write typed methods. Typed methods
are syntactically identical to untyped methods, differing only in that they
are annotated with the \jlinl{@typed} annotation. Only methods that are
annotated as \jlinl{@typed} will be statically checked.

As one kind of abstraction, Typed Julia also supports protocols. A protocol,
declared using the annotation \jlinl{@protocol}, simply asserts the existence
of methods that can handle every instantiation of some signature. The type
system then verifies this assertion using a mechanism similar to checking
completeness of pattern matching.

The guarantee of Typed Julia is that ``typed code will not go wrong,'' with
the exception of method ambiguities. This safety property is quite robust: it
can be broken neither by untyped code nor by use of \eval (up to a point);
only when new methods have been dynamically added and execution has returned
to the top level can the safety property be broken. Therefore, Typed Julia
provides an unusually strong soundness guarantee for a gradually typed
language.
\cbstart
Consider a small example of (untyped) Julia code:
\cbend
\begin{lstlisting}
abstract type AbstractArray end
struct List <: AbstractArray
	array::Array{Int, 1}
end
struct Range <: AbstractArray
	start::Int
	end::Int
end
length(v::List) = length(v.array)
size(u::Range) = u.end - u.start
\end{lstlisting}
Here I define two kinds of array-like-thing: a list (backed by an
array) and a range (backed by a start and an end index). The \jlinl{List}
supports a \jlinl{length} method that returns the length of the underlying
array, while \jlinl{Range} has \jlinl{size} that determines the number of
elements in the range. 

Suppose, now, that I wanted to implement a method \jlinl{array_like} that
allocated a basic array that is the same size as the input, whatever that
input array is. One, flawed, implementation might be to say
\begin{lstlisting}
array_like(a::AbstractArray) = Array(length(a))
\end{lstlisting}
This implementation has a simple bug: it will fail if it is ever given
a \jlinl{Range}. However, if I only ever tested \jlinl{array_like} with
\jlinl{List}s I would only find out about this after someone tried it with
a \jlinl{Range} and complained.

Bugs where a program assumes that a special-case function is universally
applicable are extremely common in Julia. One real example is the use of
\jlinl{for i in 1:length(A)} to loop through every index in an
array~\footnote{\href{https://towardsdatascience.com/1-length-a-considered-harmful-or-how-to-make-julia-code-generic-safe-ac7b39cfc2f0}{1:length(A) considered harmful — or how to make Julia code “generic safe”}}.
Not all \jlinl{AbstractArray}s are integer indexable in the first place and
not all of those arrays start at 1. However, most libraries and users test
against basic \jlinl{Array} which are both integer indexable and start at 1 so
do not notice the problem.

This problem of \jlinl{length} is one of both definition and usage. On the
definition side there was nothing to tell us that users might expect every
implementation to have a \jlinl{length}. On the usage side, nothing was
obviously wrong when I called a function that only exists on \emph{some}
abstract arrays. To fix this problem I need to check both definition and
usage sites.

Let us start by examining how Typed Julia handles usage sites through gradual
typing. In Typed Julia, only methods with the \jlinl{@typed} annotation need
to pass the type checker. I need only to check the methods that I am currently
implementing. To start with, then, let's add the \jlinl{@typed}
annotation to our new \jlinl{array_like} method:
\begin{lstlisting}
@typed array_like(a::AbstractArray) = Array(length(a))
\end{lstlisting}
Now I run the Typed Julia checker on our program giving us the error
\begin{lstlisting}
No implementations of "length" found for type (::AbstractArray). Suitable implementations:
		length(::List)
\end{lstlisting}
The type checker has identified that not all \jlinl{AbstractArray}s
implement \jlinl{length}; while one exists for \jlinl{Range}, no such
method exists that can handle every \jlinl{AbstractArray}. I can use
typed Julia's other feature, protocols, to ensure that such a method
exists. To statically require the existence of the \jlinl{length} method, I add the declaration
\begin{lstlisting}
@protocol length(a::AbstractArray)::Int
\end{lstlisting}
which requires that there is a \jlinl{size} function for every
\jlinl{AbstractArray} that returns an integer. When I add this to
our example, the protocol definition now statically produces the error
\begin{lstlisting}
Protocol length not satisfied; missing implementation(s) for:
	(::Range)::Int
\end{lstlisting}
identifying that there does not exist a method \jlinl{length(::Range)::Int}.
If I fix this by adding
\begin{lstlisting}
length(r::Range) = size(r)
\end{lstlisting}
to our definitions, then the protocol definition error vanishes.

The typechecker was able to determine that there was an implementation of
\jlinl{size} available for every \jlinl{AbstractArray} and could infer that
their return types adhered to the protocol specification. Now that we
have a protocol for \jlinl{size} our typed implementation of \jlinl{array_like}
passes successfully.

I am then left with the following Typed Julia program, showing the two
key additions.
\begin{lstlisting}
abstract type AbstractArray end
struct List <: AbstractArray
	array::Array{Int, 1}
end
struct Range <: AbstractArray
	start::Int
	stop::Int
end
length(v::List) = length(v.array)
size(u::Range) = u.stop - u.start
length(r::Range) = size(r)
@protocol length(a::AbstractArray)::Int

@typed array_like(a::AbstractArray) = Array(length(a))
\end{lstlisting}
This example has demonstrated both how Julia code can be gradually type checked
(with a typed function calling untyped methods) and how I can canonize abstractions
within the protocol system. Now, let us examine how protocols work in more detail and
some of the design decisions that went into them.

\paragraph{Protocols}

Considering \jlinl{AbstractArray} again, the \jlinl{size} protocol is
both real and not alone. Julia uses \jlinl{AbstractArray} for all
``array-like'' things and has a suitably set of protocols that all
implementations must support. Every scalar \jlinl{AbstractArray}
implementation (that I will refer to as \jlinl{A}) is required by the
documentation to implement two methods, as shown in table~\ref{absarray}.

\begin{table}[h]
\centering
\begin{tabular}{|l|l|}\hline
\xt{size(A)} & Returns a tuple containing the dimensions of A\\ \hline
\xt{getindex(A, ::Int)} & Linear scalar indexing \\ \hline
\end{tabular}
\caption{\lstinline|AbstractArray| interface}
\label{absarray}
\end{table}

No single implementation of all of these methods exists for all
\jlinl{AbstractArray}s; each subtype of \jlinl{AbstractArray} is expected to
implement its own version. The need for this is very straightforward: we cannot
possibly write an implementation of \jlinl{getindex} that works the same way
for both an sequential array implementation and a linked list, for example.

Protocols of this sort in Julia are very common in libraries. For example,
MathOptInterface, an abstraction layer over numerous numerical optimization
solvers, exposes more than a hundred different protocols. One example is
\jlinl{optimize!(dest::AbstractOptimizer, src::ModelLike)}, which uses the
optimizer \jlinl{dest} to optimize the model \jlinl{src}. There must be an
implementation of \jlinl{optimize!} for \emph{every} \jlinl{AbstractOptimizer}
and \jlinl{ModelLike}. Implementations of \jlinl{optimize!} tend to specialize on
\jlinl{dest} while being generic over \jlinl{src} while using protocols
exposed by \jlinl{ModelLike} to interact with the problem being optimized.

Many protocols are implemented outside of the original package. As evidenced
in \figref{fnhist} packages frequently implement functions from other methods
for their own types. \figref{fnover} shows that math and collection types are
the most commonly implemented. Protocol users are also common:
\figref{fig:AMCS} suggests shows that many call sites dispatch to more than
one implementation, demonstrating that reliance on a protocol-describable abstraction
is common in Julia programs.

In spite of this adoption and wide import, protocols in Julia are ill-defined.
Most exist solely in the form of English-language documentation or even just
implicitly in the code itself; no machine-legible form is available. While the
types and the implementations should exist in the programs themselves, there
is no trace of the abstract notion of the protocol. 

Introducing mechanically-checked protocols begets a key question: how do we
declare and discover them? Two methods present themselves: if I wanted to
support as much existing code as possible, I could try to find protocols from
the bottom up, identifying protocols from usages. In contrast, I could also
require protocols to be  explicitly declared. As shown before, I chose to use
explicitly-declared protocols at the expense of being able to easily typed
existing code but the alternative of bottom-up protocols deserves closer
attention.

Bottom-up checking was the original approach I took to identifying
protocols~\cite{nool17}. If I assume that all protocol \emph{implementations}
in a program are correct (and tractable to the incomplete static analyzer)
then this is an ideal solution: no new type annotations are needed and no new
specifications are required; protocols effectively arise implicitly from their
use in the code.

\begin{jllisting}
abstract type Number end
struct Int <: Number end

f(::Int) = 2
g(y::Number) = f(y)
\end{jllisting}

Consider the above example. I have two types (\jlinl{Number} and \jlinl{Int})
and two methods; \jlinl{f} can handle only \jlinl{Int}s, while \jlinl{g} is
supposed to handle any kind of \jlinl{Number}. Since there is an
implementation of \jlinl{f} for every kind of number, it follows that this
implementation is so far type-safe.

\begin{jllisting}
abstract type Number end
struct Int <: Number end
struct Float32 <: Number end

f(::Int) = 2
g(y::Number) = f(y)
\end{jllisting}

If I then add a new implementation of \jlinl{Number}, \jlinl{Float32}, then
the invocation now becomes not-type-safe. Superficially, this feels fine:
there is no implementation of \jlinl{f} to use from \jlinl{g}, so I error
there. However, consider the evolutionary process I took to get here: we
first expected there to be a \jlinl{f} for every kind of \jlinl{Number} and
there was, so all was good. The modification I made to break the program was
to add a new type. I did not touch either \jlinl{f} or \jlinl{g}, so breaking
\jlinl{f} for a new definition potentially far away from it feels unfair.  In
effect, I treat the definition as canonical and uses cases as faulty, even
when the desired semantics is the other way around.

Two further issues arise from the bottom-up approach:
\begin{itemize}
	\item Use-site checking does not establish if an unused implementation is correct.
	Suppose for a second that I never called \jlinl{f} with an abstract \jlinl{Number}
	in a typed position; I would never know that I had violated the \jlinl{f} protocol
	when I added \jlinl{Float32} and would only realize that post facto. Moreover, there is
	no explicit declaration that the protocol exists or must be adhered to; it would be
	easy for a programmer new to the project to miss, misuse, or fail to implement some 
	protocol.
	\item The design lends itself to ``spooky action at a distance.'' As I saw earlier,
	when I broke the \jlinl{f} protocol with \jlinl{Float32} the error occurred when we
	called \jlinl{f} from within \jlinl{g}, implying that the problem lies there. Errors
	should point at the actual cause of the failure, rather than merely where the program
	might go wrong as a result. I should ideally point at the new definition \jlinl{Float32} as the cause
	of the failure.
\end{itemize}

This problem becomes particularly stark with an eye towards the reality that most
existing Julia protocols are violated \emph{somewhere}. Even \jlinl{getindex} has some
nonconformant implementations: \jlinl{LogicalIndex}, used to represent an array of
indices masked by a boolean value, has no implementation of \jlinl{getindex}.
As a result, many ``obvious'' function calls suddenly become use-site errors in
spite of the true fault lying with the implementation.

\paragraph{Eval}

Using \eval Julia programmers can insert new definitions of methods
and types at any point in the program. Let us consider a small example.

\begin{jllisting}
abstract type A end
struct B <: A end
f(::A) = 2
g(a::A) = f(a)

function main()::Int
	g(B())
end
main()
\end{jllisting}

It should be possible to type-check this program; the return type
of both \jlinl{f} and \jlinl{g} should be \jlinl{Int}, trivially.

What should happen, though, if I rewrote the program as
\begin{jllisting}
abstract type A end
struct B <: A end
f(::A) = 2
g(a::A) = f(a)

function main()::Int
	eval(:(f(::B) = "hello world"))
	g(B())
end
main()
\end{jllisting}
There is nothing \emph{semantically} wrong with this program---but what should 
the return type of \jlinl{g(B())} now be? The type checker cannot reasonably
analyze all \jlinl{eval}ed expressions, so I would most likely determine that
\jlinl{main} is type-safe. However, if the invocation of \jlinl{g} ends up
calling our new most specific method \jlinl{f(::B)} then the returned value will be a
\jlinl{String}, not an \jlinl{Int}. 

A type checker written against this ``naive'' dynamics would not be
particularly useful; while it could guarantee that a method \emph{exists}
(since while Julia does allow definitions to be deleted it is so uncommon as
to be easily prohibited dynamically) the type checker needs to know about
\emph{every} potential method that could be called in order to determine what
their return types might be. A type checker with an open-world assumption can
only say that the return type of any method call is \jlinl{Any}, which is not
especially useful. Therefore, a ``closed-world'' assumption is practically
mandatory.

Luckily for us, Julia has also ran into this issue internally. I talked
earlier about how Julia tries to one-shot compile every method call into a
statically typed version. This process would be broken just as badly by
dynamically added methods as the type checker would be. Consequently, Julia
``fixes'' the set of visible methods to those that were defined the last time the
code reached the top level (or was invoked with the \jlinl{invokelatest}
function). Thus, with the right selection of methods I can make a
closed-world assumption that pans out in reality.

Julia's semantics are illustrated with a small modification of the above example. If we
add a second call to \jlinl{main} like the following:
\begin{jllisting}
abstract type A end
struct B <: A end
f(::A) = 2
g(a::A) = f(a)

function main()::Int
	eval(:(f(::B) = "hello world"))
	g(B())
end
main()
main() # breaks the return type
\end{jllisting}

then execution returns to the top level and allows \jlinl{main} to see the
newly-added definition. The second invocation of \jlinl{main} now returns the
string \jlinl{"hello world"} and violates the return type annotation. Dynamic
guards on return types are therefore needed once the set of visible methods
has evolved past what I originally checked against, but no earlier.

Type checking for Julia therefore is something of a ``hybrid'' between
a traditional gradual type system (which needs to insert casts to ensure
type safety) and that for a statically typed language (which does not). As
long as the program's ``world'' matches the one that was originally checked
against no casts are needed. However, if the world ``moves on'' from that state
through the insertion of new definitions via \jlinl{eval} then the return type
becomes unsound and needs dynamic checking.

If I contextualize the type system for Julia within the taxonomic framework
that I described with the KafKa language~\cite{chung2018kafka} then it is a
hybrid of a fully-static type checker, the concrete, and the transient
semantics. If the program is running in the original world age that was used
for type checking, then no checks whatsoever are needed. In effect, Typed
Julia programs in the original world age have a simple soundness guarantee, 
as they would in a fully-typed language. If the program has moved on from that
world age then Typed Julia must check return types as the transient semantics does, 
but does so using the concrete notion of type membership where tags are checked
rather than the superficial identity of the value.

\paragraph{Ambiguities} The final challenge that I need to consider is the
question of ambiguities. As a simple example, consider two versions of the
\jlinl{+} function:

\begin{lstlisting}
+(::Int, ::Number) = ...
+(::Number, ::Int) = ...
\end{lstlisting}
Now, suppose that I call \jlinl{+(1, 2)}. \jlinl{1} is an \jlinl{Int} and
\jlinl{2} is a \jlinl{Number}, so I could apply the first, but \jlinl{1} is also
a \jlinl{Number} while \jlinl{2} is a \jlinl{Int} so the second could apply as well.
Additionally, the two definitions are not related by subtyping, so neither can be called
more specific than the other. Thus, no singular most specific implementation exists,
and Julia errors at \jlinl{+(1, 2)}. Should I statically call out either these definitions
of \jlinl{+} as ones that could be ambiguous or should I treat the invocation as an error
if an ambiguity is possible?

If I look at the prior work many typed languages with multiple dispatch
identified ambiguities. Eliminating the potential for ambiguities was a major
goal and challenge for both Fortress and
Cecil~\cite{Allen11,DBLP:journals/pacmpl/ParkHSR19,Litvinov98}. Julia's
experience suggests, however, that ambiguity checking may actually be
undesirable, particularly at definition sites.

Julia, at one point, included a definition-time ambiguity detection heuristic. 
However, unlike Fortress and Cecil, Julia decided to \emph{remove} their
definition-time ambiguity
heuristic~\footnote{https://github.com/JuliaLang/julia/pull/16125}. From the
perspective of static typing, this decision seems unusual; why suffer runtime
errors when they could have been caught statically?

One of Julia's selling points has been the ``unreasonable effectiveness'' (in their words)
of multiple dispatch~\footnote{\url{https://www.juliaopt.org/meetings/santiago2019/slides/stefan_karpinski.pdf}}.
When the Julia developers talk about this ``effectiveness,'' they are referring
to how multiple dispatch solves the expression problem in a compositional way
wherein different libraries can provide the same resuable abstraction for
their own structures and thereby be composed.

Let's revisit the \jlinl{+} function again. Suppose I implemented a
library that had polynomial types. I would like to be able to add a
polynomial to a polynomial. Additionally, I want to be able to add simple
numbers to a polynomial to get another polynomial with a larger constant.
Moreover, because our polynomial acts like a \jlinl{Number} I subtype
\jlinl{Number}. An (extremely simplified) implementation of this in Julia
could look as follows.
\begin{lstlisting}
struct Polynomial <: Number
	values::Tuple
end

Base.+(x::Polynomial, y::Polynomial) = Polynomial(x.values .+ y.values)
Base.+(x::Polynomial, y::Number) = Polynomial((x.values[1:end-1]..., x.values[end] + y))
\end{lstlisting}
I can then use a \jlinl{Polynomial} as \jlinl{Polynomial((1,2,3)) + 3} which
gives us \jlinl{Polynomial(1,2,6)}. So far, so good.

The problem is that the ``unreasonable effectiveness'' of Julia can burn us.
Practical Julia programs frequently compose different libraries with one another,
therein creating ambiguities. Suppose that there's then an automatic differentiation
library that defines a dual number and an addition method.
\begin{lstlisting}
struct Dual <: Number
	value
	epsilon
end
Base.+(x::Number, y::Dual) = Dual(y.value+x, y.epsilon)
\end{lstlisting}
If I then imported both \jlinl{Polynomial} and \jlinl{Dual} into a third
project and then try to invoke \jlinl{Polynomial(1,2,3) + Dual(0, 0)}, I get
an ambiguity. This
function call is trivially ambiguous: I cannot decide whether to call the
implementation for \jlinl{Polynomial} or \jlinl{Dual}. Therefore a sound
static ambiguity checker should reject these definitions. So far, so good.

The problem with this answer is entirely practical. It is common for libraries
to add special implementations of shared functionality for their own specific
use cases and for those implementations to be potentially ambiguous with
methods from other
packages~\footnote{\url{https://github.com/JuliaStats/DataArrays.jl/issues/51},
\url{https://github.com/JuliaStats/DataArrays.jl/issues/77}}. Resolving these
ambiguities requires adding a suitable more-specific method for \emph{every 
combination of types}. Therefore, disambiguation methods quadratic in the
number of libraries need to be added, requiring both:
\begin{itemize}
	\item a hilarious number of additional implementations,
	\item and perfect awareness of all other extensions of the same function in all other libraries.
\end{itemize}
These requirements are impractical and led to the removal of ambiguity
checking as a default in Julia (though it still exists for use in test
cases)~\footnote{\url{https://github.com/JuliaLang/julia/issues/6190}}. Julia's
experience was that most ambiguity detection were false, and did not eventuate
in actual executions.

Arguably, this realization that ambiguity checking at definition time is
impractical is a consequence of Julia realizing multiple dispatch at scale.
Ambiguity checking makes sense in the context of a single library or project
where a single team of developers controls the entire system. In the
aforementioned cases of both Cecil and Fortress the largest programs written
in the language by an overwhelming margin were their respective compilers. In
these programs ambiguities were clear bugs and the responsibility of only a
single team. In contrast, in Julia's much larger ecosystem of
loosely-interacting developers the very compositionality of multiple dispatch
makes the potential for ambiguities more common and makes it harder to resolve
ambiguities organizationally.

The final reason for not performing ambiguity checking in Julia is that actual
ambiguous calls are \emph{rather uncommon in practice}: most of the time only
one library is being used at a time or composition is relatively simple,
therein avoiding the ambiguity. When an ambiguity error is encountered it is
usually a sign of bad library design or can be resolved easily by the user
adding a suitable disambiguation method themselves---therein addressing the
``awareness of all other implementations'' issue previously mentioned.

As a result of this experience I do not include static ambiguity checking
as a goal for static typing in Julia; method calls in statically-typed
Julia may still fail dynamically with an ``ambiguous method call'' error.

I will now present the type system for Julia in two major parts:
\begin{itemize}
	\item First, a theoretical system that types the \juliette~\cite{belyakova2020world}
	calculus and introduces the key operations and metafunctions.
	\item Second, an implementation that allows programmers to utilize the
	proposed system and allows it to be applied to various use cases.
\end{itemize}

\section{Related Work}

Many of the challenges inherent to typing Julia are not unique. Where does
Typed Julia fall in the broader landscape of gradual typing and how does it
relate to the usual properties of a gradual type system? Additionally, the
combination of static typing and dynamic metaprogramming is not new, nor is
typing for multiple dispatch. Let us consider how the prior work in this space
bit off these problems.

\subsection{Gradual Typing}

\renewcommand{\t}{\ensuremath{t}\xspace}
\renewcommand{\tp}{\ensuremath{t'}\xspace}

\renewcommand{\a}{\ensuremath{a}\xspace}
\renewcommand{\e}{\ensuremath{e}\xspace}

\noindent 

Gradual type systems aim to allow the incremental addition of types to untyped
code. Siek et al~\cite{siek2015refined}, for example, claims five criteria for
gradual typing:

\begin{enumerate}
\item Equivalency to normal static typing for fully typed terms;
\item Equivalency to dynamic typing for fully untyped terms;
\item Type soundness;
\item Statically-typed code will not be blamed for type errors;
\item Any set of type annotations can be removed from a partially typed program without changing program behaviour (the gradual guarantee).
\end{enumerate}
Criteria 1 and 2 are uncontroversial. Criteria 3, 4, and 5, however,
are somewhat trickier.

An example of the disagreement comes from Siek's homepage, wherein he states that
\begin{displayquote}
I've been fortunate to see some of my ideas get used in the software industry:
\begin{itemize}
\item Microsoft created a gradually-typed dialect of JavaScript, called TypeScript.
\item Facebook has added gradual typing to PHP. [...]
\end{itemize}
\end{displayquote}
Neither TypeScript nor Hack (Facebook's PHP type system) satisfy criteria 3 or 4.
Both implement so-called \emph{optional} type systems, wherein they erase type annotations
after static checking. As a result, they are \emph{unsound} (untyped code may violate typed
assumptions at any time) and \emph{have no concept of blame}. As a result, I would conclude
that TypeScript and Hack are not gradually typed, contradicting Siek's own description. This
internal disagreement about what it means to be a gradual type system is reflective of a broader
lack of consensus around the term.

Type soundness alone is a point of much research. Despite being foundational to the concept of
static typing, when untyped code is introduced  At least 5 distinct concepts have been
proposed, each providing different theoretical and practical tradeoffs. Options for how to
define soundness range from ``nothing'' (as in the optional approach) to ``type inhabitants must
carry type tags that are a subtype of the statically declared type'' (the concrete approach, used
in languages like C\#).

This then plays into the discussion about blame. The concept of blame is to redirect errors
created by untyped code that manifest in typed code back into the untyped source. For example,
if an untyped function returns an ill-typed value to a typed caller the untyped function should
be blamed. However, blame inherently depends on what soundness guarantee the system provides, as that
dictates where and when errors will occur. For example, blame makes little sense in a concrete setting
(as only statically checked behaviors are allowed to pass type boundaries and typed mutable references are 
checked on write), but is vital for other semantics.

Finally, I have the gradual guarantee. The idea behind the gradual guarantee
is that it captures the migration process, wherein an untyped program is
incrementally typed while remaining observationally identical. It expresses
this process in the opposite direction: with a type system that satisfies the
gradual guarantee I can always remove types while retaining semantic
equivalence. Thus, if I have an untyped program that only needs annotations
to be typed then a  gradual guarantee-compliant type system would let us add
the annotations in any order and anywhere I so chose.

The problem is that in practice few untyped programs are actually typable
without modification. The TypeScript documentation, for example, explicitly
discusses common modifications needed to Javascript programs for them to be
typable; Takikawa et al's~\cite{takikawa2016sound} benchmark suite required
numerous code modifications beyond simple annotation insertion in order to
satisfy the type checker. Few programmers write perfectly typable untyped code
without the aid of a type checker. The programs that the gradual guarantee
applies to are, in reality, those that have \emph{already} been typed---not
those that have yet to be typed. 

Julia makes answering this question easy. Julia's existing type checks work by
comparing the runtime type tag associated with values to the type annotations
applied to methods, and the guarantees about dispatch follow from such. Thus,
the concrete semantics is a very straightforward choice for Julia. In turn,
this choice means that blame is immaterial and that (without considerable
runtime modifications, as in~\cite{Muehlboeck2017}) I can not satisfy the
gradual guarantee. However, as Julia argument type annotations are already
used for method dispatch and cannot be removed without modifying the program's
semantics, no Julia type system could realistically satisfy the gradual
guarantee. Therefore, while the proposed type system for Julia is not gradual
by some definitions of the term, it allows for mixing typed and untyped code
to interoperate while closely adhering to existing programmer expectations
about type annotations.

The concrete semantics alone are not sufficient for Julia, however, due to the
challenges posed by multiple dispatch as mentioned earlier. The intersection
of multiple dispatch and gradual typing is a relatively unexplored domain. The
primary realized example is for the Dylan
language~\cite{mehnert2010extending}. Gradual typing in Dylan is fundamentally
different than what I describe here, though; Mehnert's approach is built
around a nonlocal type inference algorithm that tries to build out a set of
constraints begotten by a realized program and solve them. On one hand, this
dramatically simplifies several problems, such as the need for protocols or
dealing with underspecified argument types. At the same time, however, it
provides many fewer developer-facing benefits of typing (such as improved
documentation and robustness to changes) compared to the local system that I
present here.

On the vein of nonlocal inference another topic that deserves mention is soft
typing. Soft type systems aim to automatically infer types for untyped
programs using type inference~\cite{fagan1991soft}; the dream was that one
could provide a completely untyped program and it would be inferred to a 
fully typed one for performance. Soft typing has similar problems to the
gradual guarantee, however: real untyped programs are rarely written in every
detail so that they would type check if you just worked hard enough. Soft
typing, in particular,  has issues with untypable operations: soft type system
rely on nonlocal unificiation-based inference algorithms that may take some
time to conclude that something has gone wrong. Whether the unification
process provides an error that makes it clear where that ``wrong'' was depends
on the structure of the program. Practically, then, when writing code for soft
typing one must take as much care as they would if they were writing code for
a traditional gradual type system with static type annotations but while
suffering much worse error messages.

\subsection{Eval}

Most programming languages control \emph{where} definitions are visible, as part
of their scoping mechanisms. Controlling \emph{when} function definitions become
visible is less common.
Languages with an interactive development environment had to deal with the
addition of new definitions for functions from the start~\cite{lisp}.
Originally, these languages were interpreted. In that setting, allowing new
functions to become visible immediately was both easy to implement and did not
incur any performance overhead.

Just-in-time compilation changed the performance landscape, allowing dynamic
languages to have competitive performance. However, this meant that to generate
efficient code, compilers had to commit to particular versions of functions.
If any function is redefined, all code that depends on that function must
be recompiled; furthermore, any function currently executing has to be
deoptimized using mechanisms such as on-stack-replacement~\cite{Hoelzle92}.
The drawback of deoptimization is that it makes the compiler more complex
and hinders some optimizations. For example, a special \c{assume} instruction
is introduced as a barrier to optimizations by \cite{popl18}, who formalized
the speculation and deoptimization happening in a model compiler.

Java allows for dynamic loading of new classes and provides sophisticated
controls for where those classes are visible. This is done by the class-loading
framework that is part of the virtual machine~\cite{LB98}. Much research
happened in that context to allow the Java compiler to optimize code in the
presence of dynamic loading. Detlefs~\cite{detlefs99} describe a technique, which they
call preexistence, that can devirtualize a method call when the receiver object
predates the introduction of a new class. Further research looked at performing
dependency analysis to identify which methods are affected by the newly added
definitions, to be then recompiled on demand~\cite{nguyen1996interprocedural}.
Glew~\cite{glew2005type} describes a type-safe means of inlining and
devirtualization: when newly loaded code is reachable from previously optimized
code, these optimizations must be rechecked.

Controlling \emph{when} definitions take effect is important in dynamic software
updating, where running systems are updated with new code~\cite{lee}.
Hicks~\cite{hicks} introduce a calculus for reasoning about representation-consistent
dynamic software updating in C-like languages. One of the key elements for their
result is the presence of an \c{update} instruction that specifies when an
update is allowed to happen. This has similarities to the world-age mechanism
described here.

Substantial amounts of effort have been put into building calculi that support
\eval and similar constructs. For example, Politz~\cite{politz12} described the
ECMAScript 5.1 semantics for \eval, among other features. Glew~\cite{glew2005method}
formalized dynamic class loading in the framework of Featherweight Java, and
Matthews~\cite{matthews2008operational} developed a calculus for \eval in Scheme. These
works formalize the semantics of dynamically modifiable code in their respective
languages, but, unlike Julia, the languages formalized do not have features
explicitly designed to support efficient implementation.

Julia's use of the world-age mechanism, the method tables that I mentioned
earlier, allows Julia to ``lock down'' what methods might be visible at any
point in time. In this manner, Julia dramatically simplified the
implementation of their JIT compiler.

\subsection{Static Typing of Multiple Dispatch}

Static type systems aim to identify and rule out classes of dynamic error. In
a multiple dispatch context, this entails identifying code wherein one of the
two aforementioned errors, no applicable methods and ambiguous method call,
could occur. Practically, static type systems also enable code completions and
facilitate automatic refactoring. 

Static typing for multiple dispatch is an old idea, with an early
comprehensive concept put forward by \cite{agrawal1991multimethods}, which
describes a type system able to eliminate both no method found and ambiguous
method errors. Agrawal focuses on ambiguous method errors, for as in
comparison, it is easier to identify cases were no method exists versus when
multiple ambiguous methods apply. They describe an algorithm designed to
statically identify cases where an ambiguous invocation may occur and how
likely these cases are under different language semantics. Notably, however,
they focus primarily on systems in the vein of CLOS, which add declaration
order as a means of additional disambiguation beyond subtyping; as a result,
they are able to frequently reject ambiguities in cases where Julia would be
ambiguous.

The Cecil language~\cite{chambers1992object} is the statically typed language with the best analogy to Julia. Cecil features the same external (not associated with any one object) methods and a similar polymorphic type language to Julia's. As a result, its type system can serve as a point of reference for the design of a type system for Julia.

Cecil's type system went through several iterations from the earliest versions described in passing in~\cite{chambers1992object}, further expounded upon to a relatively comprehensive cover of the language in~\cite{chambers1995typechecking}, and finally extended to support constraint based polymorphism~\cite{litvinov1998contraintpolymorphism}. The project aimed to be evolved into the Diesel language (which simplified the object model and implemented a module system), but no publications were forthcoming.

The most relevant work for type checking in Julia is~\cite{chambers1995typechecking}, which describes the core of Cecil's type system. Typechecking in Cecil is broken into two components: \emph{implementation} and \emph{client}.

Implementation-side checking in Cecil is the core of the approach. The issue arises from how Cecil addresses one of Julia's key correctness issues: signatures. In Cecil, inter-library behavioral specifications can be written as a list of methods that are required to work for all possible type instantiations. For example, I could specify addition as
\begin{lstlisting}
type num;
type int subtypes num;
type fraction subtypes num;
signature +(num, num): num;
signature +(int, int): int;
signature +(fraction, fraction): fraction;
\end{lstlisting}
Here, I say that it must be possible to add any two numbers, regardless of their types, producing an arbitrary number. Similarly, I also specify that adding two ints must produce an int and the same for fractions. Cecil statically guarantees that if I added, say, \lstinline{type irrational subtypes num} that it must be possible to add an \lstinline{int} and a \lstinline{irrational} to get a subtype of \lstinline{num}.

Ensuring correctness against a set of signature declarations in Cecil requires that implementations satisfy three properties: conformance, completeness, and consistency. Conformance requires that the argument and result types for each method implementing the signature must be compatible with the types specified by the signature; for example, our implementation of \lstinline{+} for \lstinline{irrational} cannot return a \lstinline{string}.  Completeness enforces that potential signature instantiations must implement the signature; I must implement \lstinline{+} for \lstinline{irrational} because \lstinline{irrational} is a \lstinline{num}. Between them, conformance and completeness rule out message not understood errors. Finally, consistency requires that no ambiguities may exist among different implementations of the same signature; I cannot implement \lstinline{+(fraction, num)} and \lstinline{+(num, fraction)} without \lstinline{+(fraction, fraction)}, for otherwise the latter case would be ambiguous.

Cecil statically requires that all implementations satisfy these three
properties. As a result, client-side checking in Cecil is very simple: as long
as the static type system can guarantee that either some concrete
implementation exists or that some signature is employed, then the call can be
considered safe.

Neither Cecil nor Fortress considered the question of ``what happens with
dynamically-generated code?'' Both systems were, effectively, research projects
that had few external users and were designed from the start to support static
typing. Consequently, \eval and similar dynamic metaprogramming was not a major
concern.

\section{A Core Calculus for Julia}\label{sec:wa-formal}

I formalize my type system for Julia using the \juliette calculus. \juliette
was  originally used in our paper formalizing world age in
Julia~\cite{belyakova2020world}. The calculus focuses on capturing how method
invocation in Julia works with an eye towards what it means to add new methods
and when they can be called. I will first describe the basic \juliette
calculus, define the static semantics for typed \juliette, then consider the
dynamic semantics of typed \juliette alongside the two key correctness 
properties.

\juliette uses \emph{method tables} to represent sets of methods available
for dispatch. The \emph{global table} is the method table that records all
definitions and always reflects the ``true age'' of the world; the global
table is part of \juliette program state. \emph{Local tables} are method
tables used to resolve method dispatch during execution and may lag behind the
global table when new functions are introduced. Local tables are then baked
into program syntax to make them explicit during execution. As in Julia,
\juliette separates method tables (which represent code) from data: as
mentioned in \chapref{ch:julialang}, the world-age semantics only applies to code.
As global variables interact with \eval in the standard way, I omit them from
the calculus.

The treatment of methods is similar in both \juliette and Julia up to (lexically) local method definitions.
In both systems, a generic function is defined by the set of methods with the
same name. In Julia, local methods are syntactic sugar for
global methods with fresh names. For simplicity, I do not model this
aspect of Julia:  \juliette methods are always added to the global method table.
All function calls are resolved using the set of methods found in the
current local table. A function value \m denotes the name of a function and
is not itself a method definition. Then, since \juliette omits global variables,
its global environment is entirely captured by the global method table.

Although in Julia \eval incorporates two features---top-level evaluation and
quotation\footnote{Represented with the \c{\$} operator in Julia, as in
\texttt{eval(:(g() = \$x))} in \figref{fig:eval-methods}.}---only
top-level evaluation is relevant to world age, and this is what I model
in \juliette. Instead of an \eval construct, the calculus has
operations for evaluating expressions in different method-table contexts.
In particular, \juliette
offers a \emph{global evaluation construct} \evalg{\e} (pronounced ``banana
brackets'') that accesses the most recent set of methods. This is equivalent
to \eval's behavior, which evaluates in the latest world age.
Since \juliette does not have global variables, \evalg{\e} reads
from the local environment directly instead of using quotation.

Every function call \mcall{\m}{\vs} in \juliette gets resolved in the closest
enclosing local method table \MT by using an \emph{evaluation-in-a-table}
construct \evalt\MT{\mcall\m\vs}.
Any top-level function call first takes a snippet of the current global table
and then evaluates the call in that \emph{frozen} snippet.
That is, \evalg{\mcall\m\vs} steps
to \evalt\MT{\mcall\m\vs} where \MT is the current global table. Thus, once
a snippet of the global table becomes local table,
all function calls that ensue from the body of of \mcall{\m}{\vs}
will be resolved using this table, reflecting the fact that a currently
executing top-level function call does not see updates to the global table.

\juliette is parameterized over values, types, type
annotations, a subtyping relation, and primitive operations. Only minimal
assumptions are needed for these primitives.

\subsection{Syntax} 

\begin{figure}
\[\footnotesize
\begin{array}{ccl@{\qquad}l}
    \\ \e & ::= & & \text{\emph{Expression}}
    \\ &\Alt& \v & \text{value}
    \\ &\Alt& \x & \text{variable}
    \\ &\Alt& \seq{\e_1}{\e_2} & \text{sequencing}
    \\ &\Alt& \primcalld{\es} & \text{primop call}
    \\ &\Alt& \mcall{\e}{\es} & \text{function call}
    \\ &\Alt& \md & \text{method definition}
    \\ &\Alt& \evalg{\e} & \text{global evaluation}
    \\ &\Alt& \evalt{\MT}{\e} & \text{evaluation in a table}
    \\
    \\ \p & ::= & \evalg{\e} & \text{\emph{Program}}
    \\
    \\ \md & ::= & \mdefd & \text{\emph{Method definition}}
\end{array}\hspace{5mm}
\begin{array}{ccl@{\qquad}l}
    \\ \v & := & \ldots & \text{\emph{Value}}
    \\ &\Alt& \skp & \text{unit value}
    \\ &\Alt& \m & \text{generic function}
    \\
    \\ \g & := & \ldots & \text{\emph{Type tag}}
    \\ &\Alt& \Unit & \text{unit type}
    \\ &\Alt& \mty  & \text{type tag of function \m}
    \\
    \\ \t & := & \ldots & \text{\emph{Type annotation}}
    \\ &\Alt& \top & \text{top type}
\end{array}
\]
\caption{Surface syntax}\label{syntax}
\end{figure}

The surface syntax of \juliette is given in \figref{syntax}. It includes
method definitions \md, function calls~\mcall{\e}{\es}, sequencing
\seq{\e_1}{\e_2}, global evaluation \evalg{\e},
evaluation in a table \evalt{\MT}{\e}, variables \x, values~\v,
primitive calls~\primcall{l}{\es}, type tags \g, and type annotations \t.
Values \v include \skp (unit value, called \c{nothing} in Julia) and \m (generic
function value \footnote{\cbstart Distinguished function values are a simplification begotten by the calculus; Julia allows functions to be called on any receiver, not just special method ones as in \juliette. \cbend}). Primitive operators $\primopd$ represent built-in functions
such as \c{Core.Intrinsics.mul_int}. Type tags \g include \Unit (unit type, called
\c{Nothing} in Julia) and $\mty$ (tag of function value \m). Type annotations \t
include $\top \in \t$ ($\top$ is the top type, called \c{Any} in Julia) and $\g
\subseteq \t$ (all type tags serve as valid type annotations).

\subsection{Semantics}\label{subsec:wa-semantics}

The internal syntax of \juliette is given in the top of \figref{semantics}.  It
includes evaluation result \rslt (either value or error), method table \MT,
and two evaluation contexts, \Xx and \Cx, which are used to define
small-step operational semantics of \juliette.  Evaluation contexts \Xx are
responsible for simple sequencing, such as the order of argument evaluation;
these contexts never contain global/table evaluation expressions \evalg{\cdot}
and \evalt{\MT}{\cdot}.
World evaluation contexts \Cx, on the other hand, capture the full grammar
of expressions.  

Program state is a pair \stwa{\MT}{\plugCx{\Cx}{\e}} of a global method
table \MT and an expression \plugCx{\Cx}{\e}.  

\begin{equation*}
\evalwa{\eqnmarkbox[blue]{MTw1}{\MT}}{\plugCx{\eqnmarkbox[red]{Cx1}{\Cx}}{\e}}{\eqnmarkbox[blue]{MTw2}{\MT'}}{\plugCx{\eqnmarkbox[red]{Cx2}{\Cx}}{\e'}}
\annotatetwo[yshift=1em]{above}{MTw1}{MTw2}{Global method table}
\annotatetwo[yshift=-1em]{below, label below}{Cx1}{Cx2}{World evaluation context}
\end{equation*}
\vspace{1em}

I define the semantics of the calculus using two
judgments: a normal small-step evaluation denoted by
\evalwa{\MT}{\plugCx{\Cx}{\e}}{\MT'}{\plugCx{\Cx}{\e'}}, and a step to an
error \evalerrwa{\MT}{\plugCx{\Cx}{\e}}{\err}.
The $\typeof(\v) \in \g$
operator returns the tag of a value.  I require that $\typeof(\skp) =
\Unit$ and $\typeof(\m) = \mty$.  I write $\typeof(\vs)$ as a shorthand for
$\obar{\typeof(\v)}$.  Function $\Primop(l, \vs) \in \rslt$ computes primop
calls, and function $\PrimopRT(l,\gs) \in \g$ indicates the tag of $l$'s
return value when called with arguments of types \gs.  These functions have
to agree, i.e.  $\forall \vs,\gs. (\typeof(\vs)=\gs \land \Primop(l, \vs) =
\v' \implies \typeof(\v') = \PrimopRT(l,\gs))$.
The subtyping relation \jsub{\t_1}{\t_2} is used for multiple dispatch.
I require that subtyping is transitive
so if $\jsub{\t_1}{\t_2}$ and $\jsub{\t_2}{\t_3}$ then $\jsub{\t_1}{\t_3}$;
transitivity is needed by the type system for subsumption.

\begin{figure}
  \footnotesize
  \[
  \begin{array}{ccl@{\qquad}l}
      \\ \rslt & ::= & & \text{\emph{Result}}
      \\ &\Alt& \v   & \text{value}
      \\ &\Alt& \err & \text{error}
      \\
      \\ \MT & ::= & & \text{\emph{Method table}}
      \\ &\Alt& \varnothing       & \text{empty table}
      \\ &\Alt& \MText{\MT}{\md}  & \text{table extension}
      \\

  \end{array}
  \begin{array}{ccl@{\qquad}l}
      \\ \Xx & ::= & & \text{\emph{Simple evaluation context}}
      \\ &\Alt& \hole & \text{hole}
      \\ &\Alt& \seq{\Xx}{\e} & \text{sequence}
      \\ &\Alt& \primcall{l}{\vs\ \Xx\ \es} & \text{primop call (argument)}
      \\ &\Alt& \mcall{\Xx}{\es} & \text{function call (callee)}
      \\ &\Alt& \mcall{\v}{\vs\ \Xx\ \es} & \text{function call (argument)}
      \\
      \\ \Cx & ::= & & \text{\emph{World evaluation context}}
      \\ &\Alt& \Xx & \text{simple context}
      \\ &\Alt& \plugx{\Xx}{\evalg{\Cx}} & \text{global evaluation}
      \\ &\Alt& \plugx{\Xx}{\evalt{\MT}{\Cx}} & \text{evaluation in a table \MT}
  \end{array}
  \]
~\\
\begin{mathpar}
\inferrule[\WAE{Seq}]
  { }
  { \evalwad{\plugCx{\Cx}{\seq{\v}{\e}}}{\plugCx{\Cx}{\e}} }

\inferrule[\WAE{Primop}]
  { \Primop(l, \vs) = \v' }
  { \evalwad{\plugCx{\Cx}{\primcalld{\vs}}}{\plugCx{\Cx}{\v'}} }

\inferrule[\WAE{MD}]
  { \md \equiv \mdefd }
  { \evalwa
      {\MT}{\plugCx{\Cx}{\md}}
      {\MText{\MT}{\md}}{\plugCx{\Cx}{\m}} }

\inferrule[\WAE{CallGlobal}]
  { }
  { \evalwa{\tikzmarknode{MT1}{\MT}}
      {\plugCx\Cx{\evalg{\plugx\Xx{ \mcall\m\vs}}}}
      {\MT}
      {\plugCx\Cx{\evalg{\plugx\Xx{ \evalt{\tikzmarknode{MT2}{\MT}}{\mcall\m\vs}}}}}}
\annotatetwo[yshift=-1em]{below, label below}{MT1}{MT2}{Copy global table to local}
\vspace{1em}

\inferrule[\WAE{CallLocal}]
  { \typeof(\vs) = \gs \\ \getmd(\tikzmarknode{MTl}{\MT'},\m,\gs)=\mdefd }
  { \evalwa{\tikzmarknode{MTw}{\MT}}
      {\plugCx\Cx{\evalt{\MT'}{\plugx\Xx{\mcall\m\vs}}}}
      \MT{\plugCx\Cx{\evalt{\MT'}{\plugx\Xx{\e\subst\xs\vs}}}} }
\annotate[yshift=1em]{above}{MTl}{Use local method table}
\annotate[yshift=-1em]{below}{MTw}{Ignore global table}
\vspace{2em}

\inferrule[\WAE{ValGlobal}]
  { }
  { \evalwad{\plugCx{\Cx}{\evalg{\v}}}{\plugCx{\Cx}{\v}} }

\inferrule[\WAE{ValLocal}]
  { }
  { \evalwad{\plugCx\Cx{\evalt{\MT'}\v}}{\plugCx\Cx\v} }
\end{mathpar}

~\\

\begin{mathpar}\footnotesize
\inferrule[\WAE{VarErr}]
  { }
  { \evalerrwad{\plugCx{\Cx}{\x}}{\err} }

\inferrule[\WAE{PrimopErr}]
  { \Primop(l, \vs) = \err }
  { \evalerrwad{\plugCx{\Cx}{\primcalld{\vs}}}{\err} }

\inferrule[\WAE{CalleeErr}]
{ \v_c \neq \m }
{ \evalerrwad
    {\plugCx{\Cx}{\mcall{\v_c}{\vs}}}
    {\err} }

\inferrule[\WAE{CallErr}]
{ \typeof(\vs) = \gs \\ \getmd(\MT', \m, \gs) = \err }
{ \evalerrwad
    {\plugCx{\Cx}{\evalt{\MT'}{\plugx{\Xx}{\mcall{\m}{\vs}}}}}
    {\err} }
\end{mathpar}
~\\[3mm]
\[\footnotesize
\begin{array}{rcl}
  \getmd(\MT, \m, \gs) & = &
    \min(\applcbl(\latest(\MT), \m, \gs)) \\
  \\
  \latest(\MT) & = & \latest(\emptyset, \MT) \\
  \latest(mds, \varnothing) & = & mds \\
  \latest(mds, \MText{\MT}{\md}) & = & \latest(mds \cup \md, \MT) 
    \text{ if } \neg \containseq(mds, \md) \\
  \latest(mds, \MText{\MT}{\md}) & = & \latest(mds, \MT)
    \qquad\, \text{ if } \containseq(mds, \md) \\
  \\
  \applcbl(mds, \m, \gs) & = & \{\mdefd \in mds\ |\
    \jsub{\gs}{\ts} \} \\
  \\
  \min(mds) & = & \mdefd \in mds \text{ such that }
    \forall \mdef{\m}{\obar{\jty{\_}{\t'}}}{\_}
    \in mds\ .\ \jsub{\ts}{\ts'}\\
  \min(mds) & = & \err \text{ otherwise}
  \\
  \containseq(mds, \md) & = & \exists\,\md' \in mds \text{ such that } \\
    & & \quad (\md \equiv \mdef{\m}{\obar{\jty{\_}{\t}}}{\_}) \ \land\
        (\md' \equiv \mdef{\m}{\obar{\jty{\_}{\t'}}}{\_}) \
        \land \ \jsub{\ts}{\ts'} \land\ \jsub{\ts'}{\ts}
\end{array}
\]
\caption{Internal Syntax and Semantics}\label{semantics}
\end{figure}

\paragraph{Normal Evaluation}
These rules capture successful program executions.

Rule \WAE{Seq} is completely standard: it throws away the evaluated part of
a sequencing expression. Rules \WAE{ValGlobal} and \WAE{ValLocal} pass value
\v to the outer context. This is similar to Julia where \eval returns the
result of evaluating the argument to its caller.
Rule \WAE{MD} is responsible for updating the global table: a method
definition $\md$ will extend the current global table \MT into
$\MText{\MT}{\md}$, and itself evaluate to $\m$, which is a function value.
Note that \WAE{MD} only extends the method table and leaves existing
definitions in place.  If the table contains multiple definitions of a
method with the same signature, it is then the dispatcher's responsibility
to select the right method; this mechanism is described below in more
detail.

The two call forms $\WAE{CallGlobal}$ and $\WAE{CallLocal}$ form the core of
the calculus. The rule $\WAE{CallGlobal}$ describes the case where a method
is called directly from a global evaluation expression. In Julia, this means
either a top-level call, an \invokelatest call, or a call within \eval such
as \c{eval(:(g(...)))}. The ``direct'' part is encoded with the use of a
simple evaluation context \Xx.  In this global-call case, I need to save
the current method table into the evaluation context for a subsequent use by
$\WAE{CallLocal}$. To do this, I annotate
the call \mcall{\m}{\vs} with a copy of the current global method table
$\MT$, producing \evalt{\MT}{\mcall\m\vs}.

To perform a local call---or, equivalently, a call after the invocation has
been wrapped in an annotation specifying the current global
table---$\WAE{CallLocal}$ is used.  This rule resolves the call according to
the tag-based multiple-dispatch semantics in the ``deepest'' method table
$\MT'$ (the use of \Xx makes sure there are no method tables between $\MT'$
and the call). Once an appropriate method has been found, it proceeds as a
normal invocation rule would, replacing the method invocation with the
substituted-for-arguments method body. Note that the body of the method is
still wrapped in the \evalt{\MT'}{} context. This ensures that nested calls
will be resolved in the same table (unless they are more deeply wrapped
in a global evaluation \evalg{}).

An auxiliary meta-function $\getmd(\MT, \m, \gs)$, which is used to resolve
multiple dispatch, is defined in the bottom of \figref{semantics}. This function
returns the most specific method applicable to arguments with type tags \gs, or
errs if such a method does not exist.  If the method table contains
multiple equivalent methods, older ones are ignored. For example, for the
program
\[
  \evalg{\seq{\mdef{\mval{g}}{}{2}}
  {\seq{\mdef{\mval{g}}{}{42}}{\mcall{\mval{g}}{}}}},
\]
function call \mcall{\mval{g}}{} is going to be resolved in the table
$\MText{(\MText{\varnothing}{\mdef{\mval{g}}{}{2}})}{\mdef{\mval{g}}{}{42}}$,
which contains two equivalent methods (I call methods equivalent if they
have the same name and their argument type annotations are equivalent with
respect to subtyping). In this case, the function \getmd will
return method \mdef{\mval{g}}{}{42} because it is the newest method out of
the two.

Note that functions can be mutually recursive because of the dynamic nature
of function call resolution.

\paragraph{Error Evaluation}

These rules capture all possible error states of \juliette.
Rule \WAE{VarErr} covers the case of a free variable, an \c{UndefVarError} in
Julia.  \WAE{PrimopErr} accounts for errors in primitive operations such as
\c{DivideError}.  \WAE{CalleeErr} fires when a non-function value is called.
Finally, \WAE{CallErr} accounts for multiple-dispatch resolution errors,
e.g. when the set of applicable methods is empty (no method found), and when
there is no best method (ambiguous method).

\section{Static Type System}
\label{sec:typesystem}

The type system for \juliette augments the method definition form with a
return type $\mdefty{\m}{\obar{\jty{\x}{\t}}}{\mu}{\e}$; these types are
either expressed explicitly (in typed code) or are inferred statically (in
untyped code, based on the declared argument types). I do not modify the
dynamic semantics of \juliette for this system. Untyped methods have a return
type $\mu = \star$, while typed methods have a return type $\mu = \t$.

\newcommand{\checkproto}{\ensuremath{\mathop{\mathbf{checkproto}}}}
\newcommand{\pdef}[3]{\ensuremath{\triangleleft\, \mathop{#1}(#2)::#3 \,\triangleright}}

\newcommand{\pd}{\ensuremath{\texttt{pd}}\xspace}
\newcommand{\PT}{\ensuremath{\mathrm{P}}\xspace}
\begin{figure}

  \footnotesize
  \[
  \begin{array}{ccl@{\qquad}l}
      \\ \MT & ::= & & \text{\emph{Method table}}
      \\ &\Alt& \varnothing       & \text{empty table}
      \\ &\Alt& \MText{\MT}{\mdefty{\m}{\obar{\jty{\x}{\t}}}{\mu}{\e}}  & \text{extension}
      \\
      \\ \PT & ::= & & \text{\emph{Protocol table}}
      \\ &\Alt& \varnothing       & \text{empty table}
      \\ &\Alt& \MText{\PT}{\pdef\m{\obar{\jty{}{\t}}}{\t_r}}  & \text{extension}
      \\
      \\ \Gamma & ::= & & \text{\emph{Variable typing}}
      \\ &\Alt& \varnothing       & \text{empty table}
      \\ &\Alt& \jty{\x}{\t} \;\bullet\; \Gamma & \text{variable typing}
  \end{array}
  \begin{array}{ccl@{\qquad}l}
    \\ \e & ::= & & \ldots
    \\ &\Alt& \mcallt{\e}{\es}\MT{\t} & \text{checked function call}
    \\
    \\ \Xx & ::= & & \ldots
    \\ &\Alt& \mcallt{\Xx}{\es}\MT\t & \text{checked function call (callee)}
    \\ &\Alt& \mcallt{\v}{\vs\ \Xx\ \es}\MT\t & \text{checked function call (argument)} 
    \\
    \\ \md & ::= & \mdeftyd & \text{\emph{Method definition}}
  \end{array}
  \]
\begin{mathpar}
\inferrule[\TRR{Value}]
  {  \typeof(\v) = \g }
  { \Gamma \vdash_{\MT ; \PT} \v \Rightarrow \v : \g }

\inferrule[\TRR{Var}]
  { \x : \t \in \Gamma }
  { \Gamma \vdash_{\MT ; \PT} \x \Rightarrow \x : \t }

\inferrule[\TRR{Seq}]
  { \Gamma \vdash_{\MT ; \PT} \e_1 \Rightarrow \e'_1: \t_1 \\ \Gamma \vdash_{\MT ; \PT} \e_2 \Rightarrow \e'_2 : \t_2  }
  { \Gamma \vdash_{\MT ; \PT} \seq{\e_1}{\e_2} \Rightarrow \seq{\e'_1}{\e'_2} : \t_2 }

\inferrule[\TRR{Sub}]
		{\Gamma \vdash_{\MT ; \PT} \e \Rightarrow \e' : \t \\ \jsub{\t}{\t'}}
		{\Gamma \vdash_{\MT ; \PT} \e \Rightarrow \e' : \t'}

\inferrule[\TRR{Primop}]
  { \obar{\Gamma \vdash_{\MT ; \PT} \e \Rightarrow \e' : \t} \\ \Psi(l, \ts) = \t' }
  { \Gamma \vdash_{\MT ; \PT} \primcalld{\es} \Rightarrow \primcalld{\es'} : \t' }

\inferrule[\TRR{Call}]
  { \Gamma \vdash_{\MT ; \PT} \e_r \Rightarrow \e'_r : \typeof(\m) \\ 
  \obar{\Gamma \vdash_{\MT ; \PT} \e_a \Rightarrow \e'_a : \t_a} \\ 
  \dispatch(\MT, \PT, \m, \obar{\t_a}) = \t_r }
  { \Gamma \vdash_{\MT ; \PT} \mcall{\e_r}{\obar{\e_a}} \Rightarrow \mcallt{\e'_r}{\obar{\e_a'}}\MT{\tikzmarknode{tres}{\t_r}} : \t_r }
\annotate[yshift=-0.5em]{below}{tres}{Embed return type}
\vspace{0.5em}

\end{mathpar}

\begin{mathpar}
\inferrule[\TRR{Prog}]
  {  
  	\obar\md \vdash \pd_i \; \forall \pd_i \in \obar\pd \\
  	\obar\md;\ \obar\pd \vdash \md_i \Rightarrow \md_i \; \forall \md_i \in \obar\md }
  { \vdash \obar{\md}; \obar{\pd} \; \e \Rightarrow \obar{\md'} \; \e }

\inferrule[\TRR{MD-Typed}]
  { \obar{\jty\x{\t_a}} \vdash_{\MT ; \PT} \e \Rightarrow \e' : \t_r}
  { \MT; \PT \vdash \mdefty{\m}{\obar{\jty{\x}{\t_a}}}{\t_r}{\e} \Rightarrow \mdefty{\m}{\obar{\jty{\x}{\t_a}}}{\t_r}{\e'} }

\inferrule[\TRR{MD-Untyped}]
  { }
  { \MT; \PT \vdash \mdefty{\m}{\obar{\jty{\x}{\t}}}{\star}{\e} \Rightarrow \mdefty{\m}{\obar{\jty{\x}{\t}}}{\star}{\e} }

\inferrule[\TRR{PD}]
 { \checkproto(\m, \obar{\t}, \t_r, \MT) }
 { \MT \vdash \pdef\m{\obar{\jty{}{\t}}}{\t_r} }
\end{mathpar}

\caption{Typed translation for \juliette}\label{typesystem}
\end{figure}

Typing for \juliette works over partially-typed programs of the form $\MT; \PT
\; \e$ consisting of a method table $\MT$, a protocol table $\PT$, and an
executing (untyped) expression $\e$. \TRR{Prog} defines the program
well-formedness relation, ensuring that all protocol definitions $\obar\pd$ in
the program are satisfied followed by checking that all method definitions are
well-formed against the other definitions and the protocol table. Typed
methods are well-formed if their bodies typecheck; untyped methods are always
well-formed. Protocol well-formedness is delegated to the $\checkproto$
metafunction whose definition I will explore later.

The basic equation typing relation types and translates an expression with the relation

\vspace{1em}
\begin{ceqn}
\begin{equation*}
\tikzmarknode{Gam}{\Gamma} \vdash_{\MT ; \PT} \tikzmarknode{ei}{\e} \Rightarrow \tikzmarknode{er}{\e'} : \tikzmarknode{tres}{\t}
\annotate[yshift=1em]{above, left}{Gam}{Variable typing context}
\annotate[yshift=-1em]{below, left}{ei}{Expression to check}
\annotate[yshift=-1em]{below}{er}{Resulting expression}
\annotate[yshift=1em]{above}{tres}{Resulting type}
\end{equation*}
\end{ceqn}
\vspace{1em}

\noindent Expressions are type checked against a typing context $\Gamma$ that contains
the variables in scope (introduced in \juliette as arguments to the function
containing the current expression). The expression being checked $\e$ is then
translated into a resulting expression $\e'$ while producing a result type
$\t$.

The type system addresses each of the five errors as follows:
\begin{itemize}
	\item \WAE{VarError} by ensuring that all variables lexically exist.
	\item \WAE{PrimopError} by checking that the primitive operation is defined for the realized argument types.
	\item \WAE{CalleeErr} by ensuring that only method-typed variables are valid in invocee position.
	\item \WAE{CallErr}, with the exception of ambiguities, by ensuring that a suitable implementation always exists be it by there being an implementation for a supertype or there being a suitable protocol.
	\item \WAE{TypeErr} by ensuring that, so long as the dynamically used method table is the same as the statically checked one, the values returned from methods always are of the statically-known type.
\end{itemize}

\WAE{TypeErr} deserves a deeper examination. Casts do not appear in normal
\juliette code; they are only inserted by the translation to enforce
statically-determined return types. In the prior example of \jlinl{f} and
\jlinl{g}, \jlinl{g} expected \jlinl{f} to always return a \jlinl{Int} but the
method table was extended at runtime with a \jlinl{f} that returned a
\jlinl{String}. Therefore, I \emph{can} suffer cast failures when a new
method is added after the method table the program was typechecked against. 

The need to check the real returned value from a method determines the
structure of \TRR{Call}. \TRR{Call} typechecks a method invocation by ensuring
that it is statically resolvable using the \dispatch metafunction. The rule
then translates checked calls to include both the statically-determined return
type and the method table against which that return type was generated. This
method table and return type will be used later to ensure type safety of returned
values.

\subsection{Static Dispatch Resolution}

The key operator used by the static type system is \dispatch. Method calls are
the key component of Julia's semantics and the \dispatch metafunction is used
in \TRR{Call} to determine the two key properties for a function call:
\begin{itemize}
	\item Will there be a method to invoke?
	\item What will the return type be?
\end{itemize}

To motivate our treatment of dispatch, suppose that I have evaluated a method
call down to a bare call $\m(\obar\v)$. At this point in evaluation the called
method is known and all arguments are now values; I must now figure out what
method could actually be called here. The \juliette calculus handles this
using rule \WAE{CallLocal}, which works primarily through the  $\getmd(\MT,
\m, \obar{\s}) = \mdefty\m{\obar{\jty\x{\t_a}}}{\_}\e$ function where $\MT$ is
our current method table, $\obar\s = \obar{\typeof(\v)}$ is the vector of
value types, and the method definition $\mdefty\m{\obar{\jty\x{\t_a}}}{\_}\e$
is the resulting implementation found for this value vector. If it cannot find a 
singular most specific implementation it instead produces $\err$.

Now, suppose that instead I have an unevaluated call $\e(\obar{\e'})$. If we
suppose that its arguments are typed $\Gamma \vdash_\MT \e' : \t'$ then by
soundness all ensuing argument values $\obar\v$ are instances of $\obar{\t'}$.
Therefore, to prove that \WAE{CallLocal} can eventually apply it suffices to
show that for any value vector $\obar\v$ with tags $\obar\sigma =
\obar{\typeof(\v)}$ such that $\jsub{\obar\sigma}{\obar{\t'}}$ that I can find
some unique most specific method definition in $\MT$. The statically-inferred
return types are then simply the meet of the return types of the identified
methods.

I use the abstract metafunction $\dispatch$ to describe this operation. I say that
$\dispatch(\MT,\PT,\m,\obar\t)=\t_r$ holds if it can guarantee that calling $\m$
in method table $\MT$ and with protocol table $\PT$ with arguments $\obar\t$ will produce a return type that
is a subtype of $\t_r$. Alternatively, $\dispatch(\MT, \PT, \m, \obar\t) = \err$ if
it cannot provide this guarantee. 

\paragraph{Success} The $\dispatch$ metafunction can succeed in two cases:
\begin{itemize}
	\item Satisfying method: there is some method whose arguments $\obar{\t_a}$ satisfy $\jsub{\obar\t}{\obar{\t_a}}$. In this case resolving $\dispatch$ is trivial: that implementation ensures that the call will go through to something, even if there are more specific implementations that might also be called.
	\item Protocol: there is no single method whose arguments are a supertype of the given arguments, but for every concrete type vector $\jsub{\obar\sigma}{\obar{\t'}}$ there is a method with type $\obar{\t_a}$ such that $\jsub{\obar\sigma}{\obar{\t_a}}$.
\end{itemize}
The first case is trivial: I know a suitable method exists, so I can simply
say ``at least that one will be invoked.'' The second case is addressed by
the protocol mechanism and is deferred by \dispatch to those static declarations.
As mentioned previously, I could check for protocol safety as part of \dispatch
but determined that the design consequences were undesirable.

Thus, I propose a solution to $\dispatch$ success in two parts: the
$\dispatch$ metafunction \emph{itself} only checks for satisfying methods,
methods whose declared types are supertypes of the given argument typing. We
then pair this with an \emph{protocol checking} mechanism, wherein
programmers can define protocols that act as an independent source of truth
for both protocol implementations (e.g. every \jlinl{AbstractArray} must have
a \jlinl{size}) as well as for use sites.

\paragraph{Failure} A \juliette (and Julia) function call can fail in two ways:
\begin{itemize}
	\item No implementations exist: There are no methods whose signatures are a supertype of the runtime argument type
	vector. Equivalently, $\exists \obar{\v}$ such that $\obar{\jsub{\typeof(\v)}{\t}}$ but where for any method $\mdefty{\m}{\obar{\jty{\x}{\t_a}}}{\t_r}{\e}$
	there is some $i$ such that $\notjsub{\t_i}{\t_{a, i}}$.
	\item There are multiple \emph{most specific} implementations of $\m$ for this case. 
\end{itemize}

The first kind of failure, no implementation exists, is straightforward and is
the multidispatch equivalent of  ``message not understood;'' effectively, it
is the inverse of the ``method found'' success case. The second kind is more
challenging. Julia dispatches method calls to the \emph{most specific},
\emph{satisfying} implementation, which is represented in \juliette with the
$\min$ and $\applcbl$ metafunctions. As mentioned earlier, I do not statically
protect against ambiguities; method ambiguous errors may occur at any invocation
at runtime.

\paragraph{Definition}
This then brings us to the definition of $\dispatch(\MT, \PT, \m, \obar\t)$:

\begin{mathpar}
\inferrule
	{\pdef\m{\obar{\jty{}{\t_a}}}{\t_r} \in \PT \\ \jsub{\obar\t}{\obar{\t_a}}}
	{\dispatch(\MT, \PT, \m, \obar\t) = \t_r}

\inferrule 
 {\exists{\mdefty{\m}{\obar{\jty{\x}{\t_a}}}{\mu}{\e}} \in \MT \wedge {\jsub{\obar\t}{\obar{\t_a}}} \\
  \eqnmark{sub}{\forall \obar\s : \jsub{\obar\s}{\obar \t},\; \getmd(\MT, \m, \obar{\s}) = \mdefty{\m}{\obar{\jty{\x}{\t'_a}}}{\mu}{\e} \implies \infers{\obar{\jty{\x}{\t'_a}}}{\MT}{\e}{\t'_r}} \wedge \t'_r <: \t_r}
 {\dispatch(\MT, \PT, \m, \obar\t) = \t_r}
\end{mathpar}

Trivially, if the method is one that I statically checked using a protocol
then the dispatch will defer to that protocol. Otherwise, as is handled by the
second rule, I need to check if this specific invocation is safe.

I break up non-protocol dispatch checking into two clauses. The first clause
ensures that a suitable method always exists, while the second ensures that
the return type is fully general for all possible implementations. The
$\dispatch$ function may return either a typed or typed method as most
specific; the return type need only be valid for the final result and does not
have to reflect a static typing of the method body.

The second clause covers the case when I dispatch to a method more specific
than this ``sufficiently general'' one; it states that any method that can be
invoked from the current call site must be \emph{inferrable} to have the
correct return type.

Inference in my system is treated as a black box. The judgment takes the form
\vspace{2em}
\begin{ceqn}
\begin{equation*}
\tag{Type inference}
\infers{\tikzmarknode{env}{\Gamma}}{\tikzmarknode{mt}{\MT}}{\tikzmarknode{e}{\e}}{\tikzmarknode{t}{\t}}
\annotate[yshift=1em]{above, left}{env}{Inference environment}
\annotate[yshift=-1em]{below, left}{mt}{Inference method table}
\annotate[yshift=1em]{above, right}{e}{Expression to infer}
\annotate[yshift=-1em]{below, right}{t}{Inferred type}
\label{eqn:inference}
\end{equation*}
\end{ceqn}
\vspace{2em}

For convenience I define the notation $\evalwam{\MT'}{\e}{\MT}{\MT''}{\e'}$ as
$\forall \Cx\, \evalwa{\MT'}{\plugCx\Cx{\evalt\MT\e}}{\MT''}{\plugCx\Cx{\evalt\MT{\e'}}}$;
interpret this as ``the expression $\e$ evaluates to $\e'$ with local method table $\MT$.''

The primary required property of inference is weaker than that for a
traditional static type system. Instead of a strong soundness guarantee that
rules out errors entirely inference merely states that \emph{if} an expression
steps then it will be well-typed. Thus, soundness of inference is stated as
follows:

\begin{definition}[Soundness of inference]
\label{def:inferencesound}
If $\infers{}{\MT}{\e}{\t}$ then either
\begin{itemize}
	\item $\exists \v:\;\e = \v$ and $\jsub{\typeof(\v)}{\t}$
	\item $\forall \MT'\, \exists \MT'', \evalwam{\MT'}{\e}\MT{\MT''}{\e'}$ and $\infers{}{\MT}{\e'}{\t}$
	\item $\forall \MT',\, \evalerrwa{\MT'}{\evalt\MT\e}\err$
\end{itemize}
\end{definition}

For the purposes of typing I additionally require that inference is
consistent under substitution:
\begin{definition}[Substitution for inference]
If $\infers{\obar{\jty\x{\t'}}}{\MT}{\e}{\t}$ and $\jsub{\typeof(\obar\v)}{\obar{\t'}}$ then $\infers{}{\MT}{\e\subst\vs\xs}{\t}$
\end{definition}

The good news, then, is that $\dispatch$ itself is easy to implement. I just
need to look for a method whose arguments are supertypes of the known
arguments. If such a method exists, I meet its (potentially-inferred) return
type with the return type of all other possible implementations and I are
done. Julia already implements a type inference system with which I can infer
return types for untyped methods making this task straightforward.

The remaining problem is how does the protocol checking metafunction
$\checkproto$ work?

\subsection{Protocols}

Consider the earlier example of \jlinl{Range} and \jlinl{List}, both of  which
were required to have an implementation of \jlinl{size}. Implementing the
check for \jlinl{size} is easy enough - simply explore all possible
instantiation of \jlinl{AbstractArray} - but it is  not so easy in the general case.
For example, I could have a version of \jlinl{size} that takes multiple
arguments like \jlinl{size(::AbstractArray, ::AbstractArray)} (which requires joint exhaustion
of all arguments) or could start using Julia's type language for richer
properties such as \jlinl{f(::T, ::T) where T<:AbstractArray}.

The protocol checker need not be complete; it must only be sound. My formalism
relies on an abstract protocol checker $\checkproto(\m, \obar{\t}, \MT)$ that
ensures that protocol $\m$ exists in the method table $\MT$ with type
arguments $\t$. The implementation of \checkproto must adhere to the following
specification:
\begin{mathpar}
\inferrule
	{ \forall \vs: (\typeof(\vs) = \gs \wedge \gs <: \ts), \\ 
		\getmd(\MT, \m, \gs) = \mdefty{\m}{\obar{\jty{\x}{\t}}}{\mu}{\e} \\ \wedge \\ \infers{\obar{\jty{\x}{\t}}}{\MT}{\e}{\t'_r} \\ \wedge \\ \jsub{\t'_r}{\t_r} }
	{\checkproto(\m, \obar{\t}, \t_r, \MT)}
\end{mathpar}

Within \juliette I cannot add new types, only new methods. As a result,
\checkproto can guarantee existence of suitable methods as the method table
continues to evolve; like \dispatch, it can not guarantee that the return type
is still correct, however, so dynamic checks will still be needed if the
method table is modified.

Ensuring that some implementation exists for any instantiation of a given type
is analogous to completeness checking in pattern matching. For the \juliette
calculus I present a simple algorithm based on
Maranget~\cite{maranget2007warnings} that is able to provide sound and
complete checking for the \juliette calculus, then discuss its key limitations
when applied to generalized Julia.

\paragraph{Maranget-style checking} Maranget's algorithm is described in terms of
patterns; I will first describe the pattern language and the briefly go over
the function of the algorithm. I will then describe the reduction from
\juliette protocol checking to these patterns.

\newcommand{\wildpat}{\texttt{\_}} In Maranget's system values solely consist
of constructors $c(v_1, \ldots, v_n)$; base values are constructors with no
arguments (for example the nil constructor $nil()$). Patterns are then either
wildcards $\wildpat$, constructor applications $c(v_1, \ldots, v_n)$, or
disjunctions $p_1 \; | \; p_2$. 

\newcommand{\inex}[1]{\ensuremath{\mathcal{I}(#1)}}
I use Maranget's inexhaustiveness algorithm \inex{P, n} to implement completeness
checking. In Maranget's setting $P$ is a ``matrix'' of patterns (that is, each pattern
is a row) and $n$ is the number of arguments being matched. The algorithm \inex{P, n} returns
a pattern vector $p$ of size $n$ that is \emph{not} matched by $P$; if no such vector
exists, it returns $\bot$. The implementation of \inex{P, n} used for Julia is
identical to that of Maranget; I refer the reader to that treatment for details.

I then reduce completeness checking in \juliette to an instance of pattern
matching in Maranget's language. I consider each leaf type $\sigma$ to be a
constructor with zero arguments and to inhabit the same type membership
heirarchy as exists in \juliette. For example, \jlinl{Int()} is a constructor
for the type \jlinl{Number}. In this manner, I can trivially define a mapping
$\text{pat}(\sigma) = \sigma()$ from \juliette base types to patterns.
Abstract types are handled by conversion into a disjunction of base types.

Similarly, I can take a set of implementations and abstract an equivalent
pattern matrix. Suppose that I have method implementations $\m(\obar{\t^1}),
\ldots, \m(\obar{\t^n})$. I can
construct a pattern matrix with $P(\m(\obar{\t^1}),\ldots, \m(\obar{\t^n})) =
\left[\begin{array}{c}\obar{\text{pat}(\t_1)} \\ \ldots \\
\obar{\text{pat}(\t_n)}\end{array}\right]$ that matches the converted argument 
if and only if a suitable method exists in the original \juliette method table.

This system is trivially incomplete. In particular, while it can adequately 
handle tag types, tuples, unions, and \emph{simple} (non-bounded, non-diagonal)
parametric types, its ability to check signatures that exploit Julia's bounded
polymorphism is very limited.

As a practical example of where this may arise consider the earlier example
of \jlinl{+} of two \jlinl{Numbers}. I might write
\begin{lstlisting}
@protocol +(::Number, ::Number)::Number
\end{lstlisting}
to specify that there must exist an implementation of \jlinl{+} for 
\emph{any two} numbers. That is, if I have both \jlinl{Int} and \jlinl{Float64}
as subtypes of \jlinl{Number} I must implement
\begin{lstlisting}
+(::Int, ::Int)::Number
+(::Float64, ::Float64)::Number
+(::Float64, ::Int)::Number
+(::Int, ::Float64)::Number
\end{lstlisting}
to satisfy this protocol. Obviously, writing exponentially many definitions
gets wearisome after only a few types and Julia does not require that \jlinl{+}
implementations do so. Instead, Julia uses a mechanism called promotion (which I
will not describe in detail here) to make the two types equal, therein requiring
only two implementations:
\begin{lstlisting}
+(::Int, ::Int)::Int
+(::Float64, ::Float64)::Float64
\end{lstlisting}

I could write this requirement as a protocol in Julia's full type language as
\begin{lstlisting}
@protocol +(::T, ::T)::T where T<:Number
\end{lstlisting}
but then this wanders enthusiastically outside of the type language supported
by the \emph{warnings}-derived algorithm.

The Maranget-derived checker is sufficient to check protocols in \juliette
programs because \juliette's type system does not have many of the complex
features that real Julia has. While the concept of protocols is quite general,
this specific implementation is not; a truly generic type system for
generalized Julia would need a much more sophisticated protocol checking
mechanism than I describe here.

\section{Typed Dynamic Semantics}

Next, I describe the dynamic semantics for typed \juliette. 

\begin{figure}
\begin{mathpar}
\inferrule[\WAE{TypedCallGlobal}]
  {  }
  { \evalwad
      {\plugCx\Cx{\evalg{{\plugx\Xx{\mcallt\m\vs{\MT'}{\t_r}}}}}}
      {\plugCx\Cx{\evalt{\MT}{{\plugx\Xx{\mcallt\m\vs{\MT'}{\t_r}}}}}}}

\inferrule[\WAE{CallInferredLocal}]
  { \typeof(\vs) = \gs \\ \getmd(\MT',\m,\gs)=\mdeftyd }
  { \evalwad
      {\plugCx\Cx{\evalt{\MT'}{{\plugx\Xx{\mcallt\m\vs{\MT'}{\t_r}}}}}}
      {\plugCx\Cx{\evalt{\MT'}{{\plugx\Xx{\evalinf{\e\subst\xs\vs}}}}} }}

\inferrule[\WAE{CallCheckedLocal}]
  { \typeof(\vs) = \gs \\ \getmd(\MT',\m,\gs)=\mdeftyd }
  { \evalwad
      {\plugCx\Cx{\evalt{\MT'}{{\plugx\Xx{\mcallt\m\vs{\MT' \bullet \MT''}{\t_r}}}}}}
      {\plugCx\Cx{\evalt{\MT'}{{\plugx\Xx{\evalchk{\e\subst\xs\vs}{\t_r}}}}} }}

\inferrule[\WAE{ValChecked}]
  { \jsub{\typeof(\v)}{\t} }
  { \evalwad{\plugCx{\Cx}{{\evalchk{\v}\t}}}{\plugCx{\Cx}{\v}} }

\inferrule[\WAE{ValInferred}]
  {  }
  { \evalwad{\plugCx{\Cx}{{\evalinf{\v}}}}{\plugCx{\Cx}{\v}} }
\end{mathpar}

~\\

\begin{mathpar}\footnotesize
\inferrule[\WAE{TypeErr}]
{ \notjsub{\typeof(\v)}{\t} }
{ \evalerrwad
    {\plugCx{\Cx}{{\evalchk{\v}{\t}}}}
    {\err} }
\end{mathpar}
\caption{Dynamic semantics for typed \juliette.}
\end{figure}

The main differentiation between the typed and untyped dynamic semantics for
\juliette arises from the handling of method invocation. Typed \juliette has
three rules for method calls:
\begin{itemize}
	\item \WAE{Call} used for untyped invocation, and is the same as base \juliette.
	\item \WAE{TypedCallGlobal} used to capture the current global evaluation context
	into a new local evaluation context.
	\item \WAE{CallInferredLocal} used when the current method table \MT is the same as
	the one used to statically check correctness and thus for which static inference can be relied upon.
	\item \WAE{CallCheckedLocal} used when the method table \MT has been extended with some
	$\MT_e$ and thus the returned value needs to be dynamically checked.
\end{itemize}
Typed invocations are solely the domain of the type system; the programmer cannot
manually write typed calls.

The two local typed invocations each rely on their own cast form: \evalchk\e\t which
performs a dynamic check that \e has type \t and \evalinf\e that indicates that
\e has a reliable inferred type. \evalchk\e\t is used when the return type of \e
cannot be statically guaranteed and needs to be dynamically checked. In \evalchk\e\t
\e might get stuck at any point or it might be an ill-typed value, in which case the
program steps to an error. \evalinf\e, on the other hand, guarantees that \e inferred
to \t under the relevant method table. As a result, \e might get stuck internally but
if \e is a value then it will always be well-typed.

I use a standard progress and preservation proof methodology to show soundness
in \juliette. Towards this end, I define a typing relation for the target
language, shown in figure~\ref{fig:targettypes}. The system follows the typed
translation rules closely, with the primary exception being the inclusion of
\TER{Cast} that simply asserts that the result of evaluating some expression
$\e$ is type $\t$, therein providing a interpretation of the cast contexts
$\evalchk{\e}\t$.

This target type system is similar to the one in our prior work in Belyakova
et al~\cite{belyakova2020world} but diverges by providing a typing judgment
for function invocations \TER{Call}, for inferred expressions \TER{Inferred}, and for
checked expressions \TER{Checked}. 
Additionally, it prohibits nested local and global evaluation contexts from
existing within typed expressions.

\begin{figure}
\begin{mathpar}
\inferrule[\TER{Value}]
  {  \typeof(\v) = \g }
  { \Gamma \vdash_{\MT; \PT} \v : \g }

\inferrule[\TER{Var}]
  { \x : \t \in \Gamma }
  { \Gamma \vdash_{\MT; \PT} \x : \t }

\inferrule[\TER{Seq}]
  { \Gamma \vdash_{\MT; \PT} \e_1 : \t_1 \\ \Gamma \vdash_{\MT; \PT} \e_2 : \t_2  }
  { \Gamma \vdash_{\MT; \PT} \seq{\e_1}{\e_2} : \t_2 }

\inferrule[\TER{Sub}]
		{\Gamma \vdash_{\MT; \PT} \e : \t \\ \jsub{\t}{\t'}}
		{\Gamma \vdash_{\MT; \PT} \e : \t'}

\inferrule[\TER{Primop}]
  { \obar{\Gamma \vdash_{\MT; \PT} \e : \t} \\ \Psi(l, \ts) = \t' }
  { \Gamma \vdash_{\MT; \PT} \primcalld{\es} : \t' }

\inferrule[\TER{Call}]
  { \Gamma \vdash_{\MT; \PT} \e : \typeof(\m) \\ 
  \obar{\Gamma \vdash_{\MT; \PT} \e' : \t_a} \\ 
  \dispatch(\MT, \PT, \m, \obar{\t_a}) = \t_r }
  { \Gamma \vdash_{\MT; \PT} \mcallt{\e}{\obar{\e'}}{\MT'}{\t_r} : \t_r }

\inferrule[\TER{Inferred}]
  { \infers{}\MT\e\t }
  { \Gamma \vdash_{\MT; \PT} \evalinf\e : \t }

\inferrule[\TER{Checked}]
  {  }
  { \Gamma \vdash_{\MT; \PT} \evalchk\e\t : \t }
\end{mathpar}
\caption{Typed \juliette target language typing}
\label{fig:targettypes}
\end{figure}

From here, I can state soundness in typed \juliette as follows.

\begin{theorem}[Soundness of typed \juliette]
If $\vdash_{\MT; \PT} \e : \t$ one of three cases holds:
\begin{enumerate}
	\item $\e$ is a value $\v$ where $\jsub{\typeof(\v)}{\t}$
	\item For all $\MT', \MT_e$ there is some $\MT''$ and $\e'$ such that $\evalwam{\MT'}{\e}{\MT \bullet \MT_e}{\MT''}{\e'}$ and $\vdash_{\MT; \PT} \e' : \t$
	\item $\e$ is of the form $\plugx\Xx{\mcall{\m}{\obar\v}}$ and there exists an equivalence class $\mathrm{T}$ in $\MT$ of at least
	two methods with the form $\obar{\mdefty{\m}{\obar{\jty{\_}{\t_a}}}{\_}{\_}} \subseteq \applcbl(\latest(\MT), \m, \obar\sigma)$ that are not pairwise related by subtyping
	(for any two argument types $\obar{\t_a}$ and $\obar{\t'_a}$ in $\mathrm{T}$ then $\notjsub{\obar{\t_a}}{\obar{\t'_a}}$ and $\notjsub{\obar{\t'_a}}{\obar{\t_a}}$) and that are
	are more precise than all other implementations (e.g. for any argument type $\obar{\t_a}$ in $\mathrm{T}$ then $\forall \mdefty{\m}{\obar{\jty{\_}{\t}}}{\_}{\_} \in \applcbl(\latest(\MT), \m, \obar\sigma)$ if $\obar\t \not\in \mathrm{T}$ then $\jsub{\obar{\t_a}}{\obar\t}$).
	\item There exists some $\Xx, \e', \t'$ such that $\e = \plugx\Xx{\evalchk{\e'}{\t'}}$.
	\item There exists some $\Xx, \e'$ such that $\e = \plugx\Xx{\evalinf{\e'}}$ where $\forall \v, \e' \neq \v$.
\end{enumerate}
\end{theorem}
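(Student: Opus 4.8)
The plan is to prove soundness of typed \juliette by the standard progress-and-preservation method, working over the target typing relation of \figref{fig:targettypes} rather than the source translation, since the target relation has the cast forms \evalchk\e\t and \evalinf\e baked in via \TER{Checked} and \TER{Inferred}. First I would establish the usual structural lemmas: a canonical forms lemma (a value of type \g has \typeof(\v) a subtype of \g, a value of type \Unit is \skp, etc.), weakening and substitution for the target typing judgment, and a replacement/decomposition lemma stating that if $\plugx\Xx\e$ is well-typed then $\e$ is well-typed in some context and well-typedness is preserved by plugging a same-typed expression back into \Xx. I would also need the analogous facts lifting these through world evaluation contexts \Cx, taking care with the nesting restriction that typed expressions do not contain \evalg{} or \evalt\MT{}.

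The core of the argument is a case analysis that simultaneously gives progress and preservation. For progress: given $\vdash_{\MT;\PT}\e:\t$, decompose $\e$. If it is a value, case (1) holds. Otherwise find the active redex. For a bare non-typed call $\plugx\Xx{\mcall\m\vs}$ inside a table context I use \WAE{CallLocal} or \WAE{CallErr}; for a checked call I use \WAE{TypedCallGlobal} to install the frozen table, then \WAE{CallInferredLocal} or \WAE{CallCheckedLocal} depending on whether the local table equals the statically-checked \MT or has been extended. The key point is that \TER{Call} carries $\dispatch(\MT,\PT,\m,\obar{\t_a})=\t_r$, so by the definition of \dispatch either there is a satisfying method (so \getmd succeeds and a most-specific method exists, or the ambiguity case (3) of the theorem arises), or protocol-checking via \checkproto guarantees a satisfying method exists as the table evolves. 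For preservation I check each reduction rule: \WAE{Seq}, \WAE{Primop}, \WAE{MD} are routine; \WAE{CallInferredLocal} produces $\evalinf{\e\subst\xs\vs}$ which is typed by \TER{Inferred} using \emph{Substitution for inference} together with soundness of inference (Def.~\ref{def:inferencesound}); \WAE{CallCheckedLocal} produces $\evalchk{\e\subst\xs\vs}{\t_r}$, typed by \TER{Checked} directly; \WAE{ValChecked} and \WAE{ValInferred} discharge casts and the subtyping-by-\typeof\ premise or soundness-of-inference respectively give the needed subtyping; \WAE{TypeErr} is exactly case (where an ill-typed value meets a checked cast) — but crucially this can only happen when the cast was \evalchk, i.e. when the method table had already moved past \MT, which is why the "cast stuck" outcome is listed explicitly as case (4).

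I expect the main obstacle to be correctly threading the method-table bookkeeping through the two \dispatch clauses and the \WAE{CallInferredLocal}/\WAE{CallCheckedLocal} split. Specifically, I must show that when the local frozen table is exactly the statically-checked \MT, the method found by \getmd is one whose body \emph{inferred} to a subtype of $\t_r$ — this is precisely the second clause of \dispatch, quantifying over all $\obar\s<:\obar\t$ — and then invoke the substitution lemma for inference to get $\infers{}{\MT}{\e\subst\xs\vs}{\t_r}$, licensing \TER{Inferred}. When the table is $\MT\bullet\MT_e$, \dispatch's guarantees about existence still transfer (both \dispatch's method-found clause and \checkproto are monotone under table extension, since \getmd scans the whole table and a satisfying supertype method, once present, stays present), but the return-type guarantee does not, which is exactly why the \evalchk cast is needed and why the theorem's clause (2) allows stepping rather than claiming the value has type $\t$ outright. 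The remaining fiddly bit is the ambiguity case (3): I would show that when \getmd errs via $\min$ returning \err, the set of applicable methods must contain the described equivalence class of mutually-unrelated, maximally-specific methods, so \WAE{CallErr} fires and the theorem's disjunct (3) is satisfied. Everything else — the error rules \WAE{VarErr} (ruled out since typed expressions have no free variables, by the environment in \TER{Var}), \WAE{PrimopErr} handled by the agreement condition on \Primop and \PrimopRT, \WAE{CalleeErr} ruled out by \TER{Call}'s premise $\e:\typeof(\m)$ plus canonical forms — is bookkeeping.
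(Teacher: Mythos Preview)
Your proposal is correct and follows essentially the same strategy as the paper—progress-and-preservation over the target typing relation of \figref{fig:targettypes}, with \TER{Inferred}/\TER{Checked} absorbing the two typed-call outcomes and Substitution-for-inference justifying the inferred case. The paper's organization differs in one respect: rather than splitting into separate progress and preservation phases with an explicit decomposition/replacement lemma, it runs a \emph{single rule induction on the derivation of} $\vdash_{\MT;\PT}\e:\t$, invoking a Simple-Context Irrelevance lemma (if a subexpression steps in a local table, so does $\plugx\Xx\e$) to lift steps out of $\Xx$-contexts. Your decomposition-then-redex-analysis and the paper's induction-on-typing are dual presentations of the same argument; the paper's version is a bit shorter because it never needs a separate replacement lemma.

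Two small places where your sketch over-reaches. First, well-typed target expressions contain neither bare untyped calls $\mcall{\m}{\vs}$ nor global contexts $\evalg{\cdot}$ at the top level—the target relation has no rule for either, so they occur only inside $\evalchk{\cdot}{\t}$ or $\evalinf{\cdot}$, where cases (4) and (5) of the theorem already apply. You therefore never need \WAE{CallLocal}, \WAE{CallErr}, or \WAE{TypedCallGlobal} directly on a typed redex. Second, your monotonicity claim for \dispatch under table extension is both unnecessary and not quite right: adding methods preserves the existence of a satisfying supertype method but can introduce fresh ambiguities, so $\getmd$ is not monotone. The paper does not argue monotonicity; for the $\MT_e\neq\varnothing$ branch it simply notes that \WAE{CallCheckedLocal} yields $\evalchk{\e\subst\xs\vs}{\t_r}$, which is typed by \TER{Checked} with no constraint on the body at all.
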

The first two cases are obvious: if I have executed the program to a
value then it should be correctly-typed. Similarly, if I have a typed
context within our execution, it should step to a similarly well-typed
form.

The third case of soundness is the carve-out for ambiguous method
invocations. An ambiguous method invocation occurs when there is a congruence
class within the set of applicable methods (e.g. those that could handle the
given arguments) that are not comparable by subtyping (that are equally
precise as one another) but that are all more precise than any other
implementation. This is where the carve-out of the definition of $\dispatch$
shows up: it merely guarantees that a method \emph{exists}, not that there
will always be a unique most specific one. Thus, I cannot guarantee the
absence of such an equivalence class based on a successful $\dispatch$ check.

The fourth case of soundness applies to dynamic typechecks applied to typed
function invocations when new methods $\MT_e$ have been added to the original
method table $\MT$. In this case I cannot guarantee that all callable methods 
have a return type that infers to a subtype of the expected type \t and thus
must dynamically check the return type. Moreover, the body of the unchecked and
uninferred method might go wrong at any time.

Finally, the fifth case of soundness applies to the expressions resulting from
function invocations that occurred against the statically relevant method
table. In this case \dispatch guarantees that whatever method is called infers
to return a subtype of the expected type \t. Thus, the embedded expression \e
might go wrong but if it is a value \v then it will be appropriately typed.

Note, first, that soundness generalizes on any global method table 
$\MT'$ and extended local method table $\MT_e$. Typed methods themselves
will thus not go wrong; ill-typed values returned from newly-{\eval}ed 
methods might cause a cast failure but will not break typed code.

Next, observe that \evalinf\e requires no dynamic checks. Dynamic checks
\evalchk\e\t are only inserted when the local method table has been extended
with some $\MT_e$. As a result, if no new methods are visible in the local method
table the type system applies no additional dynamic overhead. Only once methods
that were not known to the static checker have been added does the type system begin
adding overhead.

\paragraph{Proof}

\begin{figure}
  \[
  \begin{array}{ccl@{\qquad}ll}
      \\ \rdx & ::= & & \text{\emph{Redex Base}} &
      \\ &\Alt& \x & \text{variable} & \text{(error)}
      \\ &\Alt& \seq{\v}{\e} & \text{sequencing} & \text{(normal)}
      \\ &\Alt& \evalchk{\v}{\t} & \text{cast} & \text{(normal/error)}
      \\ &\Alt& \evalinf{\v} & \text{cast} & \text{(normal)}
      \\ &\Alt& \mcall{\vnm}{\vs} & \text{non-function call} & \text{(error)}
      \\ &\Alt& \md & \text{method definition} & \text{(normal)}
      \\ &\Alt& \primcalld{\vs} & \text{primop call} & \text{(normal/error)}
      \\ &\Alt& \evalg{\v} & \text{value in global context} & \text{(normal)}
      \\ &\Alt& \evalt{\MT}{\v} & \text{value in table context} & \text{(normal)}
      \\ &\Alt& \evalg{\plugx{\Xx}{\mcall{\m}{\vs}}} 
            & \text{function call in global context} & \text{(normal)}
      \\ &\Alt& \evalt{\MT}{\plugx{\Xx}{\mcall{\m}{\vs}}} 
            & \text{function call in table context} & \text{(normal/error)}
      \\ &\Alt& \evalg{\plugx{\Xx}{\mcallt{\m}{\vs}\MT\t}} 
            & \text{typed function call in global context} & \text{(normal)}
      \\ &\Alt& \evalt{\MT}{\plugx{\Xx}{\mcallt{\m}{\vs}\MT\t}} 
            & \text{typed function call in table context} & \text{(normal/error)}
  \end{array}
  \]
  \caption{Redex Bases}\label{fig:wa-calculus-redex}
\end{figure}

For the use of the proof I define \emph{redexes} for the calculus, derived from the original
concept that \juliette used. A redex is an expression that is immediately reducible and contains
no reducible subexpressions. The redexes in typed \juliette are shown in \figref{fig:wa-calculus-redex};
compared to the original \juliette calculus typed \juliette adds the cast redex as well as redexes for
statically-typed function calls in global and local method tables.

I use three lemmas for the proof of soundness.

\begin{lemma}{Unique Form of Expressions}
Any expression \e can be uniquely represented in one of the following ways:
\begin{itemize}
	\item $\e = \v$
	\item $\e = \plugx\Xx{\mcall\m\vs}$
	\item $\e = \plugx\Xx{\mcallt\m\vs\MT\t}$
	\item $\e = \plugx\Xx\rdx$
\end{itemize}
\label{lem:uniq-forms}
\end{lemma}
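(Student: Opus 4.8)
The plan is to prove \lemref{lem:uniq-forms} by structural induction on \e, establishing existence and uniqueness of the decomposition simultaneously. The four listed forms are manifestly pairwise disjoint: a value is never $\plugx\Xx{f}$ for a non-value filler $f$, and none of $\mcall\m\vs$, $\mcallt\m\vs\MT\t$, or any redex base \rdx is a value; a fully-reduced bare call $\mcall\m\vs$ (callee the function value \m, every argument a value) is neither a checked call nor one of the \rdx productions; and symmetrically for $\mcallt\m\vs\MT\t$. So the content of the lemma is (i) that every \e admits at least one of the decompositions, and (ii) that in the fourth form both the simple context \Xx and the redex base \rdx are forced.

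For existence I case on the head constructor of \e. A value gives the first form directly. A variable \x, a method definition \md, a primop application $\primcalld\vs$ on all values, a cast $\evalchk\v\t$ or $\evalinf\v$ on a value, a non-function application $\mcall\vnm\vs$, or a value / fully-reduced (possibly checked) call sitting immediately under $\evalg{\cdot}$ or $\evalt\MT{\cdot}$ are each redex bases, so \e is the fourth form with $\Xx=\hole$. For a compound expression that has an evaluation slot in \Xx's grammar --- sequencing $\seq{\e_1}{\e_2}$, a primop or (checked) function call with an unevaluated subexpression --- I pick the leftmost such slot whose contents are not yet a value, invoke the induction hypothesis there to obtain that subexpression's (unique) decomposition, and wrap the resulting simple context with the matching \Xx-former ($\seq\Xx\e$, $\primcall l{\vs\,\Xx\,\es}$, $\mcall\Xx\es$, $\mcall\v{\vs\,\Xx\,\es}$, or one of the two checked-call formers); a subexpression of form two, three, or four thereby lifts to the same form for \e. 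When every evaluation slot already holds a value, \e itself is either a reduced call $\mcall\m\vs$ or $\mcallt\m\vs\MT\t$ when the callee is \m --- forms two and three --- or else one of the \rdx productions, including the global/table-eval productions $\evalg{\plugx\Xx{\mcall\m\vs}}$, $\evalt\MT{\plugx\Xx{\mcall\m\vs}}$ and their checked analogues, which are recognised only after applying the induction hypothesis to the eval body.

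For uniqueness I use that \Xx's grammar is deterministic: it has a single hole, it never introduces an $\evalg{\cdot}$ or $\evalt\MT{\cdot}$ wrapper, and for each compound former it fixes which subexpression is the evaluation position (a primop or function call may descend into an argument only once all preceding arguments, and for calls the callee, are values). Hence \e cannot be written as $\plugx\Xx{f}$ in two distinct ways with the same kind of filler; moreover no redex base is a proper plug $\plugx{\Xx'}{\rdx'}$ with $\Xx'\neq\hole$, nor does any redex base coincide with a reduced bare or checked call. Together with uniqueness of the subexpression decompositions delivered by the induction hypothesis, this pins the decomposition of \e to the one constructed in the existence step, and disjointness of the four forms closes the argument.

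The step I expect to be the main obstacle is the bookkeeping around the function-call forms in concert with the eval-wrapper redex bases. I must keep the deliberately-separated bare reduced call $\mcall\m\vs$ --- which the lemma calls out as its own form rather than folding into \rdx, because it is precisely the filler that \WAE{CallGlobal}/\WAE{CallLocal} expect to see wrapped in an eval context --- distinct from the case where such a call occurs under a nontrivial simple context, and I must check that $\evalg{\plugx\Xx{\mcall\m\vs}}$, $\evalt\MT{\plugx\Xx{\mcall\m\vs}}$ (and the checked analogues) are matched as redex bases only after the induction hypothesis has put the eval body into exactly the $\plugx\Xx{\mcall\m\vs}$ shape, i.e. after confirming that nothing deeper inside the eval body is the actual redex. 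The genuinely delicate point is exhaustiveness: verifying that the redex-base productions together with the reduced-call forms really cover every \e, which is subtle for expressions that nest several $\evalg{\cdot}$ and $\evalt\MT{\cdot}$ wrappers, since those wrappers must be read as absorbed into the redex base that they surround. The remaining cases --- sequencing, primop calls, and checking that the checked-call argument and callee contexts behave exactly like their untyped counterparts --- are routine.
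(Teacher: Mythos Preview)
Your approach---structural induction on \e, establishing existence and uniqueness simultaneously---is exactly what the paper does (it defers the details to \cite{belyakova2020world} and merely notes the new cases for checked calls and casts). Your case analysis for values, variables, primops, sequencing, and the call forms is sound, and your treatment of uniqueness via the determinism of the \Xx grammar is the standard argument.

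There is, however, a genuine gap at precisely the spot you flag as ``genuinely delicate.'' Consider $\e = \evalg{\x}$, or $\e = \evalg{\evalg{\v}}$, or more generally $\evalg{\e'}$ where the induction hypothesis puts $\e'$ in the fourth form $\plugx{\Xx'}{\rdx}$. None of these is a value, none is $\plugx\Xx{\mcall\m\vs}$ or its checked analogue (since $\evalg{\cdot}$ is not an \Xx-former), and none is $\plugx\Xx\rdx$ either: the only way would be $\Xx=\hole$, but $\evalg{\x}$, $\evalg{\evalg\v}$, and $\evalg{\plugx{\Xx'}\rdx}$ are not among the redex-base productions. Your proposed resolution---that nested eval wrappers ``must be read as absorbed into the redex base that they surround''---does not work, because the \rdx grammar simply does not contain such forms.

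The issue is not with your argument but with the lemma as literally stated: the fourth form should read $\plugCx{\Cx}{\rdx}$ rather than $\plugx{\Xx}{\rdx}$. Indeed, when the paper \emph{applies} this lemma in the proof of Simple-Context Irrelevance, it writes ``$\e$ is either \v, $\plugx{\Xx_\e}{\mcall\m\vs}$, $\plugx{\Xx_\e}{\mcallt\m\vs{\MT_i}\t}$, or $\plugCx{\Cx_\e}{\rdx}$''---using \Cx, not \Xx, for the last case. With \Cx in place, your nested-eval case closes immediately: $\evalg{\plugCx{\Cx'}{\rdx}} = \plugCx{\plugx{\hole}{\evalg{\Cx'}}}{\rdx}$, and the rest of your induction goes through unchanged.
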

\begin{proof}
By induction on $\e$ using auxiliary definitions and lemmas extended from~\cite{belyakova2020world}; an additional
set of canonical forms representations are added to handle $\e = \plugx\Xx{\mcallt\m\vs\MT\t}$ and casts $\evalchk\e\t$/$\evalinf\e$.
\end{proof}

\begin{lemma}{Context Irrelevance}
$\evalwa{\MT_g}{\plugCx\Cx{\rdx}}{\MT'_g}{\plugCx\Cx\e'} \; \iff \; \evalwa{\MT_g}{\plugCx{\Cx'}\rdx}{\MT'_g}{\plugCx{\Cx'}{\e}}$
\label{lem:ctx-irrelevance}
\end{lemma}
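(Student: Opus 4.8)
The plan is to exploit the fact that the small-step relation of typed \juliette is, by construction, a congruence with respect to world evaluation contexts $\Cx$: every reduction rule of \figref{semantics} and of the typed dynamic semantics has a conclusion of the shape $\evalwa{\MTg}{\plugCx{\Cx}{\rdx}}{\MT'_g}{\plugCx{\Cx}{\e'}}$ in which the \emph{same} context $\Cx$ occurs on both sides and is universally quantified. So I would first read off, from the rule set, a ``base step'' relation connecting a pair $\langle\MTg,\rdx\rangle$ of a global table and a redex to a pair $\langle\MT'_g,\e'\rangle$, and prove the schematic fact that $\evalwa{\MTg}{\plugCx{\Cx}{\rdx}}{\MT'_g}{\plugCx{\Cx}{\e'}}$ holds if and only if the corresponding base step holds, with no mention of $\Cx$ on the right. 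Context irrelevance is then immediate: a step of $\plugCx{\Cx}{\rdx}$ witnesses a base step that does not involve $\Cx$, so $\plugCx{\Cx'}{\rdx}$ performs the very same base step to $\plugCx{\Cx'}{\e'}$, and the converse follows by symmetry (swap $\Cx$ and $\Cx'$). This also silently repairs the harmless typo in the statement: the replacement term is literally the same expression on both sides.

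The substance of the argument is a routine case analysis on the rule that derives $\evalwa{\MTg}{\plugCx{\Cx}{\rdx}}{\MT'_g}{\plugCx{\Cx}{\e'}}$, ranging over \WAE{Seq}, \WAE{Primop}, \WAE{MD}, \WAE{CallGlobal}, \WAE{CallLocal}, \WAE{ValGlobal}, \WAE{ValLocal} and the typed rules \WAE{TypedCallGlobal}, \WAE{CallInferredLocal}, \WAE{CallCheckedLocal}, \WAE{ValChecked}, \WAE{ValInferred}. In each case I would check that (i) the left-hand expression decomposes as $\plugCx{\Cx}{\rdx}$ for a redex $\rdx$ listed in \figref{fig:wa-calculus-redex}, and (ii) every premise of the rule --- the conditions on $\typeof$, $\Primop$, $\getmd$, the syntactic shape of $\md$, and so on --- constrains only $\rdx$ and the ambient global table, never $\Cx$. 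For instance, in \WAE{MD} the new global table $\MText{\MT}{\md}$ is built from the redex $\md$ and the state component $\MTg$ alone; in \WAE{CallGlobal} and \WAE{TypedCallGlobal} the table copied into the fresh evaluation-in-a-table wrapper is the program-state global table, not anything extracted from $\Cx$; and in the local-call rules the table used for dispatch is the $\MT'$ that is \emph{part of} the redex $\evalt{\MT'}{\plugx{\Xx}{\mcall\m\vs}}$, not part of $\Cx$. Hence the same rule instance, with identical premise derivations, re-derives the step with $\Cx$ replaced by $\Cx'$.

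The one point that needs care --- and the main obstacle --- is making the ``base step'' factoring well-defined, i.e.\ ensuring that the redex grammar of \figref{fig:wa-calculus-redex} genuinely partitions reduction: whenever a rule fires, the active expression must actually have the form $\plugCx{\Cx}{\rdx}$, and the decomposition into $\Cx$ and $\rdx$ must pin down the data the premises inspect. This is exactly what \lemref{lem:uniq-forms} supplies, together with the deliberate design choice to list $\evalt{\MT}{\plugx{\Xx}{\mcall\m\vs}}$ and $\evalg{\plugx{\Xx}{\mcall\m\vs}}$ (and their typed variants) as redex \emph{bases} rather than as calls sitting inside a context former: since only a \emph{simple} context $\Xx$ --- which by definition contains no $\evalg{\cdot}$ or $\evalt{\cdot}{\cdot}$ --- may intervene between such a wrapper and the call it activates, $\Cx$ never descends through the wrapper that the local- or global-call rules read. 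Once this decomposition is nailed down the case analysis is mechanical, and I would remark that the identical argument also yields context irrelevance for the error relation $\evalerrwa{\MTg}{\plugCx{\Cx}{\rdx}}{\err}$, should that variant be needed elsewhere in the soundness development.
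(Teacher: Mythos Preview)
Your proposal is correct and takes essentially the same approach as the paper: the paper's proof is a terse one-liner observing that ``only \rdx matters for the reduction'' and that inspecting the reduction step lets one reconstruct the step for the other context, which is exactly the factoring-through-a-base-step idea you spell out via case analysis on the rules. Your version is considerably more detailed---in particular your explicit appeal to \lemref{lem:uniq-forms} to justify that the decomposition is well-defined, and your observation about the typo in the statement---but the underlying argument is identical.
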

\begin{proof}
By analyzing normal-evaluation steps I can see that only \rdx matters for the reduction;
therefore, by inspecting the reduction step for either $\plugCx\Cx\rdx$ or $\plugCx{\Cx'}\rdx$
I can then construction a corresponding step for $\Cx'$ or $\Cx$, respectively.
\end{proof}

\begin{lemma}{Simple-Context Irrelevance}
$\evalwam{\MT_g}{\e}\MT{\MT'_g}{\e'} \; \implies \; \evalwam{\MT_g}{\plugx\Xx\e}\MT{\MT'_g}{\plugx\Xx{\e'}} $
\end{lemma}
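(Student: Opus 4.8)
The plan is to unfold the abbreviation $\evalwam{\cdot}{\cdot}{\cdot}{\cdot}{\cdot}$ and reduce the claim to a statement about a single base reduction, leaning on \lemref{lem:uniq-forms} and \lemref{lem:ctx-irrelevance}. By definition the hypothesis says that for every world context $\Cx$ the step $\evalwa{\MT_g}{\plugCx\Cx{\evalt\MT\e}}{\MT'_g}{\plugCx\Cx{\evalt\MT{\e'}}}$ holds, and the goal is the same statement with $\e,\e'$ replaced by $\plugx\Xx\e,\plugx\Xx{\e'}$. I would first record three routine structural facts, each by a short induction on the relevant grammar and all implicit in the original \juliette development: the composition of two simple contexts is again a simple context; $\evalt\MT{\Xx_0}$ is a world context for every simple $\Xx_0$; and plugging a world context into the hole of a world context yields a world context.

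Next I would fix one world context $\Cx$ and use \lemref{lem:uniq-forms} to case-split on the shape of $\e$. Because the hypothesis forces $\evalt\MT\e$ to take a non-error step whose result is still of the shape $\evalt\MT{\e'}$, the case $\e=\v$ cannot arise (there $\evalt\MT\v$ steps by \WAE{ValLocal} to $\v$, not to $\evalt\MT{\v'}$), so the implication is vacuously true. Otherwise $\e$ is either (B) a call $\plugx{\Xx_0}{\mcall\m\vs}$ or a checked call $\plugx{\Xx_0}{\mcallt\m\vs{\MT_0}{\t_r}}$ immediately under the table frame, or (A) $\plugx{\Xx_0}{\rdx}$ for a simple context $\Xx_0$ and a redex base $\rdx$ that is not a bare call. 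In case (A), since $\mcall\m\vs$ and $\mcallt\m\vs{\cdot}{\cdot}$ are not themselves redex bases, uniqueness of decomposition forces the redex fired by the hypothesis step to be exactly this $\rdx$, sitting strictly inside $\e$, so that $\e'=\plugx{\Xx_0}{\e_0}$ where $\e_0$ is the reduct of $\rdx$.

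For case (A) the hypothesis step is one of the base rules applied to $\rdx$ inside the world context $\plugCx\Cx{\evalt\MT{\Xx_0}}$, turning $\MT_g$ into $\MT'_g$ (possibly by a \WAE{MD} extension). I would then reassociate $\plugCx\Cx{\evalt\MT{\plugx\Xx\e}}=\plugCx{\hat\Cx}{\rdx}$ with $\hat\Cx=\plugCx\Cx{\evalt\MT{\plugx\Xx{\Xx_0}}}$, which is a world context by the facts above, and invoke \lemref{lem:ctx-irrelevance} to transport the $\rdx$-to-$\e_0$ step into $\hat\Cx$; since $\plugCx{\hat\Cx}{\e_0}=\plugCx\Cx{\evalt\MT{\plugx\Xx{\e'}}}$ this is the required step, and since $\Cx$ was arbitrary we conclude $\evalwam{\MT_g}{\plugx\Xx\e}\MT{\MT'_g}{\plugx\Xx{\e'}}$. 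For case (B) the step is \WAE{CallLocal} (resp.\ \WAE{CallInferredLocal} or \WAE{CallCheckedLocal}), so its side conditions ($\getmd(\MT,\m,\typeof(\vs))$ succeeding, resp.\ the inference/cast conditions) hold and $\MT'_g=\MT_g$; wrapping in $\Xx$ only replaces $\Xx_0$ by the composite simple context $\plugx\Xx{\Xx_0}$, so $\evalt\MT{\plugx{\plugx\Xx{\Xx_0}}{\mcall\m\vs}}$ is again a redex for the same rule with the same reduct, and one last appeal to \lemref{lem:ctx-irrelevance} handles the arbitrary $\Cx$. I expect the main obstacle to be exactly the bookkeeping of case (A): verifying uniformly, across the shapes \lemref{lem:uniq-forms} allows, that the fired redex lies strictly inside $\e$ rather than spanning the $\evalt\MT{\cdot}$ frame --- which would be either a \WAE{ValLocal} collapse, already excluded, or a \WAE{CallLocal}-family redex, which is precisely case (B) --- and that absorbing the frame $\evalt\MT{\plugx\Xx{\cdot}}$ into a single world context is legitimate, so that \lemref{lem:ctx-irrelevance} transfers the reduction with no residual conditions.
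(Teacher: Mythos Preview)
Your proposal is correct and follows essentially the same route as the paper: case-split on \e via \lemref{lem:uniq-forms}, discard the value case because $\evalt\MT\v$ collapses rather than stepping to $\evalt\MT{\e'}$, handle the call/typed-call cases by observing that the composed simple context $\plugx\Xx{\Xx_0}$ still yields a \WAE{CallLocal}-family redex under the same table frame, and handle the remaining redex case by reassociating into a larger world context and invoking \lemref{lem:ctx-irrelevance}. You are somewhat more explicit than the paper about the structural facts (closure of simple/world contexts under composition) and about why the fired redex must lie strictly inside \e, but the argument is the same.
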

\begin{proof}
By lemma~\ref{lem:uniq-forms} \e is either \v, $\plugx{\Xx_\e}{\m(\vs)}$, $\plugx{\Xx_\e}{\mcallt\m\vs{\MT_i}\t}$, or $\plugCx{\Cx_\e}{\rdx}$. If \e is \v
then the assumption cannot hold (since $\plugCx\Cx{\evalt\MT\v}$ steps to $\plugCx\Cx\v$). Therefore, I need only
consider the $\plugx{\Xx_\e}{\m(\vs)}$, $\plugx{\Xx_\e}{\mcallt\m\vs{\MT_i}\t}$, or $\plugCx{\Cx_\e}{\rdx}$ cases.
\begin{itemize}
	\item When $\e$ is $\plugx{\Xx_\e}{\mcall\m\vs}$ then $\evalt\MT\e = \evalt\MT{\plugx{\Xx_\e}{\mcall\m\vs}}$ is a redex
	and $\plugCx\Cx{\evalt\MT\e}$ steps by \WAE{CallLocal}. Similarly, $\evalt\MT{\plugx{\Xx}{\e}} = \evalt\MT{\plugx{\plugx\Xx{\Xx_\e}}{\mcall\m\vs}}$
	is also a redex and steps as $\plugCx\Cx{\evalt\MT\e}$
	\item When $\e$ is $\plugx{\Xx_\e}{\mcallt\m\vs{\MT_i}\t}$ then $\evalt\MT\e = \evalt\MT{\plugx{\Xx_\e}{\mcallt\m\vs{\MT_i}\t}}$ is a redex
	and $\plugCx\Cx{\evalt\MT\e}$ steps by \WAE{CallLocal}. Similarly, $\evalt\MT{\plugx{\Xx}{\e}} = \evalt\MT{\plugx{\plugx\Xx{\Xx_\e}}{\mcallt\m\vs{\MT_i}\t}}$
	is also a redex and steps as $\plugCx\Cx{\evalt\MT\e}$.
	\item When \e is \plugCx{\Cx_\e}{\rdx} then $\evalt\MT\e = \evalt\MT{\plugCx{\Cx_\e}{\rdx}}$ and $\plugCx\Cx{\evalt\MT\e} = \plugCx{\Cx'}\rdx$
	where $\Cx' = \evalt\MT{\Cx_\e}$. Since $\plugCx\Cx{\evalt\MT{\plugx\Xx\e}} = \plugCx{\Cx''}\rdx$ for $\Cx'' = \plugCx\Cx{\evalt\Xx{\Cx_\e}}$,
	$\plugCx\Cx{\evalt\MT\e}$ and $\plugCx\Cx{\evalt\MT{\plugx\Xx\e}}$ step similarly by lemma~\ref{lem:ctx-irrelevance}.
\end{itemize}
\end{proof}

The proof of soundness for \juliette is straightforward by rule induction on the rule used
to derive $\vdash_{\MT; \PT} \e : \t$:
\begin{proof}
\begin{itemize}
	\item \TER{Value}: trivial, case (1) of soundness.
	\item \TER{Var}: impossible, as $\Gamma$ is empty.
	\item \TER{Seq}: Apply the IH to the first typing relation $\vdash_\MT \e_1 : \t_1$:
	\begin{itemize}
		\item If $\e_1$ is a value, then apply \WAE{Seq}; case 2 of soundness applies.
		\item If there is some $\e'_1$ such that $\evalwam{\MT'}{\e_1}{\MT}{\e'_1}{\MT''}$ and $\vdash_{\MT; \PT} \e'_1 : \t_1$ then let $\Xx = \hole; \e_2$ and therefore by simple context irrelevance $\evalwam{\MT'}{\plugx\Xx{\e_1}}{\MT}{\MT''}{\plugx\Xx\e'_1}$;
		let $\e' = \plugx\Xx{\e'_1} = \e'_1; \e_2$ and therefore $\evalwam{\MT'}{\plugx\Xx{\e_1}}\MT{\MT''}{\e'}$ and since $\vdash_{\MT; \PT} \e_2 : \t_2$ it follows that $\vdash_{\MT; \PT} \e' : \t_2$. Case 2 holds.
		\item If $\e_1$ is an ambiguous method call $\plugx{\Xx'}{\mcall{\m}{\obar\v}}$ then $\e$ is stuck at the same ambiguous method call via the $\Xx$ construction as in case 1. Case 3 holds.
		\item If there is some $\Xx'$ and $\e'_1$ such that $\e_1 = \plugx\Xx{\evalchk{\e'_1}\t}$ then construct $\Xx = \Xx'; \e_2$ and therefore $\e = \plugx\Xx{\evalchk{\e'_1}\t}$; case 4 holds.
		\item As in case 4.
	\end{itemize}.
	\item \TER{Sub}: Apply the IH to the typing relation.
	\begin{itemize}
		\item If $\e$ is a value $\v$ and $\jsub{\typeof(\v)}{\t'}$ then by transitivity $\jsub{\typeof(\v)}{\t}$ and case (1) holds.
		\item If $\evalwam{\MT'}{\e}{\MT}{\e'}{\MT''}$ and $\vdash_{\MT; \PT} \e' : \t'$ then $\vdash_{\MT; \PT} \e' : \t$ by \TER{Sub}.
		\item Trivial.
		\item Trivial.
		\item Trivial.
	\end{itemize}.
	\item \TER{Primop}: Apply the IH inductively over the typings of the primop arguments $\obar{\vdash_\MT \e : \t}$. If any argument $i$ steps then I construct a context $\Xx=\delta_l(\obar\v\Xx\e)$ and
	the expression as a whole steps via simple context irrelevance as in \TER{Seq}. Additionally, the context $\Xx$ suffices to show that if either case 2 or 3 apply for any argument $i$ then the respective
	case applies to the expression as a whole. Otherwise if all arguments are values $\obar\v$ then by the IH case 1 of soundness applies and $\jsub{\typeof(\obar\v)}{\obar\t}$. Then, by the definition of the 
	typed primop resolver $\Psi(l, \obar\v) = \v'$ where $\jsub{\typeof(\v')}{\t'}$. Therefore I can apply \WAE{Primop} to find that $\evalwam{\MT'}{\delta_l(\obar\v)}\MT{\MT''}{\v'}$ and case 2 applies.
	\item \TER{Inferred}: I apply the correctness property of inference~\ref{def:inferencesound} to the embedded expression and case analyze.
	\begin{itemize}	
		\item I know that $\exists \v:\;\e = \v$ and $\jsub{\typeof(\v)}{\t}$. Apply \WAE{CallChecked}; case (2) applies.
		\item I know that $\forall \MT'\, \exists \MT'', \evalwam{\MT'}{\e}\MT{\MT''}{\e'}$ and $\infers{}{\MT}{\e'}{\t}$. I can therefore construct $\Xx=\evalchk\hole\t$ and
		therefore by simple context irrelevance $\evalwam{\MT'}{\plugx\Xx\e}\MT{\MT''}{\plugx\Xx{\e'}}$. By rule \TER{Inferred} $\vdash_{\MT; \PT} \evalinf{\e'} : \t$ and thus 
		case (2) applies.
		\item Construct $\Xx$ as in case 2; the ambiguous method definition remains thus and case 3 applies.
		\item Construct $\Xx$ as in case 2 and case 4 applies.
		\item If $\e$ is an expression then construct $\Xx$ as in case 2 and thus case 5 applies. Otherwise, if there is some value $\v$ such that $\e = \v$ then since $\infers{}\MT\v\t$ it follows that $\evalwam{\MT'}{\evalinf\v}\MT{\MT''}{\v}$ since $\jsub{\typeof(\v)}{\t}$ by case (1) of the definition of inference and thus case (2) applies.
	\end{itemize}
	\item \TER{Checked}: Trivial, case 5 of soundness.
	\item \TER{Call}: Apply the IH inductively as before, using the callee form of simple evaluation contexts for the receiver and the argument form for arguments. If any step then I construct the associated simple evaluation contexts and the expression as a whole steps. 

	If the receiver and all arguments are values with types $\m$ and $\obar\s$ respectively, then one of two cases of $\dispatch$ could have applied: normal or protocol invocation.

	\begin{itemize}
	\item \textbf{Normal invocation}: by the definition of $\dispatch$ there is some $\mdefty{\m}{\obar{\jty{\x}{\t'_a}}}{\mu}{\e} \in \MT$ where $\jsub{\obar{\typeof(\v)}}{\t'_a}$. 
		Therefore, I need to case analyze on if there is a unique most specific applicable method:
		\begin{itemize}
			\item $\getmd(\MT,\m,\obar\sigma) = \mdefty{\m}{\obar{\jty{\x}{\t'_a}}}{\mu}{\e'}$. 
			Case analyze on whether $\MT_e$ is empty or not.

			\begin{itemize}
				\item $\MT_e$ is empty. By the definition of $\dispatch$, $\infers{\obar{\jty{\x}{\t_a}}}{\MT}{\e'}{\t_r}$. By substitution for inference, therefore, $\infers{}{\MT}{\e'\subst\xs\vs}{\t_r}$.
				Thus, $\vdash_{\MT; \PT} \evalinf{\e'\subst\xs\vs} : \t_r$. Moreover, since $\evalwam{\MT'}{\mcallt{\m}{\vs}{\MT}{\t_r}}\MT{\MT''}{\evalinf{\e'\subst\xs\vs}}$ by \WAE{CallInferredLocal} case (2) applies.
				\item  $\MT_e$ is nonempty. Then $\evalwam{\MT'}{\mcallt{\m}{\vs}{\MT}{\t_r}}{\MT\bullet\MT_e}{\MT''}{\evalchk{\e'\subst\xs\vs}{\t_r}}$ by \WAE{CallCheckedLocal}. Then, $\vdash_{\MT; \PT} \evalchk{\e'\subst\xs\vs}{\t_r} : \t_r$ by \WAE{ValChecked}.
			\end{itemize}
			\item $\getmd(\MT, \m, \obar\sigma) = \err$ where $\min(\applcbl(\latest(\MT), \m, \gs)) = \err$ due to there being an equivalence class $T$ of most-applicable signatures that are not otherwise related by subtyping. Case (3) of soundness applies as 
			I let $\Xx = \Xx'$ and then $\plugCx\Xx{\mcall{\m}{\obar\v}} = \m(\obar\v)$. 
		\end{itemize}
	\item \textbf{Protocol invocation}: by definition of $\dispatch$ there was some $\pdef\m{\obar{\jty{}{\t'_a}}}{\t_r} \in \PT$ where $\jsub{\obar{\typeof(\v)}}{\t'_a}$.
	By well-formedness of the protocol table I then have that $\checkproto(\m, \obar{\t}, \t_r, \MT)$ and then that $\forall \vs': (\typeof(\vs') = \gs \wedge \gs <: \obar{\t'_a}), \getmd(\MT, \m, \gs) = \mdefty{\m}{\obar{\jty{\x}{\t}}}{\mu}{\e} \wedge \infers{\obar{\jty{\x}{\t}}}{\MT}{\e}{\t'_r} \wedge \jsub{\t'_r}{\t_r}$. Instantiating $\vs'$ with $\vs$ gives
	(1) $\getmd(\MT, \m, \gs) = \mdefty{\m}{\obar{\jty{\x}{\t}}}{\mu}{\e}$, (2) $\infers{\obar{\jty{\x}{\t}}}{\MT}{\e}{\t'_r}$, and (3) $\jsub{\t'_r}{\t_r}$. Protocol invocation then proceeds
	as with normal invocation with the $\m$ provided by (1), the inference result from (2), and subsumption using (3).
	\end{itemize}
\end{itemize}
\end{proof}

By soundness, then, \emph{within the local method table $\MT$}, no return type
checks need be inserted; so long as the inference result is correct it follows
that all statically-determined return types are then correct. If the local
method table is then extended with some $\MT_e$ I must then start dynamically
checking return types.

This soundness theorem ultimately answers my thesis statement. In particular,
\begin{itemize}
	\item The concrete semantics aligns neatly with Julia's existing concept of
	typing; the theorem wholly relies on Julia's inherent definition of type
	while still being able to produce a strong guarantee.
	\item Soundness does not rely on any property of subtyping besides transitivity, as mentioned earlier.
	\item Performance depends on whether the dynamic local method table matches the
	static method table. If the method tables are the same then no overhead is incurred.
	If they differ then return type tests must be performed.
\end{itemize}

These properties make the nature of the type system dependent on whether new
methods have been added or not. If new methods have not been added then typed
code receives guarantees comparable to those of a fully static language: calls
to other functions are guaranteed to go through and their return types are
guaranteed to be correct. If new methods have been added then the guarantee
degrades to a hybrid between the concrete and transient semantics under my
taxonomy as part of KafKa~\cite{chung2018kafka}. As in the transient
semantics, I must check returned values to ensure that they adhere to the
statically determined types. However, unlike the transient semantics, I do so
by checking their entire identity through the type tag rather than merely their
surface-level structure.

Thus, on paper, I have an answer to my original question: yes, it is possible to
design a gradual type system for Julia that matches the philosophy of the language.
How practical is it though, in its present form?

\chapter{Typed Julia in Practice}

With the theory laid out, I can now examine the realization of Typed Julia. In this
chapter I will describe the implementation of the type checker and discuss how it works 
on a few real programs as well as empirically evaluate some of the assumptions that went
into its design.

\section{Implementation}

The type checker is implemented in Julia and relies extensively on existing
Julia language features. The type checker is a standalone Julia program that
analyzes source files and produces warnings and errors.

The type checking pipeline consists of three packages, two of
which were developed for this project. A general schematic is shown in
\figref{fig:typearch}. 

\begin{figure}
\centering
\includegraphics[scale=0.5]{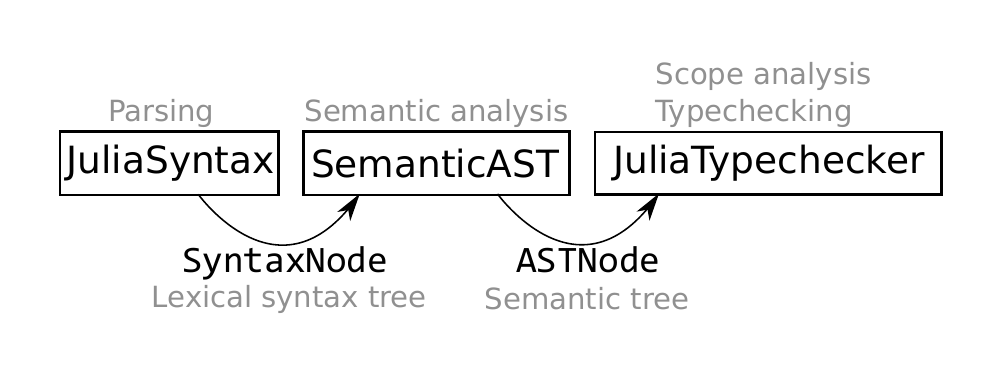}
\caption{Type checker architecture}
\label{fig:typearch}
\end{figure}

Programs go through four phases of type checking:
\begin{itemize}
	\item Syntactic analysis: Julia files are parsed into Julia's expression form.
	\item Semantic analysis: raw Julia ASTs in the form of S-expressions are parsed into a semantic AST.
	\item Module scope analysis: Julia files are traversed to identify the relationship between files and to identify what module each file is logically in.
	\item Type analysis: annotated methods are identified by the type checker and are analyzed.
\end{itemize}

Syntactic and semantic analyses are performed by the dedicated libraries
\texttt{JuliaSyntax} and \texttt{SemanticAST} respectively. Both scope
analysis and type analysis are performed by the package
\texttt{JuliaTypechecker}. \texttt{SemanticAST} and
\texttt{JuliaTypechecker} were developed specifically for type checking.
The broad size of each new package is indicated in
table~\ref{tab:componentloc}; statistics exclude test files.

\begin{table}
\centering
\begin{tabular}{|c|c|c|}\hline
Package & Lines of code & Files \\ \hline
SemanticAST & 1160 & 5 \\ \hline
JuliaTypechecker & 1453 & 7 \\ \hline
\end{tabular}
\caption{Typechecker component packages}
\label{tab:componentloc}
\end{table}

\subsection{Syntactic analysis} Parsing and syntactic analysis is performed using the \texttt{JuliaSyntax}
library, a Julia parser implemented in Julia. \texttt{JuliaSyntax} provides several benefits when compared to
Julia's own parser, including character-precise attribution information as well as a better-described expression
representation. However, \texttt{JuliaSyntax}'s expressions still largely follow Julia's which poses a practical
challenge to semantic analysis. 

Julia expressions (and by extension those produced by \texttt{JuliaSyntax})
are based on S-expressions and are close to the syntax of the input file. As a
result, the same semantic concept can be represented in many different forms. For
instance, function declarations can come with one of five different expression heads
which need to be disambiguated based on their children. 

\subsection{Semantic analysis} My solution to this problem is the Julia
source-level semantic analyzer \texttt{SemanticAST}. \texttt{SemanticAST}
takes the expressions produced by \texttt{JuliaSyntax} and parses them into
semantically meaningful ASTs. In \texttt{SemanticAST} there are only two forms
for function declarations, both of which are fully descriptive at the top level. 

\cbstart
As an example, consider the following definitions:

\begin{footnotesize}
\begin{tabular}{ll}
\begin{normalsize} Source \end{normalsize}& \begin{normalsize} S-expression \end{normalsize}\\
\verb$f(x) = 3$ & \verb$(= (call f x) (block 3))$ \\
\verb$f(x) where T = 3$ & \verb$(= (where (call f x) T) (block 3))$ \\
\verb$f(x)::Int = 3$ & \verb$(= (:: (call f x) Int) (block 3))$ \\
\verb$f(x)::Int where T = 3$ & \verb$(= (:: (call f x) (where Int T)) (block 3))$ \\
\end{tabular}
\end{footnotesize}

All of these definitions describe an identical function. However,
the head and structure of the S-expression differs to match the precise
source syntax. For example, if I specify a type variable then the function's 
head is now \c{where} or if I specify both a type variable and a return type
then the head is \c{::} with two arguments: the ``normal'' function head
followed by a \c{where} structure whose body and type variables are then
interpreted as the return type and the quantifier for the whole function, respectively.

Interpreting Julia's parsed S-expressions is then traditionally difficult: one
must handle a wide range of special cases to align with programmer expectations.
My SemanticAST library provides an abstraction layer over these details: all
function definition have a single representation.

In this example, all four methods are inline definitions of a function named \c{f}.
Inline definitions are handled as any other assignment would be, so all take on the form \c{Assignment([lvalue], [expression])}. The lvalue in these cases are
different \c{FunctionAssignment} instances, while the rvalue is the expression \c{Literal(3)}. \c{FunctionAssignment} then takes on the form \lstinline|FunctionAssignment(name::FunctionName, args_stmts::Vector{FnArg}, kwargs_stmts::Vector{KwArg}, sparams::Vector{TyVar}, rett::Union{Expression, Nothing})|, taking a name, the positional, keyword, and type arguments, followed by the return type.

\begin{footnotesize}
\begin{tabular}{ll}
\begin{normalsize} Source \end{normalsize} & \begin{normalsize} ASTNode for LValue \end{normalsize}\\
\verb|f(x) = 3| & \verb|FunctionAssignment(f,FnArg(x),[],[],nothing)| \\
\verb|f(x) where T = 3| & \verb|FunctionAssignment(f,FnArg(x),[],[TyVar(T)],nothing)|  \\
\verb|f(x)::Int = 3| & \verb|FunctionAssignment(f,FnArg(x),[],[],Var(Int))|  \\
\verb|f(x)::Int where T = 3| & \verb|FunctionAssignment(f,FnArg(x),[],[TyVar(T)],Var(Int))|\\
\end{tabular}
\end{footnotesize}

In all four cases above the L-value is now a consistent instance of \c{FunctionAssignment}. Downstream consumers, such as the type system, need then
only worry about assigning to a function rather than ``what happens if I see a \c{::} in the lvalue of an assignment.''

The implementation of \texttt{SemanticAST} is based on Julia's own semantic analysis step: lowering. Lowering is the first phase of compilation in Julia after parsing and converts parsed S-expressions into a pseudo-SSA representation that's then fed into later stages of analysis. Lowering is, however, unsuitable for use in static tooling as the process loses attribution information and performs numerous other transformations on the program being lowered. In constrast, \texttt{SemanticAST} maintains equivalence to the source program while abstracting parsing details.
\cbend

One challenge for semantic AST analysis is macros. The current implementation
is hard coded; a few common macros are supported but most fall into a
hardcoded exception. Generalized macro handling is a challenge for static
semantic analysis as while the macros could be expanded and their output
analyzed this may lose source-level meaning. A ``method not found'' error
from within a macro expansion is difficult to understand without examining
the macro itself. I will return to this topic when discussing future work.

\subsection{Scope analysis} My next step is module scope analysis. Julia
packages and projects usually consist of multiple files that are all related
by some sort of ``root.'' Julia packages, for instance, are loaded by 
executing their eponymous file (for instance, \texttt{SemanticAST} will be
loaded from the file \texttt{SemanticAST.jl}), which is then responsible for using the \jlinl{include} function to load the remainder of the package.
This root file can use the entire Julia language to decide whether
to \jlinl{include} a file or not. Moreover, \jlinl{include} loads a file into
whatever the \emph{current} module is. Knowing the module that a file was
\jlinl{include}d into is essential in order to determine the module whose
definitions it should be type checked against.

As a brief example, suppose I had three files, as shown in \figref{fig:tinypackage}.
\begin{figure}[H]
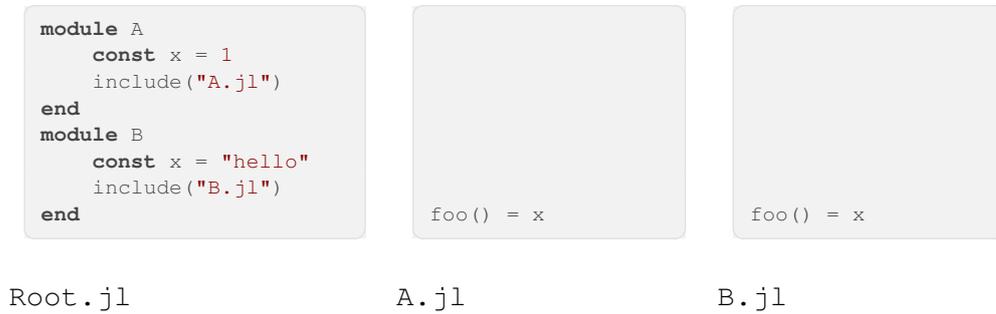

\begin{minipage}{5cm}
\begin{lstlisting}[linewidth=4.9cm]
module A 
	const x = 1
	include("A.jl")
end
module B
	const x = "hello"
	include("B.jl")
end
\end{lstlisting}
\texttt{Root.jl}
\end{minipage}
\begin{minipage}{4.1cm}
\begin{lstlisting}[linewidth=4cm]







foo() = x
\end{lstlisting}
\texttt{A.jl}
\end{minipage}
\begin{minipage}{4.1cm}
\begin{lstlisting}[linewidth=4cm]







foo() = x
\end{lstlisting}
\texttt{B.jl}
\end{minipage}
\caption{A tiny package}
\label{fig:tinypackage}
\end{figure}

This package will load the contents of \texttt{A.jl} into module \texttt{A}
and will load the contents of \texttt{B.jl} into module \texttt{B}. If I call
\jlinl{A.foo()} then I get an integer back; if I call \jlinl{B.foo()} then
I get a string back. I cannot identify which \jlinl{x} is being used without
knowing from where the reference is being \jlinl{include}d from. Julia files
must be analyzed within the context of the entire project.

\cbstart
Scope analysis is then a two-step process: files must first be analyzed
to figure out what ``tree'' that they might be contained in, then their scopes
can be determined based on while file(s) they might be included from.

The first step of scope analysis is ``tree'' identification: determining
which files include what other files. The algorithm begins with a set of 
``roots'' (top-level files from which other files are used) and then 
removes roots when a reference from one preliminary root to another is identified.
In pesudocode the algorithm can be described as
\begin{lstlisting}
roots = the set of all files 
for file in roots
	for referenced_file in references(file)
		if referenced_file in roots
			remove(roots, referenced_file)
		end
	end
end
\end{lstlisting}

Once root identification has occured scope identification can start. Scope
identification starts at each root and runs recursively into each referenced
file, propagating the scope from the referencing file into the referenced. Again,
in pseudocode, this works as follows:
\begin{lstlisting}
function analyze_scope(outer_scope, file)
	for module in file
		inner_scope = combine(outer_scope, module)
		for referenced_file in references(module)
			analyze_scope(inner_scope, referenced_file)
		end
	end
end

for root in roots
	analyze_scope(main scope, root)
end
\end{lstlisting}

The implementation of scope analysis has two major limitations: it miss files
that are added by calling \jlinl{include} with some variable value;
similarly, it may collect files that are included from within a method that
is never used. However, most \jlinl{include} usages by packages are at the
top level with a hardcoded string. Scope analysis takes around 200  lines of
code.

\subsection{Type analysis} Finally, I preform type analysis. Using the scope
 information provided by the previous step the type checker performs a
 recursive descent through each function annotated with the special-cased \c
 {@typed} macro in the set of files provided. 

 The implementation of the type checker follows the typing rules as described in 
 figure~\ref{typesystem}. It recursively descends into expression forms while 
 maintaining a type checking context to determine the type of an expression.
 The code itself is small at 1,100 lines; A number of architectural decisions deserve discussion, however. 

\cbend

\paragraph{Entity-component system} The type checker maintains a side data
structure to the AST in entity-component form. Entity component systems are a
design pattern from video games that support large numbers of lightweight
entities that can have many components. Each component contains
some additional information. This design is used in video games to
decouple various high-level behaviors from one another by making them only
have to consider which entities have which components.

The type checker uses aspects of this entity-component architecture to allow
it to handle semantic information about the AST. Each AST node is associated
by the type checker with an entity; each entity has an AST component that
indicates which node it is attached to and what its parent is. The type
checker then attaches various additional components to these entities as it
type checks the program including what the inferred type of a given
sub-expression is, what methods a given call site could be dispatched to, and
what the errors at a given location were.

Using entities and components to capture this metadata allows very
straightforward use of the information by other parts of the type checker. One
example is that scope information is included as a component. Moreover, it
allows weak coupling between type checker components as analyses only need
to consider the components that are relevant to their task.

\paragraph{Method tables} As in the theory, the implementation performs type
checking and inference against some reference method table. My implementation
prepares a Julia instance with the package loaded to act as this table. In
turn, this Julia instance then is used to 
\begin{itemize}
	\item resolve subtyping queries,
	\item find applicable methods,
	\item infer return types for a given invocation.
\end{itemize}
By using a real running Julia instance the method table used for type checking
is closely aligned to the method table used at runtime by most packages. Definitions
are dynamically reloaded when files are changed using \texttt{Revise.jl}.

Using an actual Julia instance to serve as the ``black box'' for type checking
closely aligns the type checker with Julia's runtime reasoning about types but has
drawbacks. In particular, Julia cannot dynamically reload redefinitions of the same
type and the instance must be reinitialized from scratch each time a type definition
is changed. 

\paragraph{Inference} Again as in the theory I use type inference to determine
the return types for invocations. The implementation uses its Julia instance
to perform this inference, leveraging the same inferencer used for runtime
type specialization of methods. My type checker queries it with the statically
determined arguments for a given method to determine the inferred return type.

\begin{figure}
\centering
\includegraphics[scale=0.6]{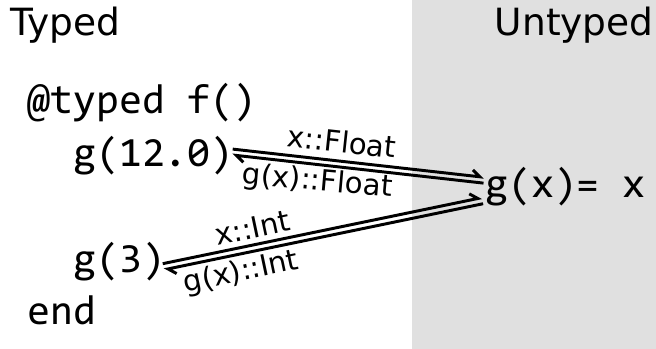}
\caption{Inference example}
\label{fig:inference}
\end{figure}

A consequence of using Julia's inference system to determine the return type for each
invocation site is that different typed calls to the same untyped method may
result in different return types. \figref{fig:inference} illustrates a simple
example. Here, function \jlinl{f} calls the untyped identity function
\jlinl{g} with the floating point number \jlinl{12.0} and the integer
\jlinl{3}. The inferred return type for \jlinl{g} at the first call site is
\jlinl{Float64} while at the latter it is \jlinl{Int} even though the same
function is being called.

Rerunning inference for each call site has several benefits but also
tradeoffs. When used with concrete types the inference returns very precise
types and aligns closely with the dynamic behavior of Julia's optimizer.
However, the inferencer struggles with abstract types. For example, the
inferred return type for \jlinl{+} when applied to two \jlinl{Numbers} is
\jlinl{Any}. As I will see, imprecision of abstract inference can pose
challenges for typing library code.

\section{Evaluation}

The practice of a type system is an inherent part of its design; at several
times in my theoretical treatment I have made design decisions fundamentally
motivated by practical justifications. Moreover, my type system is a basic
framework from which more of Julia's features may be covered, but I do not
know just how comprehensive my treatment of Julia is.

I wanted to answer three questions about the type checker:
\begin{itemize}
	\item What is the right choice of method table? How many programs might use \eval to extend their method table after they are first loaded and thereby require return type checks?
	\item How common are signatures in actual code and are they completely implemented?
	\item What mutations are needed for existing Julia programs to be type checkable?
\end{itemize}

\subsection{Picking a method table}

As mentioned in the theoretical examination of the type system it is important
to carefully chose the method table \MT that I type check against. If the
table \MT is the one actually used at runtime then the type system is
``free'': no dynamic checks are needed. If there are additional methods added
after the fact then this guarantee goes out the window (though practically
only for functions that \emph{actually} have new methods added as a result of
JIT compilation).

As part of my earlier work on world age and method tables in
Julia~\cite{belyakova2020world} I conducted an emperical evaluation of uses of
\eval in the wild. I will summarize the relevant results here.

I wanted to examine how frequently libraries either modify the method table
themselves after initialization or support user programs that do.
Self-modifications of the method table after initialization would need to use
\eval somewhere within a method. If a library wishes to support a user program
that adds new methods after initialization then it must call those methods
using either \eval or the \invokelatest function from inside a method.
Therefore, I statically analyzed a corpus to examine how many packages use
\eval or \invokelatest from withing methods.

My corpus consists of all 4,011 registered Julia packages as of August 2020.
The results of statically analyzing the code base are shown in
\figref{fig:wapkgusage}. The analysis shows that 2,846 of the 4,011 packages
used neither \eval nor \invokelatest, and thus are definitely age agnostic. Of
the remaining packages, 1,094 used \eval only, and so \emph{could} be
impacted. 15 packages used \invokelatest only, and some 56 used both. I can
reasonably presume that at least these latter 71 packages are impacted by
world age because they bypass it using \invokelatest.

\begin{figure}
  \centering
  \subfloat[\eval and \invokelatest use by package]{
  	\label{fig:wapkgusage}
	  \includegraphics[scale=0.5]{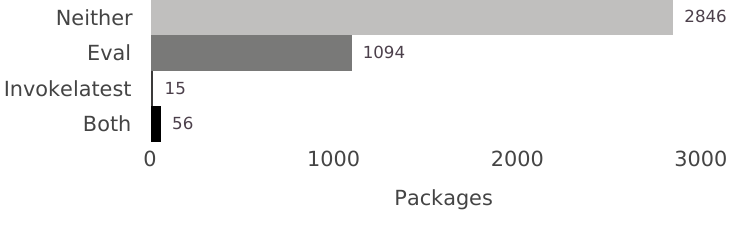}
	  \vspace{-3mm}
  }
  \hfill
  \subfloat[Static use of AST forms in all packages]{
  	\label{fig:all_static_ast}
	  \includegraphics[scale=0.5]{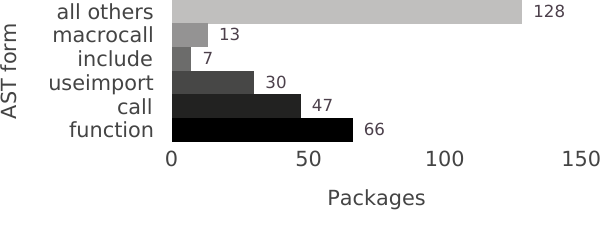}
	  \vspace{-3mm}
  }
  \caption{Dynamic code generation usage metrics}
\end{figure}

To understand if packages that only use \eval dynamically define or use
dynamically defined methods, I statically analyzed the location of calls to
\eval and their arguments by parsing files that contain \eval. For each call,
I  classified the argument ASTs, recursively traversing them and counting
occurrences of relevant nodes. The analysis is conservative: it assumes that
an AST that is not statically obvious (such as a variable) could contain
anything.
\figref{fig:all_static_ast} 
shows how many packages use method table relevant AST forms. Only uses of
\eval from within methods---where the method table could be relevant---are shown;
top-level uses of \eval---which cannot be affected by world age---are filtered out.
The ``all-others'' category encompasses all AST forms not relevant to world age.
While this aggregate is, taken as a whole, more common than any other single
AST form, none of the constituent AST forms is more prevalent than
function calls. Therefore, most common arguments to \eval are function definitions,
followed by function calls and loading of modules and other files.

Using the results of the static analysis, I estimate that about 4--9\% of the
4,011 packages might be affected by world age. The upper bound (360 packages)
is a conservative estimate, which includes 289 packages with potentially
method table-related calls to \eval but without calls to \invokelatest, and
the 71 packages that use \invokelatest. The lower bound (186 packages)
includes 115 \invokelatest-free packages that call \eval with both function
definitions and function calls, and the 71 packages with \invokelatest.

\cbstart
This analysis only considers usages of \eval that are not at the top level---
that is, usages that are within function bodies or similar constructs. Using \eval
at the top level is a common practice in Julia packages to generate boilerplate
definitions. From the perspective of world age, these usages of \eval are identical 
to writing these same definitions explicitly and are visible in the ``as-imported''
method table used by the type checker. The analysis is conservative, however, in
that it considers \eval used in \emph{any} non-top-level context as being
``not-top-level.'' Some of these methods may be helpers that are only used from
the top level at import-time to define methods that are then visible in the as-imported
method table; the analysis will consider these usages of \eval to be not top-level
and it thus overapproximates the number of packages that use \eval below the top level.
\cbend

To validate my static results, I dynamically analyzed 32 packages out of
the~186 identified as possibly affected by world age. These packages were
selected by randomly sampling 49 packages and then removing packages that did
not run, whose tests failed, or that did not call \eval or \invokelatest at
least once during testing. Over this corpus, the dynamic analysis was
implemented by adding instrumentation to record calls to \eval and
\invokelatest, recording the ASTs and functions, respectively, as well as the
stack traces for each invocation.

\begin{figure}
  \centering
  \subfloat[Static use of AST forms used by package]{
  	\label{fig:stasts}
  	\includegraphics[scale=0.5]{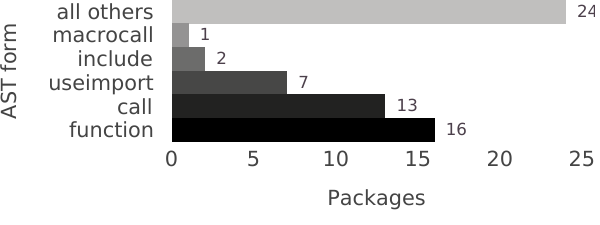}
	\vspace{-3mm}
  }
  \hfill
  \subfloat[Dynamic use of AST forms used by package]{
  	\label{fig:dynasts}
	  \includegraphics[scale=0.5]{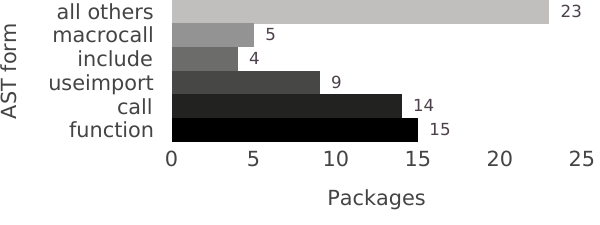}
	  \vspace{-3mm}
  }
  \caption{AST form usage within \eval}
\end{figure}

The results of the static and dynamic analysis of the 32 packages
are given in \figref{fig:stasts} and \figref{fig:dynasts}, respectively.
Both analysis methods agreed that the
most common method table-relevant use of \eval was to define functions, followed
by making function calls and importing other packages. In general, the dynamic
analysis was able to identify more packages that used each AST form, as it can
examine every AST ran through \eval, not only statically declared ones. However,
this accuracy is dependent on test coverage.

As a result of this analysis, I conclude that the method table that results
from evaluating but not using a given package is the same method table that
is used at runtime for between 91\% and 96\% of packages.

\subsection{Protocols}

\vspace{0.1em}\noindent\begin{minipage}{\textwidth}
To evaluate the prevalence of protocols in Julia code I performed a small corpus
analysis of Julia packages to identify how many protocols they defined and what
patterns occurred within those protocol definitions.

I analyzed  the base library as well as the top 10 most-starred Julia
packages on Github as well as their first-order dependencies. I selected this
corpus as many of these packages provide abstraction layers over other systems
(such as JuMP.jl, a numerical optimization abstraction library, or
DifferentialEquations.jl, an abstraction library over numerical integrators).
Moreover, the popularity of these packages indicates that they are widely used
and thus that their protocols are important for the broader Julia ecosystem.
The full list of root packages is provided in \figref{fig:corpuspkgs}; their
first-order dependencies produce a total of 200 corpus packages.

\begin{wrapfigure}{r}{0.45\textwidth}
\begin{tabular}{cc}
DuckDB & Pluto \\
Flux & IJulia \\
DifferentialEquations & Genie \\
JuMP & MakieCore \\
Gadfly & Turing
\end{tabular}
\caption{Corpus packages}
\label{fig:corpuspkgs}
\end{wrapfigure}
The lack of consistent machine-checkable protocol
specifications means that there is no source of truth to check against; the
protocols must be discovered from the source code alone. Moreover, there may
be no use sites for a given protocol in a given package if the protocol is
only intended for external consumption. As a result, the protocol analysis I
used for the corpus exploration is differently constructed than the one that
the type system utilizes.
\end{minipage}

In order to be able to identify protocols from definitions I adopt a
simplified, weaker, version of protocols versus the statically checked one.
The protocol checker requires that there exists an implementation for every
instantiation of the protocol argument typing that returns a value of the
correct type. However, checking this is impossible given definitions alone
for no protocol argument or return types are available. Instead, my definition
of a protocol for the purposes of corpus analysis is a set of methods that:
\begin{itemize}
	\item have the same name,
	\item specialize on every subtype (tag, variables are not considered) of some abstract type $A$ in a consistent position $i$,
	\item is not implemented for $A$ in position $i$ or a supertype thereof
\end{itemize}
For example, 
\begin{lstlisting}
abstract type A end
struct B <: A end; f(::B) = 1
struct C <: A end; f(::C) = "hi"
\end{lstlisting}
satisfies this definition as there is a set of methods with name \jlinl{f} 
that take every subtype of \jlinl{A} despite the return types not being
consistent. Note that the analysis does not consider varargs functions.

The last component of the definition excludes methods that are implemented
abstractly. If I were to define an implementation of \jlinl{f} that takes any
\jlinl{A} like
\begin{lstlisting}
f(::A) = 9
\end{lstlisting}
then that implementation is \emph{not} a protocol; it is only a single
implementation that can handle multiple cases. As a result, while concrete
specializations might exist (as they would in the example of \jlinl{f})
I exclude these cases.

\begin{figure}
\includegraphics[scale=0.6]{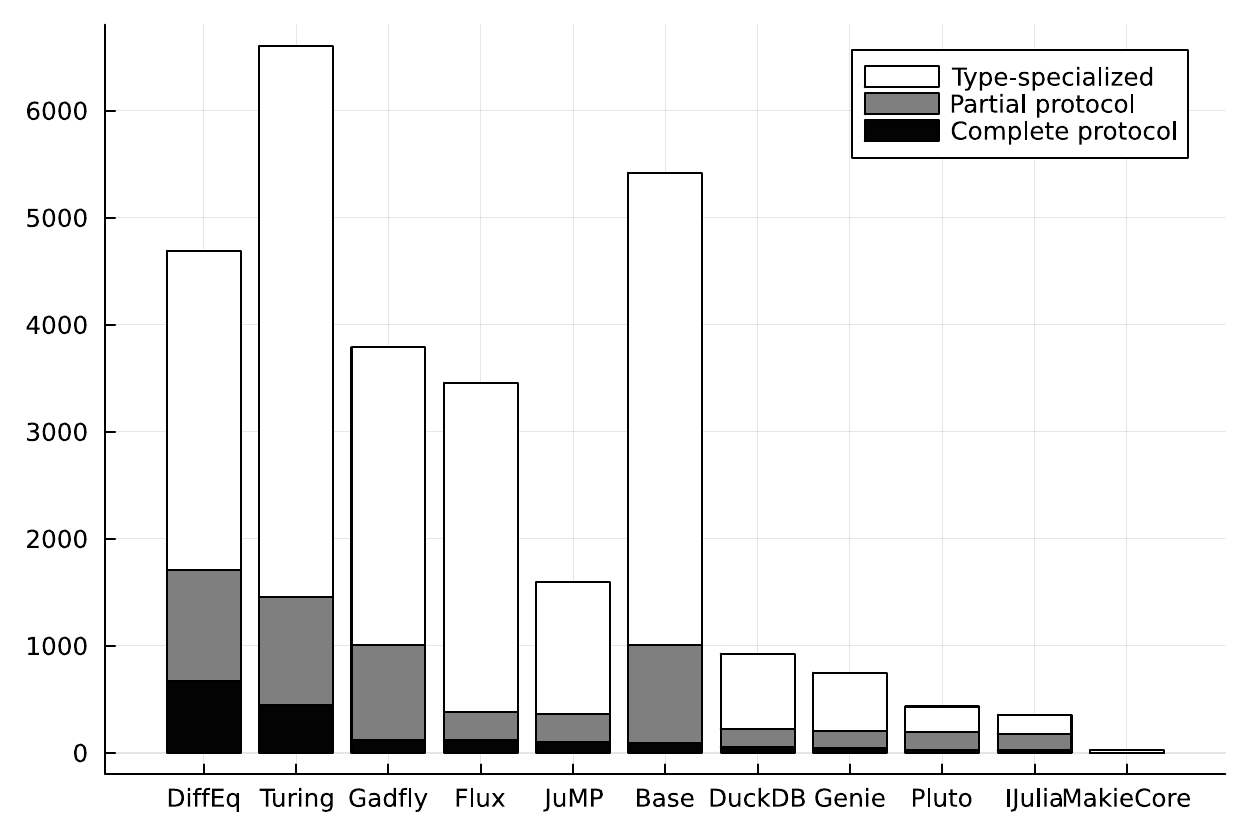}
\caption{Protocols defined by package}
\label{fig:protosbypackage}
\end{figure}

Figure~\ref{fig:protosbypackage} shows the number of protocols defined by each
package.  Sets of definitions that satisfy all three of my properties are
complete protocols. A set of definitions that have a implementation in the
supertype or are missing some implementations are considered partial
protocols. Finally, any two methods with the same name that specialize on
subtypes of the same concrete type are considered ``type-specialized.''

I can clearly see that many protocols are defined by these large packages and
their dependencies. One example is \texttt{DiferentialEquations.jl}
(abbreviated as DiffEq) which defines 200 strict protocols. However, this is
dwarfed by the number of partial protocols and type specializations detected
by the analysis. There may be many more protocols \emph{intended} but that are
not properly implemented.

A simple example of an incomplete protocol implementation can be seen in the
protocol defined by the MathOptInterface library for the
\jlinl{AbstractFunction} type. The documentation says that every
\jlinl{AbstractFunction} must implement \jlinl{constant(::AbstractFunction)}
that returns the constant component of the function, and \emph{nearly} all
implementations do.  The protocol is not fully implemented, however, as a type
exists for which there is no matching implementation of \jlinl{constant}:
\jlinl{MathOptInterface.FileFormats.MOF.Nonlinear}. Thus, the checker
categorizes \jlinl{constant} as a partial protocol since there are concrete
subtypes of \jlinl{AbstractFunction} for which no implementation exists.

Several takeaways can be derived from this observation. In this light the
large number of complete protocols is notable; it reinforces the concept that
protocols are important for Julia programmers and that they work to maintain
them. At the same time, these results emphasize the need for checked protocol
declarations due to the number of intended but faulty protocols. Thus, the
ability to statically enforce protocol adherence may be useful for Julia
package developers and the concept of protocols that I describe here is
a common use case for many packages. 

\subsection{Case Study}

To evaluate how practical the current state of the type system is I considered two case studies:
\begin{itemize}
	\item A obstacle-avoidance trajectory optimizer for quadcopters based on a numerical optimization algorithm, representing ``user'' code.
	\item Julia's \texttt{math.jl} that implements several basic math operations (such as logarithms), representing ``library'' code.
\end{itemize}
In total the two case studies comprise around 2,500 lines of code.

\paragraph{User code} One Julia use case is to write concrete analyses, where
specific values and types are known. To evaluate the utility of the type
checker on such code, I considered an implementation of the PIPG
algorithm~\cite{yu2020proportional}. This program determines the flight path
of a quadcopter that avoids two obstacles in its path using numerical
optimization. I aim to statically type all of the user code; the libraries
that the code uses will remain untyped. 

The trajectory optimization routine depends on several existing libraries:
\begin{itemize}
	\item LinearAlgebra, part of Julia's standard library
	\item StaticArrays, which provides statically-typed sized matrices and vectors
	\item JuMP, an abstraction library over various numerical solvers
	\item Plots, to visualize the output trajectories
\end{itemize}
Invocations into these untyped libraries are resolved as described earlier using inference.

Type checking the program required both adding argument and return type annotations
and removing metaprogramming. I added argument type annotations to all methods
based either on the comments for each function or based on their usage in the
code; I added return type annotations only to those methods that returned
meaningful values. 

Metaprogramming is a larger challenge for the practical type checker. The
optimizer relies on the JuMP abstraction layer, as mentioned, which introduces
a problem description EDSL implemented as macros. A trivial example is
\begin{lstlisting}
@variable(model,x[1:nx,1:N])
\end{lstlisting}
which binds \jlinl{x} to be an array of variables of size \jlinl{nx, N} inside the \jlinl{model}.
I replace introduction forms such as \jlinl{@variable} with binding forms that introduce
variables of the correct types. In the case of \jlinl{x}, the above macro invocation is replaced
with
\begin{lstlisting}
x = nothing::Matrix{JuMP.VariableRef}
\end{lstlisting}
While the type assertion will fail at runtime, the cast to a matrix of \jlinl{JuMP.VariableRef}s
causes \jlinl{x} to be correctly statically typed.

The results of modifying and type checking the program are shown in the first column of table~\ref{tab:typability}.

\begin{table}
\begin{tabular}{|c|c|c|c|c|c|}
\hline \multirow{2}{*}{Program} & \multicolumn{2}{c|}{Functions} & \multicolumn{2}{c|}{Argument annotations} & \multirow{2}{*}{Return types} \\
\cline{2-5} & Typed & Untypable & Concrete & Abstract & \\ \hline
PIPG & 43 & 2 & 106 & 7 & 3 \\ \hline
\texttt{math.jl} & 21 & 59 & 63 & 83 & 69 \\ \hline
Concrete \texttt{math.jl} & 77 & 3 & 134 & 12 & 69 \\ \hline
\end{tabular}
\caption{Typability of methods in case study programs}
\label{tab:typability}
\end{table}

Most functions in this program are typable; only two functions must remain
untyped. The majority of fuctions work by mutating their arguments; only three
functions actually returned values and thus needed return type annotations.

Of the added type annotations almost all were concrete. The only exceptions
were the methods \jlinl{construct_G} and \jlinl{construct_H} which initialize
new problem definition matrices.

The function \jlinl{construct_G} is untypable since the line
\begin{lstlisting}
cat([cat(value.(u)[:,t],value.(x)[:,t+1],dims=1) for t=1:N-1]...,dims=1)
\end{lstlisting}
is rejected as the type checker cannot determine the length of the
array being splatted into \jlinl{cat}. I need to know the length
of the arguments in order to statically determine which method might
be called. With such a variable-length argument the type checker cannot determine a precise return type. 

The second untyped method, \jlinl{construct_H}, is untypable because of the
array index access \jlinl{B[1]}. The root of the problem is that the type of
\jlinl{B} is the abstract type \jlinl{Vector\{<:SMatrix\{M, N, Float64\} where
\{M, N\}\}}, or a vector of statically-sized matrices of currently indeterminate size
whose elements are all \jlinl{Float64}s. Accessing \jlinl{B[1]} should
clearly return a value that is some instance of \jlinl{SMatrix\{M,
N, Float64\} where \{M, N\}}. The Julia type inferencer cannot determine
this fact and instead indicates that the return type is \jlinl{Any}. This
imprecision then causes all downstream operations to fail.

The array access \jlinl{B[1]} going wrong due to imprecise inference is
illustrative of the limitations of the type checker. So long as the program is
concretely typed (that is, works over Julia tag types) programs can frequently
be type checked without needing to modify their implementation even if they use
untyped functionality.

Libraries are more challenging than user code, however. Library code is supposed
to be generic; it cannot simply say that every argument is concretely typed. I will
next examine how readily a small library can be type checked.

\paragraph{Libraries} The \texttt{math.jl} file was the next target, consisting of 
Julia's implementation of several basic math functions. As an example, one function
provided by \texttt{math.jl} is 
\begin{lstlisting}
function clamp!(x::AbstractArray, lo, hi)
    for i in eachindex(x)
        x[i] = clamp(x[i], lo, hi)
    end
    x
end
\end{lstlisting}

Here, I have an implementation of the \jlinl{clamp} function generalized to
clamp each element of an array to be between \jlinl{lo} and \jlinl{hi}. Here
I see the main challenge posed when typing library functions: generic or
otherwise underspecified arguments.

Abstraction and generic arguments are much more common in libraries compared
to user code. The quadcopter knew that the trajectory being optimized was
represented as an array of  \jlinl{Float64} values. In contrast, this
implementation of \jlinl{clamp!} knows nothing about what \jlinl{lo} or
\jlinl{hi} might be at runtime; all it knows is that they are subtypes of
\jlinl{Any} which is not helpful. The only property guaranteed by the original
type annotations is that \jlinl{x} is an \jlinl{AbstractArray}. Adding more
precise types is then difficult without knowing what code uses this function.

A reasonable typing of this \jlinl{clamp!} function might be
\begin{lstlisting}
clamp!(x::AbstractArray{T}, lo::T, hi::T)::T where T<:Number
\end{lstlisting}
I specify here that there must be a single concrete subtype of \jlinl{Number}
such that \jlinl{x} is some sort of array of numbers and that both
\jlinl{lo} and \jlinl{hi} are also instances of said number.

Making a best-effort attempt to type the methods produces the results seen in
the second column of table~\ref{tab:typability}.  The input program was
already heavily annotated. The majority of added types were return types or
specializations of an existing type annotation which are not counted. However,
in spite of these annotations, few definitions were statically typeable. The
most common reason for a method failing to typecheck was some form of
unhandled abstraction.

An example of such abstractions can be seen in the earlier \jlinl{clamp!}
example. Consider, for a moment, what is the type of \jlinl{i}? It should
clearly be the type of the elements produced from \jlinl{eachindex}, but the
documentation says that
\begin{quote}
For array types that have opted into fast linear indexing (like Array), this
is simply the range 1:length(A). For other array types, return a specialized
Cartesian range to efficiently index into the array with indices specified for
every dimension. For other iterables, including strings and dictionaries,
return an iterator object supporting arbitrary index types (e.g. unevenly
spaced or non-integer indices).
\end{quote}
As a result, I do not really know what \jlinl{i} actually is. The only type
I can reasonably give \jlinl{i} is \jlinl{Any}. This then sets the stage for
ensuing chaos with the next problem: \jlinl{x[i]}.

In Julia, the syntax \jlinl{x[i]} entails the multimethod call \jlinl{getindex(x, i)}.
In this case, I know that \jlinl{x} is an \jlinl{AbstractArray{T}}; I only know that
\jlinl{i} is an instance of \jlinl{Any}. Referring to the Julia documentation for
the \jlinl{AbstractArray} set of protocols, I find that there only needs to be
either a method \jlinl{getindex(A, i::Int)} or a method \jlinl{getindex(A, I:Vararg{Int, N})} defined for each \jlinl{AbstractArray}. Therefore,
with our annoyingly-\jlinl{Any} typed \jlinl{i} I cannot invoke any of them. 

The underlying problem is twofold; one an artifact of the implementation and
the other a product of Julia's scale. First, on the implementation end Julia
type inference is optimized to work on concrete types and while it \emph{can}
work on abstract types, sometimes, it tends towards imprecision. Moreover,
Julia's type inference system struggles to deal with type variables, as seen
earlier with the \jlinl{StaticArrays} example, to the point that it simply
cannot infer a return type in some cases.

Even if I were to implement a new type inference algorithm that could better
handle generically-typed code I run into another problem: representation of
abstraction. Protocols, as described earlier, capture the case when there is
an implementation of a function that can handle every concrete instantiation
of the protocol type. However, as seen in the example of \jlinl{eachindex},
real Julia code also has much more complicated abstractions then can be
represented with protocols alone. 

In order to type \jlinl{eachindex} effectively I need \jlinl{AbstractArray}
to specify its iterator type. I could then treat this as a generic type
dependent on the value of \jlinl{x} and ensure that operations on this type
were safe. This approach, effectively a multimethod version of path dependent
types~\cite{amin2014foundations}, would allow this to type check. Similarly, we
could introduce a new existential type variable to \jlinl{AbstractArray} that
encodes the type of the iterator. However, both approaches are substantial
additions to Julia and would break backwards compatibility.

The practically encountered abstractions are thus much more complex than the
simple protocols that my system can check. Moreover, the documentation about
what protocols are necessary to implement for an new instance of a type to
work \emph{correctly} is frequently lacking. Even in this example---a tiny
mathematical function included as part of Julia's own core library---the
documented protocols were insufficient for it to type check correctly.

Representation of abstraction is clearly a key problem for type checking
Julia. Does \texttt{math.jl} pose problems for the type checker besides its
use of abstraction, however? To answer this question, I re-annotated
\texttt{math.jl} with concrete types. For example, the annotations for
\jlinl{clamp!} are now:
\begin{lstlisting}
clamp!(x::Vector{Float64}, lo::Float64, hi::Float64)
\end{lstlisting}

No additional annotations were added to the generically-annotated version; the
only change was that existing generic annotations were replaced with concrete
instantiations of each. As seen in the third column of
table~\ref{tab:typability}, I can see that the same code when given concrete
types generally type checks. Only three methods were then left untyped:
\begin{itemize}
	\item \jlinl{literal_pow}, which dispatches on specific concrete values
	in the type and cannot be practically statically analyzed.
	\item \jlinl{hypot}, which splats arguments for a function call
	and thus the invocation target cannot be resolved.
	\item \jlinl{_hypot} (a helper for \jlinl{hypot}) that uses a
	first class function and is thus not supported.
\end{itemize}

The type checker can thus type almost all operations used in \texttt{math.jl},
but only when concretely typed.

\chapter{Conclusion}

In this dissertation I have described a gradual type system for Julia. My
approach can provide a strong soundness guarantee for typed code and requires
no additional dynamic checks in typed or untyped code so long as no new
methods are added with \eval. 

Providing this strong guarantee is a product of both multiple dispatch and
Julia's design. Multiple dispatch effectively answers one of the key questions
of gradual typing, how to establish type membership, for me. With multiple
dispatch a method will only get called with values that are members of its
argument types. Moreover, Julia's emphasis on type inference allows me to
infer return types for almost any function. These systems then provide
assurance about typed arguments and the result of calling untyped methods from
within a typed context thereby allowing the elimination of runtime checks.

While Julia's design facilitates this core of a type system, it is missing
the primitives needed for typed abstraction. Julia provides no mechanism for
developers to declare common functionality between types, related or otherwise.
As a result, developers frequently create bugs when types do not implement or 
improperly implement some expected functionality.

Julia's subtyping was also a considerable challenge. Julia's extensive usage
of types is both blessing and curse for static analysis for it also begets a complex
subtyping relationship.  I showed that subtyping was undecidable, which is not
a crippling blow in practice, but suggests that firm theoretical results about
subtyping may be hard to come by. My theory treats subtyping (and the related
protocol completeness problem) as a ``black box,'' with the implementation
using Julia's own subtyping system and a naive completeness checker.

Finally, I described a protocol system that solves part of the abstraction
problem. Protocols capture the case where a method exists for every concrete
instantiation of some signature. Protocol declarations then statically enforce
the existence of suitable methods and can be used by the type checker to let
the user call a function they may not otherwise be able to. I additionally
showed a basic algorithm for deciding whether a protocol has been implemented
or not.

\section{Future Work}

My type system for Julia is a foundation. As seen in the case studies it is
very applicable to programs where concrete types are known but struggles with
abstractly typed code. Abstractly typed code stands to benefit the most from
static checking, however. The clearest parts of future work are thus in
building on top of this foundational type system to support a larger set of
the abstractions that have grown up in the Julia community. One simple example is
to support a larger swathe of Julia's type language for use in protocol declarations.

\paragraph{Protocols.}

The protocol system that I describe is able to capture some of the protocols
in Julia code but has several major limitations. In particular, it cannot
handle generic types and type variables. The underlying issue is the same as
was seen with subtyping: dealing with bounded type variables is hard. A
protocol checker for Julia is trivially going to be undecidable; it is easy to
see that a protocol checking procedure could be used to decide subtyping (at
least under the semantic definition) and thus the same proof applies.

Accepting this undecidability one approach would be to apply one of several
more sophisticated pattern matching completeness algorithms. In particular,
Lower Your Guards~\cite{graf2020lower} describes a compositional completeness
checker that may be able to handle the more general Julia type language.
However, several challenges (particularly the number of constructors for some
types such as \jlinl{Any}) make its application to protocol checking not
trivial. 

Protocols lack generality in one key way, however: they only apply to
subtypes of one specific abstract type vector. Frequently users wish
to have some common behavior that is shared between otherwise unrelated types.
Protocols cannot capture this requirement. Instead, a trait system fits this
need better.

\paragraph{Traits.} 

The Julia community has been interested in the concept of traits for some
time. Traits capture some shared behavior that exists outside of the type
hierarchy~\cite{scharli2003traits} allowing the programmer to write code
that is even more general than the nominal type hierarchy would allow.
Julia developers have came up with several ad-hoc ways of defining and using
traits (usually by having some method that exists and returns a sentinel value
for a type that supports the trait), but none are statically checked.

Designing and implementing a trait system for Julia would further the ability
to type abstract code. One problem, for example, is that the seemingly-simple
definition of the \jlinl{+} operator has been of some contention over time
leading to there being no consistent implementation for all \jlinl{Numbers};
in effect, the practice of \jlinl{+} is that ``you know it when you see it.''
Being able to declare, enforce, and use traits would substantially simplify
this problem.

The primary challenge in developing a trait system in Julia is integrating it
into multiple dispatch. Users would like to be able to say that an argument is
an instance of a trait, which is easy enough. More challenging are cases such
as ``this array holds only trait implementations'' or ``this type member is
inhabited with subtypes of this abstract trait.'' A good implementation of
traits needs to be efficient, roughly match the expectations of existing
programmers, and support a large subset of Julia's type language.

\paragraph{First-class functions.}

This work does not type check first-class functions for Julia; its support for
higher-order functions like \jlinl{map} is wholly through special-casing and
hard coding. Introducing first-class functions (and a suitably specific type)
makes the key problem of gradual typing much harder in a multiple dispatch
setting. Consider, for a moment, that you have
\begin{lstlisting}
f(x::Function{String, Int})::Int = x("hello") + 2
f(x::Function{String, String})::String = trim(x("world"))
\end{lstlisting}
and then you call it with \jlinl{f(x -> x)}. How do you decide which implementation
of \jlinl{f} should be called without having to do deep analysis of the lambda?

The multiple dispatch setting makes many gradual typing approaches
impractical. The behavioral semantics, for example, would tell us that we
should wrap \jlinl{x} in a proxy object that enforces the type I expect on
it. That is great, but does not tell us \emph{which} of these implementations
I should call in the first place. Other gradual type systems have analogous problems
as they rely on knowing what type the higher-order functionality \emph{should be}
when the dispatch system needs to know what it \emph{is}.

One potential solution would be to only dispatch on typed function. This
approach is straightforward and would be expressible within Julia's existing
type language. Requiring typed lambdas would, however, prevent untyped code
from being able to invoke specific typed implementations.

\appendix
\cleardoublepage
\cleardoublepage
\bibliography{thesis,bibs/st-dec,bibs/gc,bibs/jlov,bibs/proposal}

\end{document}